\documentclass[journal,onecolumn,12pt, draftclsnofoot]{IEEEtran}
\IEEEoverridecommandlockouts
\usepackage{cite}
\usepackage{amssymb,amsfonts}
\usepackage{graphicx}
 
\usepackage{textcomp}
\usepackage[tbtags]{amsmath}
\usepackage{diagbox}
\usepackage{setspace}
\usepackage{booktabs}
\usepackage{multirow}
\usepackage{hyperref}
\usepackage{amsthm} 
\usepackage{algorithm}
\usepackage{algpseudocode}
\usepackage{color}
\alglanguage{pseudocode}
\algblockdefx{Process}{EndProcess}[1]{{\bf process} #1 }{{\bf end process}}
\algblockdefx{Parallel}{EndParallel}[1]{{\bf begin parallel processing} #1 }{{\bf end parallel processing}}
\algblockdefx{FP}{EndFP}[1]%
{\textbf{for all} #1 \textbf{do in parallel}}%
{\textbf{end for}}

\usepackage{accents}
\newcommand{\ubar}[1]{\underaccent{\bar}{#1}}

\usepackage{tikz,bm,adjustbox}
\usepackage[flushleft]{threeparttable}
\usepackage{soul}
\usetikzlibrary{calc}
\usetikzlibrary{decorations.pathmorphing}

\makeatletter

\newcommand{\defhighlighter}[3][]{%
	\tikzset{every highlighter/.style={fill opacity=0.3,line width=0.25em,opacity=0.3}}%
}

\defhighlighter{yellow}{.5}

\newcommand{\highlight@DoHighlight}{
	\fill [ decoration = {amplitude=1pt, segment length=15pt}
	, outer sep = -15pt, inner sep = 0pt, decorate
	, every highlighter, this highlighter ]
	($(begin highlight)+(0,8pt)$) rectangle($(end highlight)+(0,-3pt)$) ;
}

\newcommand{\highlight@BeginHighlight}{
	\coordinate (begin highlight) at (0,0.1) ;
}

\newcommand{\highlight@EndHighlight}{
	\coordinate (end highlight) at (0,-0.1) ;
}

\newdimen\highlight@previous
\newdimen\highlight@current

\DeclareRobustCommand*\highlight[1][]{%
	\tikzset{this highlighter/.style={#1}}%
	\SOUL@setup
	\def\SOUL@preamble{%
		\begin{tikzpicture}[overlay, remember picture]
		\highlight@BeginHighlight
		\highlight@EndHighlight
		\end{tikzpicture}%
	}%
	\def\SOUL@postamble{%
		\begin{tikzpicture}[overlay, remember picture]
		\highlight@EndHighlight
		\highlight@DoHighlight
		\end{tikzpicture}%
	}%
	\def\SOUL@everyhyphen{%
		\discretionary{%
			\SOUL@setkern\SOUL@hyphkern
			\SOUL@sethyphenchar
			\tikz[overlay, remember picture] \highlight@EndHighlight ;%
		}{%
		}{%
			\SOUL@setkern\SOUL@charkern
		}%
	}%
	\def\SOUL@everyexhyphen##1{%
		\SOUL@setkern\SOUL@hyphkern
		\hbox{##1}%
		\discretionary{%
			\tikz[overlay, remember picture] \highlight@EndHighlight ;%
		}{%
		}{%
			\SOUL@setkern\SOUL@charkern
		}%
	}%
	\def\SOUL@everysyllable{%
		\begin{tikzpicture}[overlay, remember picture]
		\path let \p0 = (begin highlight), \p1 = (0,0) in \pgfextra
		\global\highlight@previous=\y0
		\global\highlight@current =\y1
		\endpgfextra (0,0) ;
		\ifdim\highlight@current < \highlight@previous
		\highlight@DoHighlight
		\highlight@BeginHighlight
		\fi
		\end{tikzpicture}%
		\the\SOUL@syllable
		\tikz[overlay, remember picture] \highlight@EndHighlight ;%
	}%
	\SOUL@
}
\makeatother
\theoremstyle{Remark}
\theoremstyle{Example}

\newtheorem{Theorem}{Theorem}
\newtheorem{Example}{Example}
\newtheorem{Corollary}{Corollary}
\newtheorem{Lemma}{Lemma}
\newtheorem{Remark}{Remark}
\theoremstyle{definition}

\def\BibTeX{{\rm B\kern-.05em{\sc i\kern-.025em b}\kern-.08em
		T\kern-.1667em\lower.7ex\hbox{E}\kern-.125emX}}

\begin{document}
 	  
	\title{Coded Caching for  Broadcast Networks with User Cooperation}
	\author{ \IEEEauthorblockN{Jiahui Chen, Xiaowen You, Youlong Wu, and Shuai Ma}
  \thanks{This paper was in part presented at the \emph{IEEE Information Theory Workshop}, Visby, Gotland, Sweden,  2019   and  at \emph{57th Annual Allerton Conference on Communication, Control, and Computing (Allerton)}, Monticello, IL, USA, 2019.

}

}

%

	\maketitle
	
	\begin{abstract}
		In this paper, we investigate the transmission delay of cache-aided broadcast networks with user cooperation. Novel coded caching schemes are proposed for both centralized and decentralized caching settings, by efficiently exploiting time and cache resources and creating parallel data delivery at the server and users. We derive a lower bound on the transmission delay and show that the proposed centralized coded caching scheme is \emph{order-optimal} in the sense that it achieves a constant multiplicative gap within the lower bound. Our decentralized coded caching scheme is also order-optimal when each user's cache size is larger than the threshold $N(1-\sqrt[{K-1}]{ {1}/{(K+1)}})$ (approaching 0 as $K\to \infty$), where $K$ is the total number of users and $N$ is the size of file library. Moreover, for both the centralized and decentralized caching settings, our schemes obtain an additional \emph{cooperation gain} offered by user cooperation and an additional \emph{parallel gain} offered by the parallel transmission among the server and users. It is shown that in order to reduce the transmission delay, the number of users parallelly sending signals should be appropriately chosen according to user's cache size, and alway letting more users parallelly send information could cause high transmission delay.  	\end{abstract}
	
	\begin{IEEEkeywords}
		Coded cache, cooperation, transmission delay
	\end{IEEEkeywords}
	
	\section{Introduction}
	
	 Caching is a promising approach that can significantly reduce traffic load in a communication network by shifting the network traffic to the low congestion periods.   
	Recently, in the seminal paper \cite{Centralized} Maddah-Ali and Niesen considered a cache-aided broadcast network where a server connects with multiple users with a shared link, and proposed  a \emph{centralized} coded caching scheme    based on   centralized file placement and coded multicast delivery.  Compared to   the conventional caching scheme, the coded caching scheme achieves a significantly larger global multicast gain. Following the similar   idea, they extended the scheme to   the decentralized file placement where no coordination is required for the file placement,  referred to as \emph{decentralized} coded caching scheme\cite{Decentralized}.   
	
	These  coded  caching schemes have attracted wide and significant interests. For the   same cache-aided broadcast network,   \cite{ImprovedGap2} showed that the rate-memory tradeoff of the above caching system is  within a factor of 2.00884 for both the peak rate and the average rate.  
	For the setting with   uncoded file placement  where each user stores  uncoded content from the library, \cite{Wan'ITW,Yu'IT18} proved that the coded caching scheme is optimal.   In \cite{PDA}, both the placement and delivery phases of coded caching are depicted using a placement delivery array (PDA), and an upper bound for all possible regular PDAs was established.   	 In \cite{HeterogeneousCacheSizes}, the authors studied  a cached-aided network with heterogeneous setting where the users's cache memories are   unequal. More asymmetric network settings have been discussed, such as coded caching with heterogeneous user profiles \cite{cHeterogeneous'19},   with distinct sizes of files \cite{filesize}, with asymmetric cache sizes \cite{cHeterogeneous'17,HeterogeneousOptimization,DistinctCapacity} and with distinct link qualities \cite{AsymmetricCacheSize'19}. The settings with varying file popularities have been  discussed in \cite{nonuniform,Randomdemands,ArbitraryDistributions,Onlinecaching}. Coded caching that accounts for various heterogeneous aspects was studied in \cite{Daniel'20}.  Other work on coded caching  include, e.g.,  cache-aided noiseless multi-server network \cite{multi-server}, cache-aided wireless/noisy broadcast network \cite{Zhang,b11,Tandon1,Tandon2}, cache-aided relay networks \cite{hierarchical,hierarchical'19,NetworkwithRelay}, cache-aided interference management \cite{intermanage'17,StoreandLate'17},  coded-caching with random demands \cite{randomdemand},     caching in combination networks \cite{combinationnetwork}, coded caching under secrecy constraints \cite{Ravindrakumar'ISIT}, coded caching with reduced subpacketization \cite{Tang'IT,Cheng'IT21}, coded caching problem where each user requests multiple files \cite{Wan'IT21}, cache-aided broadcast network for correlated content \cite{Hassanzadeh'IT21} etc.

	{A different line of work is to study the cached-aided networks without   the presence of server, e.g., the device-to-device (D2D) cache-aided network.    
	In \cite{D2D}, the authors   investigated coded caching for    wireless D2D network \cite{D2D}, where users locates in a fixed mesh topology wireless D2D network.}  A D2D system with selfish users that do not participate in delivering the missing subfiles to all users was studied in \cite{ParticialD2D}.  Wang \emph{et al}. used the  PDA  to characterize cache-aided D2D wireless networks in \cite{PDAforD2D}.  In  \cite{Malak'18} the authors studied the spatial   D2D  networks in which the D2D user locations are modelled by a Poisson point process.    For heterogeneous cache-aided D2D networks where   users are equipped with cache memories of distinct sizes,  \cite{D2DArbitraryCacheSize} minimized the delivery load by optimizing over the  partition during the placement phase and the size and structure of D2D during the delivery phase. 	Highly-dense wireless network with device mobility was investigated in  \cite{Pedersen'19}. In fact, combining  the cache-aided  broadcast network  with the cache-aided D2D network  can potentially reduce the transmission latency. This hybrid network is   common   in many practical     distributed systems such as cloud  network \cite{Fog'Overiew}, where  a central cloud server broadcasts messages to multiple users through the cellular network, and meanwhile  users  communicate with each other through a fibre local area network (LAN).  Unfortunately, there is very few work  investigating this hybrid network.

	 In this paper, we study a hybrid cache-aided network where a server consisting of $N\in\mathbb{Z}^+$ files connects with $K\in\mathbb{Z}^+$ users and meanwhile the users can communicate with each other via a cooperation network.    Unlike the settings of \cite{D2D,Malak'18} in  which  each user can only communicate with its neighbouring users via spatial multiplexing,    we consider the cooperation network as either a shared link or a flexible routing network\cite{multi-server}. 	In particular, for the case of  the shared link, all users connect with each other via a shared link. In the flexible routing network, there exists a routing strategy adaptively  partitioning all users  into multiple groups, in each of which one user sends a data packet  to the remaining users in the corresponding   group.      Let $\alpha\in \mathbb{Z}$ be the   number of groups who send signals at the same time, then there are several  
  interesting questions arising for this hybrid cache-aided network: 1) How does  $\alpha$ affect the system performance; 2) What's the (approximately) optimal value of $\alpha$ in order to minimize the transmission latency; 3) How to allocate  communication loads between the server and users,  and to design the data placement and delivery strategies  to achieve the minimum   transmission latency. 	In this paper, we try to address these questions and  our  main contributions  is summarized   as follows:

		\begin{itemize}
		\item  We {propose} coded caching schemes with user cooperation for the centralized setting and decentralized setting, respectively.  	 Both schemes efficiently exploit user cooperation and  allocate communication loads between the server and users. It is shown that our schemes can achieve much smaller transmission delay compared to the scheme without user cooperation \cite{D2D} and the scheme without server transmission \cite{Centralized,Decentralized}. We characterize  a \emph{cooperation gain}  a \emph{parallel gain} achieved by our schemes, where the cooperation gain is obtained   through  cooperation transmission between   users and the   {parallel gain} is obtained through the parallel transmission between the server and multiple users.

			
			\item  A lower bound on the transmission delay is established. With the proposed lower bound, we show that the centralized scheme achieves the optimal transmission delay within a constant multiplicative gap in all regimes, and the decentralized scheme approaches the information theoretic lower bound with a constant factor when the cache size of each user $M$ is larger than   the threshold $N(1-\sqrt[{K-1}]{ {1}/{(K+1)}})$ that is approaching to 0 as $K\to\infty$. 
			
				\item   In the  centralized  caching case, our scheme showes that $\alpha$ should decrease with the increase of the users' caching size. When users' caching size is sufficiently large, only one user should be allowed to send information, indicating that the cooperation network can be just a simple shared link  connecting all users.  	  {In the decentralized random caching case,  $\alpha$ is dynamically changing during the delivery phase, according to the varying sizes of subfiles created in the placement phase.} In other words,  alway letting more users parallelly send information can  cause high transmission delay. 
		\end{itemize}
		
	Note that in the scheme proposed in \cite{D2D}, due to the fixed locations of users  in the network topology, each user connects with a fixed set of users, and  users' cache sizes must be large enough to store all files in the library. While in our schemes, the users' group partition is dynamically changing, and each user can communicate with any set of  users through network routing. These  differences breaks the restriction of users' cache size $M$ and leads to  divergent    data delivery design.  Besides, our model  has the server  share  communication loads with the  users, resulting in an allocation problem on the communication loads between the server and users. Finally, our schemes   achieve   a trade-off between the cooperation gain, parallel gain and  multicast gain, while the schemes  in \cite{D2D,Centralized,Decentralized,D2D} only achieve  the multicast gain. 
		 		


	The remainder of this paper is as follows. Section \ref{Sec_Model} presents the system model, and defines the main problem studied in this paper. We summarize the obtained main results in Section \ref{Sec_Results}. Followed is the detailed description of the centralized coded caching scheme with user cooperation in Section  \ref{Sec_Schemes}. Section \ref{Sec_Schemes_Decen}
		extends the techniques
	we developed for the centralized caching problem to  the setting of decentralized random caching. Section \ref{Sec_Conclusion} concludes this paper.
	
	\begin{figure}
		\centering
		\includegraphics[width=0.59\textwidth,scale=0.2]{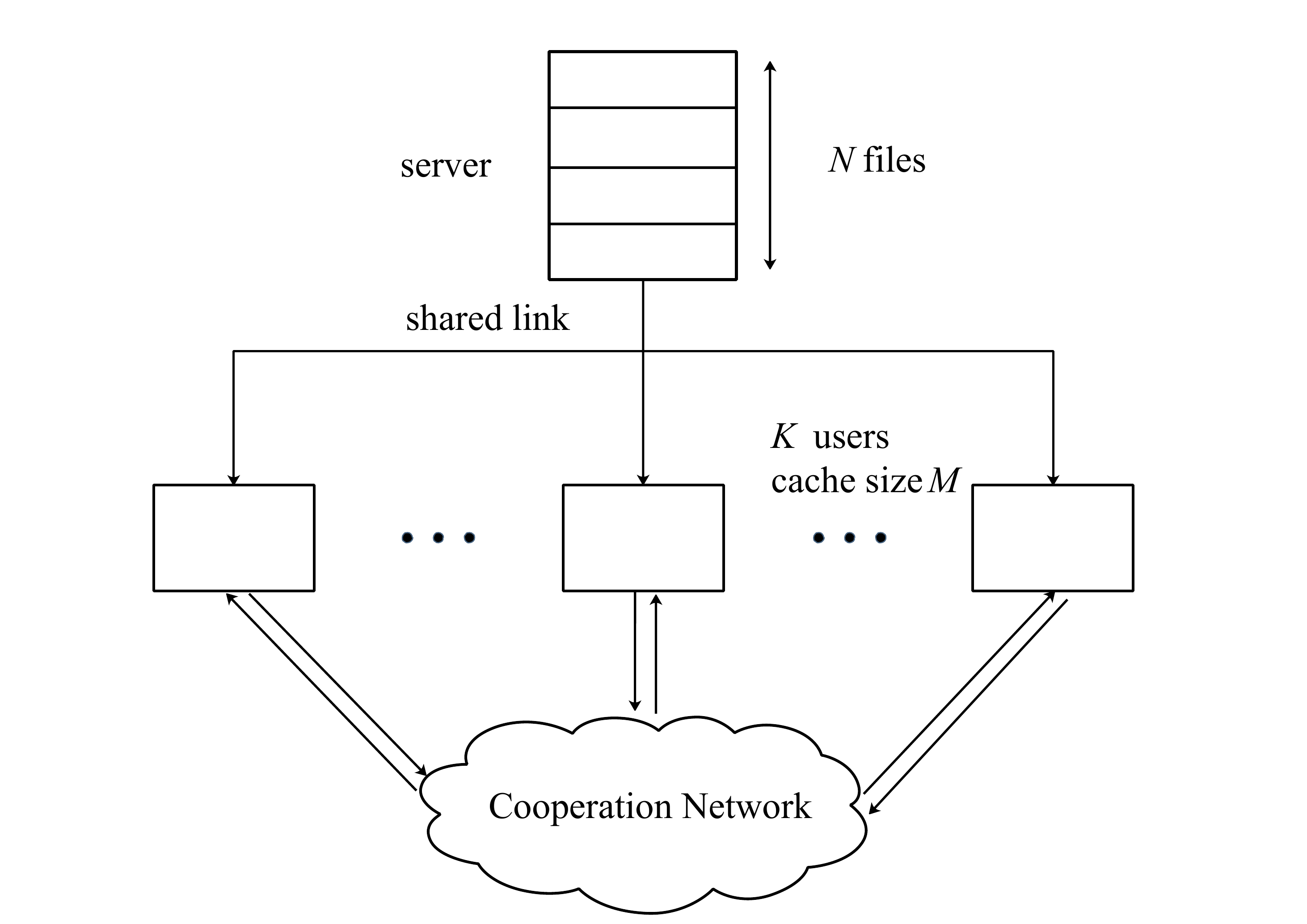}
		\caption{Caching system considered in this paper. A  server connects with $K$ cache-enabled users and the users can cooperate through a flexible network. 
		}
		\label{fig_model}
	\end{figure}

	\section{System Model and Problem Definition}\label{Sec_Model}
	
	Consider a cache-aided network consisting of a single server and $K$ users  as depicted in Fig. \ref{fig_model}.   The server  has a library of $N$ independent files $W_1,\ldots,W_N$. Each file $W_n$, $n=1,\ldots,N$, is uniformly distributed over \[[2^F]\triangleq\{1,2,\ldots,2^F\},\] for some positive integer $F$. The server connects with $K$ users through a noiseless shared link  but  rate-limited to $C_1$ bits per second.  Each user $k\in[K]$  is equipped with a cache memory of size $MF$ bits, where $M\in [0,N]$, and can communicate with each other through a  cooperation network.  
	
	We mainly focus on two types of cooperation networks: a shared   link as in \cite{Centralized, Decentralized} and a flexible routing network   introduced in \cite{multi-server}. In the case of the shared link, all users connect with each other through a shared error-free link but   rate-limited to $C_2$ bits per seconds. In the flexible routing network,    $K$ users can arbitrarily form  into multiple groups,  in each of which    one user    sends  data packets via network routing to the remaining users  in the corresponding  group. The transmission in each group is  error-free in $C_2$ bits per second and interference-free from other groups' transmission signals. 
	  To include   various   types of cooperation networks, we introduce an   integer $ \alpha_{\max}\in\mathbb{Z}$, which denotes the maximum number of groups allowed to send data parallelly in the cooperation network.  For example, when  $\alpha_\text{max}=1$, the cooperation network degenerates into a    shared link, and when   $\alpha_\text{max}=\lfloor \frac{K}{2}\rfloor$, it turns to the  flexible network. In this paper, we consider the general case  $ \alpha_{\max}\in[\lfloor\frac{K}{2}\rfloor]$.


	The system  works in two phases: a placement phase and a delivery phase. In the placement phase, all users will access the entire library $W_1,\ldots,W_N$ and fill the content to their caches. More specifically,  each user $k$, for $k\in[K]$, maps  $W_1,\ldots,W_N$ to its cache contents: 
	\begin{IEEEeqnarray}{rCl}
		Z_k \triangleq \phi_k(W_1,\ldots,W_N),
	\end{IEEEeqnarray}
	for some caching function
	\begin{IEEEeqnarray}{rCl}\label{eq:caching}
		\phi_{k}: [2^F]^N\rightarrow [\lfloor2^{MF}\rfloor].
	\end{IEEEeqnarray}

	In the delivery phase,  each user  requests one of the $N$ files from the library. We denote the demand of  user $k$ {as}
	$d_k\in[N]$,
	and its demanding file  {as}  $W_{d_k}$. Let 
	$\mathbf{d}\triangleq(d_1,\ldots,d_K)$ denote the users' request vector. 	In this paper, we investigate     the worst request case where each of the users makes unique request. 
				 
		 After  users' requests $\bf{d}$ are informed to  the server and all users,  the server produces symbol
	\begin{IEEEeqnarray}{rCl}
	X\triangleq f_{\bf{d}}(W_1,\ldots,W_N),
	\end{IEEEeqnarray}
	and user $k\in\{1,\ldots,K\}$ produces symbol\footnote{Each user $k$ can produce $X_k$ as a function of $Z_k$ and the received signals sent by the server, but because all users can access to the server's signal due to the fact that the server broadcasts its signals to the network, it's equivalent to generate $X_k$  as a function $Z_k$.}
	\begin{IEEEeqnarray}{rCl}
		X_k\triangleq f_{k,{\bf{d}}}(Z_k),
	\end{IEEEeqnarray}
	for some encoding functions
	\begin{subequations}\label{eq:encoding}
		\begin{IEEEeqnarray}{rCl}
			&&f_{\bf{d}}:[2^F]^N\rightarrow [\lfloor 2^{R_1F} \rfloor],\\
			&& f_{k,\bf{d}}: [\lfloor 2^{MF} \rfloor]\rightarrow  [\lfloor 2^{R_2F} \rfloor],\quad \label{eq:userencoding}
		\end{IEEEeqnarray}
	\end{subequations}
	where $R_1$ and $R_2$ denote the \emph{transmission rate} sent by the server and each user, respectively.    Here we focus on the symmetric case where all users have the same transmission rate.  Due to the constraint of $\alpha_\text{max}$, at most $\alpha_\text{max}$ users can send signals parallelly in each channel use. The set of $\alpha_\text{max}$ users  who send signals in parallel could be adaptively changed in the delivery design.
	

	At the end of the delivery phase, based on     the signals sent from the server and other users,  user $k$   decodes its desired message as
	\[\hat{W}_{d_k}=\psi_{k,\bf{d}}(X,Y_{k},Z_k),\]
	where $Y_{k}$   denotes  user $k$'s received signals    sent from the server and other users, and  $\psi_{k,\bf{d}}$ is a decoding function. 

	We define the worst-case probability of error as
	\begin{IEEEeqnarray}{rCl}
		P_e \triangleq \max_{{\bf{d}}\in \mathcal{F}^n} \max_{k\in[K]} \text{Pr} \left(\hat{W}_{d_k} \neq {W}_{d_k}\right).
	\end{IEEEeqnarray}

{	
	A caching scheme $(M_1,R_1,R_2)$ consists of caching functions $\{\phi_k\}$, encoding functions $\{  f_{\bf{d}},f_{k,\bf{d}}\}$ and decoding functions $\{\psi_{k,\bf{d}}\}$. We say that the rate region $(M,R_1,R_2)$  is \emph{achievable} if for every $\epsilon>0$ and every large enough file size $F$, there exists a caching scheme such that $P_e$    is less than $\epsilon$.


 
Since the server and the users   send   signals in parallel, the total transmission delay, denoted by $T$, can be defined as 
\begin{IEEEeqnarray}{rCl}\label{qeDelay}
T \triangleq \max \{\frac{R_1F}{C_1},\frac{R_2F}{C_2}\}.
\end{IEEEeqnarray}
The \emph{optimal}  transmission delay is $T^*\triangleq \inf\{ T: T~\text{is achievable}\}$. 
For simplicity, we  assume that   $C_1=C_2=F$, and then from \eqref{qeDelay} we have
\begin{IEEEeqnarray}{rCl}\label{eq:fDuplex}
T= \max\{R_1,R_2\}.
\end{IEEEeqnarray}

{

	 	}}

	Our goal is to design coded caching schemes that minimize the transmission delay.
	Finally,  in this paper we assume
	$K \leq N$ and $M \leq N$. Extending the results to other scenarios is straightforward, as  mentioned in \cite{Centralized}.
	
	\section{Main Results}\label{Sec_Results}
	For the system model described in Section \ref{Sec_Model}, we first 
establish a lower bound on the transmission delay, then present  new upper bounds  and optimality results of  our centralized and decentralized coded caching schemes, respectively. 
	

	\begin{Theorem}[Lower Bound]\label{Thrm_LowerBound}
			For  memory size $0 \leq M \leq N$, the optimal transmission delay is lower bounded by
			\begin{IEEEeqnarray}{rCl}\label{eq:cutset}
				T^*\geq \max&&\left\{\frac{1}{2}\Big(1-\frac{M}{N}\Big),\max\limits_{s\in [K]}\Big(s-\frac{KM}{\lfloor N/s\rfloor}\Big), \right.\nonumber\\&&\qquad\left.\max\limits_{s\in [K]}\Big(s-\frac{sM}{\lfloor N/s\rfloor}\Big)\frac{1}{1+\alpha_\textnormal{max}}\right\}.\label{eq:cutset2}
			\end{IEEEeqnarray}
		\end{Theorem}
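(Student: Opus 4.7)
The plan is to establish the three terms in the bound as separate cut-set inequalities, in the spirit of the Maddah-Ali and Niesen converse, with the new ingredient that the cooperation signals must be bounded in two qualitatively different ways, depending on whether we want a bound on $R_1$ alone or a joint bound on $R_1$ and $R_2$.

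For the first term $T\geq \tfrac{1}{2}(1-M/N)$ I would fix user $1$ alone and consider $N$ executions of the scheme with demand vectors in which $d_1$ cycles through the whole library. Across these $N$ rounds user $1$ must decode all $N$ files (total entropy $NF$) from its cache ($MF$ bits), the server's broadcasts (at most $NR_1F$ bits in total), and the cooperation signals it actually \emph{receives}. The key observation, which holds independently of $\alpha_{\max}$, is that user $1$ is the receiver of at most one cooperation group at any instant and therefore absorbs at most $C_2T=TF$ bits per round, i.e.\ at most $NTF$ in total. Applying Fano's inequality and letting $F\to\infty$ yields $M+NR_1+NT\geq N$, and using $R_1\leq T$ from \eqref{eq:fDuplex} gives the first term.

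For the second term I fix $s$ users and choose $\lfloor N/s\rfloor$ demand vectors so that the chosen users jointly request $s\lfloor N/s\rfloor$ pairwise distinct files (feasible because $s\lfloor N/s\rfloor\leq N$). Since every cooperation signal is a deterministic function of $(Z_1,\ldots,Z_K)$, the joint entropy of all caches and the server's broadcasts upper-bounds everything these $s$ users ever observe, yielding $KMF+\lfloor N/s\rfloor R_1F\geq s\lfloor N/s\rfloor F$; maximising over $s$ and using $T\geq R_1$ finishes this term. For the third term I keep the same set-up but replace the cache-entropy estimate by a \emph{throughput} estimate: at most $\alpha_{\max}$ users transmit in parallel at rate $R_2$, so the cooperation network pushes at most $\alpha_{\max}R_2F$ bits per round. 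Counting only the $s$ chosen caches, I get $sMF+\lfloor N/s\rfloor R_1F+\lfloor N/s\rfloor \alpha_{\max}R_2F\geq s\lfloor N/s\rfloor F$, which rearranges to $R_1+\alpha_{\max}R_2\geq s-sM/\lfloor N/s\rfloor$; the final step $(1+\alpha_{\max})T\geq R_1+\alpha_{\max}R_2$ follows from $R_1,R_2\leq T$.

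The main conceptual hurdle is deciding which of the two cooperation bounds to use when: the cache-entropy bound, which treats the whole cooperation subsystem as at most $KMF$ bits of side information, produces a bound on $R_1$ alone and explains the $KM$ term, while the throughput bound exploits the bottleneck at the cooperation network and thereby couples $R_1$ and $R_2$ through the factor $1+\alpha_{\max}$. Beyond this conceptual choice, the remaining work is routine book-keeping: selecting demand vectors so that the files requested across rounds at the chosen users are pairwise disjoint, and applying Fano's inequality carefully enough that the error terms vanish as $F\to\infty$.
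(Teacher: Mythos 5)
Your proposal is correct and follows essentially the same route as the paper's converse in Appendix~\ref{App_converse}: the same genie-type two-pipe argument for the $\frac{1}{2}(1-M/N)$ term, and the same pair of cut-set bounds over $\lfloor N/s\rfloor$ demand rounds, once absorbing all cooperation signals into the joint cache entropy $KMF$ (second term) and once bounding them by the $\alpha_{\max}R_2$ throughput of the cooperation network (third term). The distinction you flag between the cache-entropy bound and the throughput bound is exactly the distinction the paper draws between its two inequalities in \eqref{eq_entropy}.
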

		\begin{proof}
			See the proof in Appendix \ref{App_converse}.
		\end{proof}
	
	
	\subsection{Centralized  Coded Caching}
	 In the following Theorem, we present an upper bound on the transmission delay for the centralized  caching setup.
	\begin{Theorem}[Upper Bound of the 	Centralized Scheme]\label{Thrm_UpperBound}
		Let $t\triangleq KM/N\in \mathbb{Z}^+$, and $\alpha\in \mathbb{Z}^+$. For  memory size $M\in\{0,\frac{N}{K},\frac{2N}{K},\ldots,N\}$, the optimal transmission delay $T^*$ is upper bounded by $T^*\leq T_\textnormal{central}$, where 
		\begin{IEEEeqnarray}{rCl}\label{eq:achievable_rate}
			 T_\textnormal{central} \triangleq \!\!\min_{ \alpha \leq  \alpha_{\max}  }\!\! K\Big(1\!-\!\frac{M}{N}\Big)\frac{1}{1\!+\!t\!+\!{\alpha\min\{\lfloor  \frac{K}{\alpha}\rfloor\!-\!1,t\}}}.\quad 
		\end{IEEEeqnarray}
		For general $0 \leq M \leq N$, the lower convex envelope of these points is achievable. 
	\end{Theorem}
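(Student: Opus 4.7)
The plan is to exhibit, for each integer $\alpha\in[1,\alpha_{\max}]$, an explicit centralized placement-and-delivery scheme whose server rate equals $R_1=K(1-M/N)/(1+t+\alpha\min\{\lfloor K/\alpha\rfloor-1,t\})$ and whose per-user rate $R_2$ satisfies $R_2\leq R_1$, so that the resulting delay $T=\max\{R_1,R_2\}=R_1$ attains the bound. Minimizing over $\alpha$ will give $T_{\textnormal{central}}$ at the integer memory points $M\in\{0,N/K,\ldots,N\}$; extending to arbitrary $M\in[0,N]$ is then obtained by memory-sharing between two consecutive integer points exactly as in \cite{Centralized}, yielding the stated lower convex envelope.

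For placement I would use the Maddah-Ali--Niesen (MAN) construction with $t=KM/N$: split each file $W_n$ into $\binom{K}{t}$ equal subfiles $W_{n,\mathcal{T}}$ indexed by $\mathcal{T}\subseteq[K]$ with $|\mathcal{T}|=t$, and store $W_{n,\mathcal{T}}$ in user $k$'s cache iff $k\in\mathcal{T}$; each user then holds exactly $MF$ bits. For delivery I would partition $[K]$ into $\alpha$ groups $\mathcal{G}_1,\ldots,\mathcal{G}_\alpha$ of size $g=\lfloor K/\alpha\rfloor$ (with minor boundary adjustments when $\alpha\nmid K$), and arrange the server and the $\alpha$ groups to transmit in parallel slots. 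In each slot the server broadcasts a single global MAN XOR $\bigoplus_{k\in\mathcal{S}}W_{d_k,\mathcal{S}\setminus\{k\}}$ for some $\mathcal{S}\subseteq[K]$ of size $t+1$, delivering $t+1$ fresh subfiles to the users in $\mathcal{S}$. Simultaneously, in each group $\mathcal{G}_j$ one designated transmitter $k^*$ sends a second XOR: when $g\geq t+1$ it is an internal MAN combination over some $\mathcal{S}'\subseteq\mathcal{G}_j$ of size $t+1$ containing $k^*$, delivering $t$ new subfiles to the other members of $\mathcal{S}'$; when $g<t+1$ the transmitter instead XORs the $g-1$ subfiles $W_{d_{k'},(\mathcal{G}_j\setminus\{k'\})\cup\mathcal{B}}$ indexed by $k'\in\mathcal{G}_j\setminus\{k^*\}$ for a common $\mathcal{B}\subseteq[K]\setminus\mathcal{G}_j$ of size $t-g+1$. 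A short verification of the MAN placement shows that $k^*$ has all these subfiles cached and that each receiver can cancel out the other terms in the XOR using its own cache. Either way, each group contributes $\min\{g-1,t\}$ fresh subfile-deliveries per slot, so a single slot delivers $1+t+\alpha\min\{g-1,t\}$ subfile-copies in total.

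The key combinatorial step is to design a schedule that repeats this slot template and covers each of the $K\binom{K-1}{t}=(t+1)\binom{K}{t+1}$ required subfile-copies exactly once. I would handle this in two stages: (i) assign each MAN coded demand, indexed by a size-$(t+1)$ subset $\mathcal{S}$, either to the server or to one of the groups (a group-assignment requires the relevant users to lie inside some $\mathcal{G}_j$ under the internal-MAN regime, or to fit the $\mathcal{B}$-construction template above when $g<t+1$), and (ii) uniformly subpacketize each subfile into a common number of pieces so that the transmitter role can be rotated fairly within each group and the per-slot server/group loads become integer and balanced. A direct double-count then gives the total number of packet-slots as $(t+1)\binom{K}{t+1}/(1+t+\alpha\min\{g-1,t\})$, each carrying a packet of size proportional to $F/\binom{K}{t}$, so that
\[ R_1\;=\;\frac{K(1-M/N)}{1+t+\alpha\min\{g-1,t\}}. \]
Since only one of the $g$ users in each group transmits per slot, a fair round-robin of the transmitter role yields $R_2=R_1/g\leq R_1$, and hence $T=R_1$ matches $T_{\textnormal{central}}(\alpha)$.

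The main obstacle is the joint scheduling/subpacketization step: one must partition the set of MAN coded demands so that every within-group demand admits a valid transmitter holding all of its subfiles, while simultaneously matching the server-handled load against the group-handled loads so that each required subfile-copy is delivered exactly once and the two rates are balanced in the way claimed. Once that combinatorial design is in place, the counting above yields the bound at integer $t$, and the general $M\in[0,N]$ case follows from the standard memory-sharing argument between consecutive integer-$t$ points, as in \cite{Centralized}.
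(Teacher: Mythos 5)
Your proposal is correct and follows essentially the same route as the paper: Maddah-Ali--Niesen placement, a partition of the $K$ users into $\alpha$ groups of size $\lfloor K/\alpha\rfloor$ transmitting in parallel with the server, $\min\{\lfloor K/\alpha\rfloor-1,t\}$ fresh subfile-copies delivered per group transmission, and a server/user load balance yielding $T=K(1-M/N)/(1+t+\alpha\min\{\lfloor K/\alpha\rfloor-1,t\})$. The combinatorial balancing step you flag as the main obstacle is handled in the paper not by assigning whole coded demands to the server or to a group, but by splitting every subfile into a server part of fractional size $\lambda=\frac{1+t}{1+t+\alpha\min\{\lfloor K/\alpha\rfloor-1,t\}}$ and a user part of size $1-\lambda$ (together with an $L_1$-fold subpacketization for integrality and a rotation over group partitions), which equalizes $R_1$ and $R_2$ by construction and sidesteps the integer demand-assignment problem; note also that the quantity entering the delay is the cooperation network's airtime, which in this balanced design equals $R_1$, rather than the per-user byte count $R_1/\lfloor K/\alpha\rfloor$ you compute.
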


	\begin{proof}
		See  scheme in Section \ref{Sec_Schemes}.
	\end{proof}

	The following simple example   shows that the proposed upper bound   can greatly reduce the transmission delay.
\begin{Example}
		Consider a network described in Section \ref{Sec_Model} with $KM/N=K-1$. The coded caching scheme without user cooperation in \cite{Centralized} allows the server to create an XOR message useful for all $K$ users, achieving the transmission delay  $K\big(1-\frac{M}{N}\big)\frac{1}{1+t}= \frac{1}{K}$. The coded caching scheme without server in  \cite{D2D} achieves the transmission delay  $\frac{N}{M}(1-\frac{M}{N})=\frac{1}{K-1}$. Our upper bound in Theorem \ref{Thrm_UpperBound} achieves  $\frac{1}{2K-1}$ by choosing $\alpha =1$, 		which reduces the transmission delay by around 2 times when $K$ is large.
	\end{Example}

	From \eqref{eq:achievable_rate}, we obtain the optimal value of $\alpha$, denoted by $\alpha^*$,   equals to 1 if $t\geq K-1$ and   to $\alpha_\textnormal{max}$ if $ t \leq \lfloor  \frac{K}{\alpha_\textnormal{max}}\rfloor\!-\!1$. When ignoring all integer constraints, we obtain  $\alpha^*=\frac{K}{t-1}$. We rewrite this choice as follows: 
	\begin{equation} \label{eq:optimalAlpha}
	\alpha^*=\!\left\{
	\begin{aligned}
	&1, &t\geq K-1,\\
	&\frac{K}{t+1},  &\lfloor  \frac{K}{\alpha_\textnormal{max}}\rfloor\!-\!1 \!<\!t\!<\! K\!-\!1,\\
	&\alpha_\textnormal{max}, & t \leq \lfloor  \frac{K}{\alpha_\textnormal{max}}\rfloor\!-\!1.
	\end{aligned}
	\right.
	\end{equation}
	
	\begin{Remark}
	Recall that parameter $\alpha$ denotes the number of users that exactly send information parallelly in the delivery phase. It's interesting to see that $\alpha$ should decrease as  the users' caching size $M$ increases for given  $(K, N, \alpha_{\max})$. To simplify the explanation, we assume $\alpha_\textnormal{max}=\lfloor \frac{K}{2}\rfloor$ and  $KM/N\in \mathbb{Z}^+$.
	When $M\leq N(\lfloor  \frac{K}{\alpha_\textnormal{max}}\rfloor-1)/K$, we have $\alpha^*=\alpha_\textnormal{max}$ and thus it's beneficial to let the most   users  parallelly send information.  As  $M$ increases,  $\alpha^*$ decreases and $\alpha^*< \alpha_\textnormal{max}$,  indicating that  letting more  users  parallelly send information could be harmful.  Too see this, consider the case when $N=100$, $K=10$, $\alpha_\textnormal{max}=5$, $M=40$. From \eqref{eq:optimalAlpha}, we  $\alpha^* =2<\alpha_\textnormal{max}$, smaller than $\alpha_\textnormal{max}=5$.  In the extreme case when $M\geq (K-1)N/K$,  only one user should be allowed to send information,   implying that when users' caching size is sufficiently large, the  cooperation network can be just a simple shared link  connecting with all users.  The main reason for this phenomenon is due to a tradeoff between the multicast gain, \emph{cooperation gain} and \emph{parallel gain}, which will be introduced later in this section.
\end{Remark}

	
	
	
	
	Comparing $ T_\textnormal{central}$ with the transmission delay achieved by the  scheme without user cooperation in \textnormal{\cite{Centralized}}, i.e., $K\big(1-\frac{M}{N}\big)\frac{1}{1+t}$,    $ T_\textnormal{central}$ consists of an additional factor 
	\begin{IEEEeqnarray}{rCl}\label{CCGainCen}
	G_\textnormal{central,c}\triangleq \frac{1}{1+{\frac{\alpha}{1+t}\min\{\lfloor  \frac{K}{\alpha}\rfloor\!-\!1,t\}}},
\end{IEEEeqnarray}
	  referred to   \emph{centralized cooperation gain},  as it arises from   user cooperation.  Comparing  $ T_\textnormal{central}$  with  the delay achieved by the scheme for D2D network without server \cite{D2D}, i.e.,  $\frac{N}{M}(1-\frac{M}{N})$,   $ T_\textnormal{central}$ consists of an additional factor 
	\begin{IEEEeqnarray}{rCl}\label{CPGainCen}
	G_\textnormal{central,p} \triangleq\frac{1}{1+\frac{1}{t}+\frac{\alpha}{t}\min\{\lfloor  \frac{K}{\alpha}\rfloor\!-\!1,t\}},
\end{IEEEeqnarray}
	referred to \emph{centralized parallel gain},  as it arises  from    parallel transmission among the server and users. Both gains  depend on $K$, $M/N$ and $\alpha_{\max}$. 
	
	Subsisting the optimal $\alpha^*$  into  \eqref{CCGainCen}, we have
	\begin{equation} \label{eq:cen_cooperation_gain}
	G_\textnormal{central,c}=\!\left\{
	\begin{aligned}
	&\frac{1+t}{K+t}, &t\geq K-1,\\
	&\frac{1+t}{(\lfloor  \frac{K}{\alpha^*}\rfloor\!-\!1)\alpha^*\!+\!t\!+\!1}, &\lfloor  \frac{K}{\alpha_\textnormal{max}}\rfloor\!-\!1 \!<\!t\!<\! K\!-\!1,\\
	&\frac{1+t}{\alpha_\textnormal{max}t+t+1}, & t \leq \lfloor  \frac{K}{\alpha_\textnormal{max}}\rfloor\!-\!1.
	\end{aligned}
	\right.
	\end{equation}
 When fixing $(K, N, \alpha_{\max})$, $G_\textnormal{central,c}$ in general is not a  monotonic function of $M$. 	More specifically,  when $M$ is small such that $t<  \lfloor \frac{K}{\alpha_\textnormal{max}}\rfloor\!-\!1$, the function $G_\textnormal{central,c}$ is monotonically decreasing, indicating  that the improvement caused by user cooperation increases. This is mainly because relatively larger $M$ allows users to share more common data with each other,  providing more opportunities on user cooperation. However, when $M$ gets larger such that $t \geq \lfloor \frac{K}{\alpha_\textnormal{max}}\rfloor\!-\!1$, the local and global caching gains become dominant, and less improvement can be obtained from user cooperation, turning  $G_\textnormal{central,c}$  to a monotonic increasing function of $M$, 
	
	

	Similarly, subsisting the optimal $\alpha^*$  into  \eqref{CPGainCen}, we  obtain
	\begin{equation} \label{eq:cen_parallel_gain}
	G_\textnormal{central,p}=\!\left\{
	\begin{aligned}
	&\frac{t}{K+t}, &t\geq K-1,\\
	&\frac{t}{\alpha^* t+t+1},&\lfloor  \frac{K}{\alpha_\textnormal{max}}\rfloor\!-\!1 \!<\!
	t\!<\! K\!-\!1,\\
	&\frac{t}{\alpha_\textnormal{max}t+t+1}, & t \leq \lfloor  \frac{K}{\alpha_\textnormal{max}}\rfloor\!-\!1.
	\end{aligned}
	\right.
	\end{equation}
	Eq. \eqref{eq:cen_parallel_gain} shows that $G_\textnormal{central,p}$ is monotonically increasing referring to $t$,  mainly due to the fact that as $M$ increases,  more contents can be sent through the user cooperation without the help of the central server, decreasing the improvement from parallel transmission between the server and users. 
	
	The centralized cooperation gain \eqref{CCGainCen} and parallel gain \eqref{CPGainCen} are plotted in  Fig. \ref{fig_gain} when $N=20$, $K=10$ and $\alpha_{\max}=5$. 
	
		
		\begin{Remark}
	 Larger $\alpha$ could  lead to better parallel and cooperation gain (more uses can concurrently multicast signals to other users), but may result in   worse multicast gain (signals are multicasted to less users in each group),  the  choice of $\alpha$ in \eqref{eq:optimalAlpha} is in fact a  tradeoff  between the multicast gain, parallel gain and cooperation gain. 
	\end{Remark}

 		The proposed upper bound in Theorem \ref{Thrm_UpperBound} is order optimal.
		\begin{Theorem}\label{Thrm_Gap}
			For  memory size $0 \leq M \leq N$, 
			\begin{IEEEeqnarray}{rCl}
				\frac{ T_\textnormal{central}  }{T^*}\leq 31.
			\end{IEEEeqnarray}
		\end{Theorem}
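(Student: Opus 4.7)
The plan is to bound $T_\textnormal{central}/T^* \leq 31$ by a case analysis over the normalized memory $t = KM/N$, substituting the optimal $\alpha^*$ from \eqref{eq:optimalAlpha} into the upper bound and pairing it with the tightest of the three lower bounds in Theorem~\ref{Thrm_LowerBound} in each case. Since $T_\textnormal{central}$ is defined as the lower convex envelope of the discrete points, and $T^*$ is convex and non-increasing in $M$ by a standard time-sharing argument, it suffices to establish the ratio bound at the integer values $t \in \{0, 1, \ldots, K\}$ and extend to all $M \in [0,N]$ by convexity.

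I would split into three regimes matching \eqref{eq:optimalAlpha}: Regime~A with $t \leq \lfloor K/\alpha_\textnormal{max}\rfloor - 1$, where $T_\textnormal{central} = K(1-M/N)/((\alpha_\textnormal{max}+1)t + 1)$; Regime~B with $\lfloor K/\alpha_\textnormal{max}\rfloor - 1 < t < K - 1$, where $\alpha^* \approx K/(t+1)$ and the denominator of $T_\textnormal{central}$ scales as $K + t + 1 - K/(t+1)$; and Regime~C with $t \geq K - 1$, where $T_\textnormal{central} = K(1-M/N)/(K+t)$. In Regime~C, the first lower bound $T^* \geq \tfrac{1}{2}(1-M/N)$ immediately yields $T_\textnormal{central}/T^* \leq 2K/(K+t) \leq 2$. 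In Regime~A with $t \geq 1$, I would apply the third lower bound with $s \approx \lfloor K/(2t) \rfloor$ so that $\lfloor N/s\rfloor \geq 2M$ and $T^* \geq s/(2(1+\alpha_\textnormal{max}))$, giving a constant ratio; for the edge case $t = 0$, the second lower bound with $s = K$ yields $T^* \geq K = T_\textnormal{central}$. In Regime~B, I would use the third (or second) bound with $s$ chosen near $K/(t+1)$ to match the $t$-scaling of the upper bound's denominator, and verify that the $(1+\alpha_\textnormal{max})$ factor cancels against the parallel-gain improvement in $T_\textnormal{central}$.

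The main obstacle is Regime~B, where the multicast, cooperation, and parallel gains all interact non-trivially. Substituting $\alpha^* = K/(t+1)$ into $\alpha^*(\lfloor K/\alpha^*\rfloor - 1)$ incurs an additive loss of up to $\alpha^*$ due to the floor, which must be absorbed into the constant through a careful coordinated choice of $s$ in the lower bound. A secondary obstacle is that the chosen $s$ must always be a valid integer in $[K]$, which requires separate handling of the regime boundaries $t = \lfloor K/\alpha_\textnormal{max}\rfloor - 1$ and $t = K - 1$, as well as of very small and very large $M$ where the $\lfloor N/s\rfloor$ term in Theorem~\ref{Thrm_LowerBound} departs significantly from $N/s$. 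The precise accounting of these rounding losses across all sub-regimes and boundaries is what accumulates to the stated constant 31, rather than a smaller value, and must be finished by explicit numerical estimates of each worst-case ratio.
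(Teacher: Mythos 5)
Your overall strategy (case analysis on $t$, pairing the upper bound with whichever branch of Theorem~\ref{Thrm_LowerBound} is tightest) matches the paper's, but there are two concrete gaps. First, the reduction to integer values of $t$ via convexity does not work in the direction you need: $T_\textnormal{central}$ is the linear interpolation of the corner points, but $T^*$ being convex gives $T^*(\lambda M_1+(1-\lambda)M_2)\leq \lambda T^*(M_1)+(1-\lambda)T^*(M_2)$, which is the wrong inequality for transferring the ratio bound from the endpoints to intermediate $M$; you would additionally need the lower bound at the intermediate point to be within a constant of the interpolated lower bound, which is not automatic. The paper avoids this by partitioning on real-valued $t$ and devoting a separate regime to small memory, $t\leq 0.6393$, where it uses the crude bound $T_\textnormal{central}\leq\min\{K,N\}$ against $T^*\geq s-\frac{KM}{\lfloor N/s\rfloor}$ with $s=0.275N$, giving $T^*\geq 0.0325N>N/31$. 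Your treatment of only the exact point $t=0$ leaves the whole range $0<t\lesssim 1$ uncovered, and this is precisely one of the two places where the constant $31$ is actually determined.

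Second, in the main regime the paper does not re-optimize $s$ from scratch as you propose; it invokes the known factor-$12$ result \cite[Theorem 3]{Centralized}, namely $\max_s\bigl(s-\frac{sM}{\lfloor N/s\rfloor}\bigr)\geq \frac{1}{12}\cdot\frac{K(1-M/N)}{1+t}$, inside the third (user cut-set) bound of Theorem~\ref{Thrm_LowerBound}. This yields
\begin{IEEEeqnarray}{rCl}
\frac{T_\textnormal{central}}{T^*}\;\leq\; 12\cdot\frac{1+\alpha_{\max}}{1+\alpha_{\max}\,t/(1+t)},\nonumber
\end{IEEEeqnarray}
and the $(1+\alpha_{\max})$ penalty in the lower bound is cancelled by the $\alpha_{\max}t/(1+t)$ term in the denominator of $T_\textnormal{central}$ only because $t$ is bounded away from $0$ (here $t>0.6393$, so $t/(1+t)>0.39$ and the ratio tends to $12/0.39\approx 30.8$ as $\alpha_{\max}\to\infty$). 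Your outline identifies this cancellation as the "main obstacle" in Regime~B but does not supply the mechanism; without either the factor-$12$ lemma or an equivalent explicit optimization over $s$, and without the $t>0.6393$ cutoff, the claimed constant is not established. The large-memory case ($t\geq K-1$ against $\frac{1}{2}(1-M/N)$, ratio $\leq 2$) you handle the same way the paper does.
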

		\begin{proof}
			See the proof in Appendix \ref{App_Gap}.
		\end{proof}

	\begin{figure}
		\centering
		\includegraphics[width=0.5\textwidth,scale=0.2]{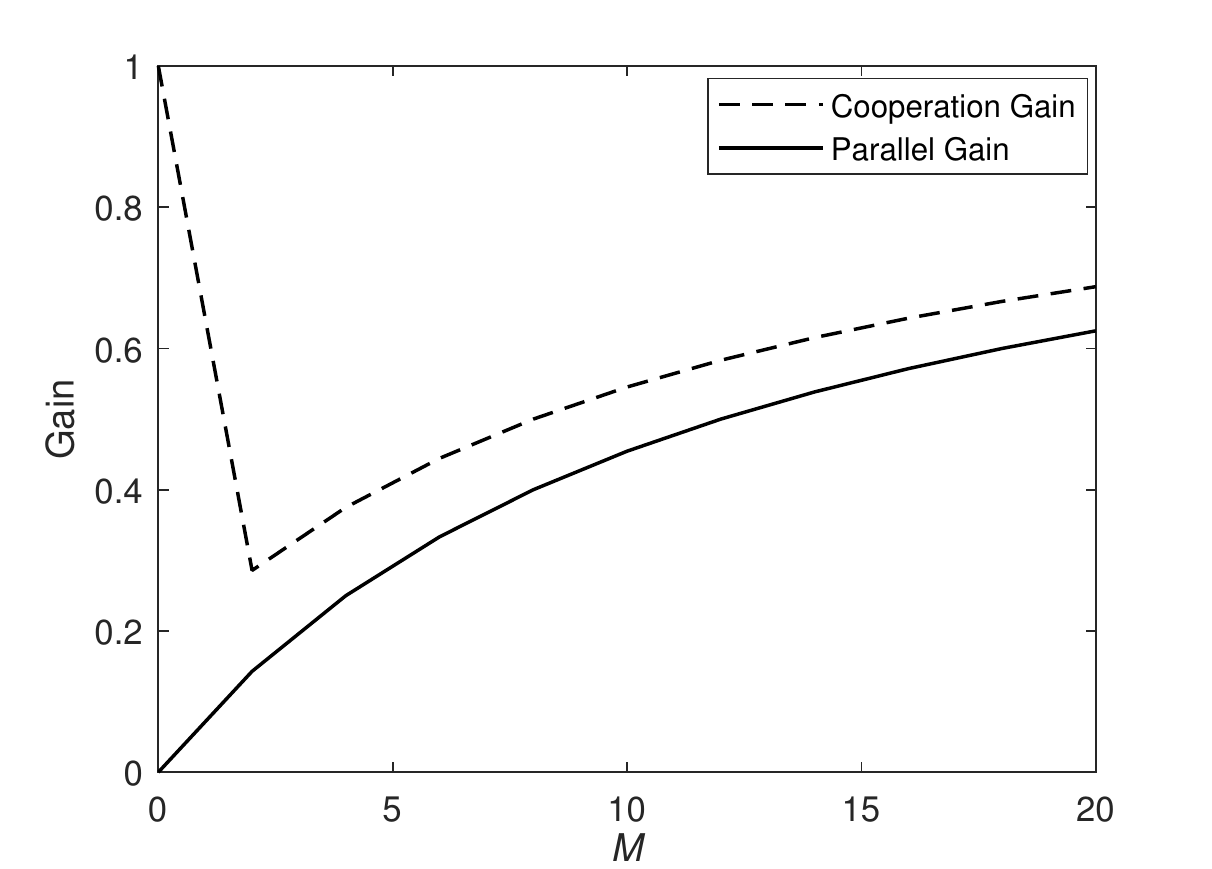}
		\caption{Centralized cooperation gain and parallel gain  when $N=20$, $K=10$ and $\alpha_{\max}=5$.  
		}
		\label{fig_gain}
	\end{figure}

		\begin{figure}
			\centering
			\includegraphics[width=0.5\textwidth,scale=0.2]{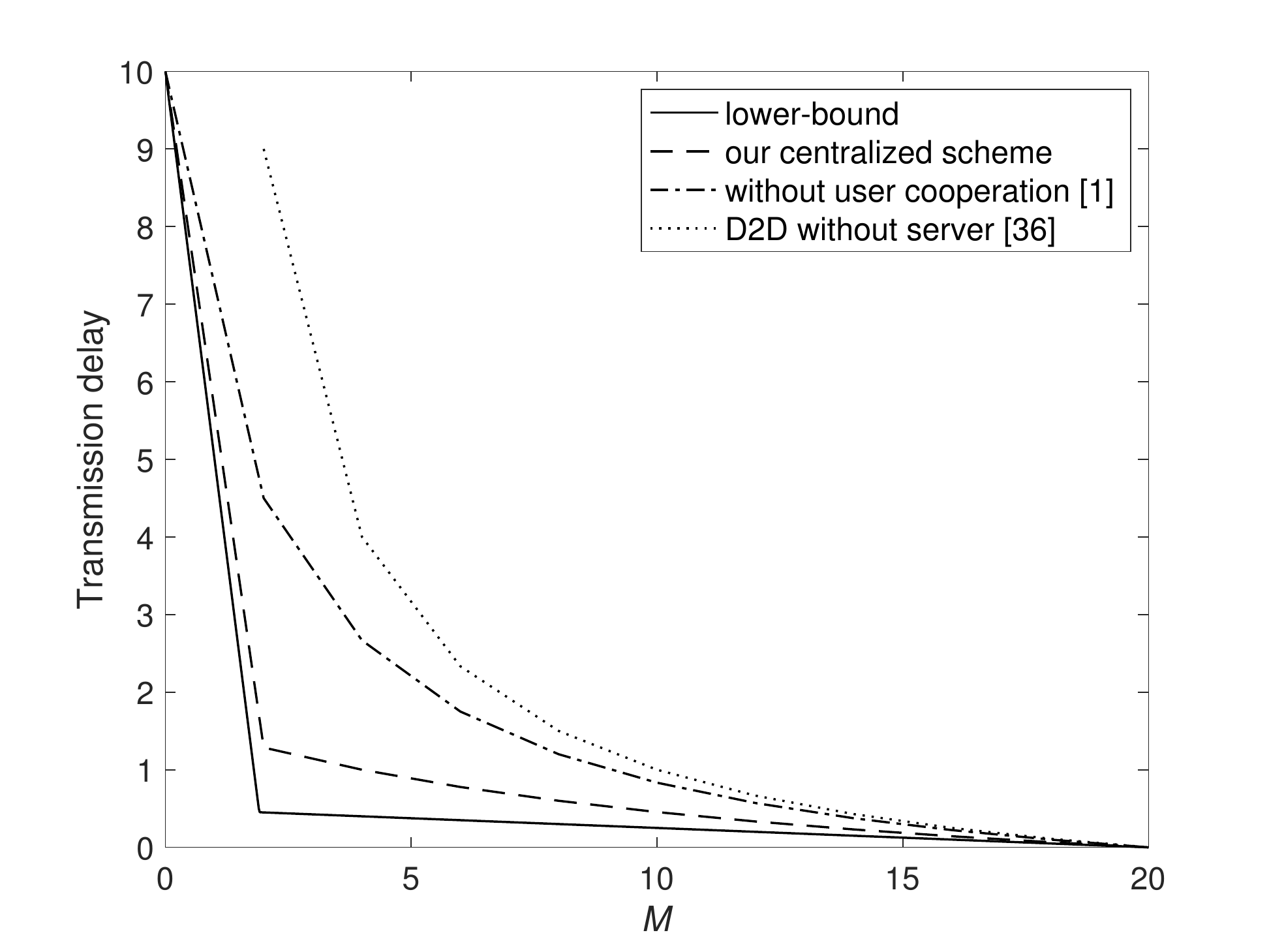}
			\caption{
				Transmission delay when $N=20$, $K=10$ and $\alpha_{\max}=5$.  The upper bounds  are achieved under the centralized  caching scenario.
			}
			\label{fig_result}
		\end{figure}
{The exact gap 	of ${ T_\textnormal{central}  }/{T^*}$ could be much smaller. One could apply the method  proposed in \cite{ImprovedGap2} to obtain a tighter lower bound and shrink the gap. In this paper, we only prove the order optimality of the proposed schemes, and   leave the work of finding a smaller gap into the future.  

Fig. \ref{fig_result} plots the lower bound \eqref{eq:cutset} and upper bounds achieved by various schemes, including the proposed scheme, the scheme in \cite{Centralized} which considers the setting without user cooperation, and the scheme in \cite{D2D} which considers the setting without server. It's obvious that our scheme outperforms the previous schemes and approaches closely to the lower bound.
		}

	\subsection{Decentralized Coded Caching}

We exploit multicast gain from coded caching,  cooperation opportunity among users, and  parallel transmission between the server and users,  and      achieve an upper bound  stated below. 
	\begin{Theorem}[Upper Bound of the Decentralized Scheme]\label{Thrm_UpperBound_Decen}
		Define $p\triangleq M/N$. For memory size $0 \leq M \leq N$,  the optimal transmission delay  $T^*$ is upper bounded by $ T_\textnormal{decentral}:$
		\begin{subequations}\label{eq:achievable_rate_decen_whole}
			\begin{IEEEeqnarray}{rCl}\label{eq:achievable_rate_decen}
				{T}_\textnormal{decentral}\triangleq \max \left\{R_\emptyset, \frac{R_\textnormal{s}R_\textnormal{u}}{R_\textnormal{s}+R_\textnormal{u}-R_{\emptyset}}\right\},
			\end{IEEEeqnarray}
			where 
			\begin{IEEEeqnarray}{rCl}
				R_{\emptyset} & \triangleq& K(1-p)^K,\label{eq:Re}\\
				R_\textnormal{s} & \triangleq& \frac{1-p}{p}\big(1-(1-p)^K\big),\label{eq:Rs}\\ 
				R_\textnormal{u} &\triangleq &  \frac{1}{\alpha_\textnormal{max}}  \sum_{s=2}^{\lceil \frac{K}{\alpha_\textnormal{max}}\rceil-1}\frac{s{K \choose s}}{s-1}p^{s-1}(1-p)^{K-s+1}   \nonumber \\ &&\hspace{6pt}+\sum_{s= \lceil \frac{K}{\alpha_\textnormal{max}}\rceil}^{K}     \frac{K {K-1 \choose s-1}}{f(K,s)} p^{s-1}(1-p)^{K-s+1},\label{eq:Ru}  	
			\end{IEEEeqnarray}
			with
			\begin{IEEEeqnarray}{rCl}\label{eq:f(k,s)}
				f(K,s) \triangleq
				\left\{
				\begin{aligned}
					&\lfloor \frac{K}{s} \rfloor(s-1), & (K~\textnormal{mod}~s)<2,\\
					&K-1-\lfloor{K}/{s}\rfloor, &(K~\textnormal{mod}~s)\geq2.
				\end{aligned}
				\right.
			\end{IEEEeqnarray}
		\end{subequations}
	\end{Theorem}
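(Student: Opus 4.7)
My approach is to extend the random-prefetching / multicast-XOR paradigm of \cite{Decentralized} to the hybrid setting by introducing a user-cooperation delivery sub-scheme and then optimally sharing the load between the server and the cooperation network.

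\textbf{Placement and decomposition.} First I would prefetch every bit of every file into every user's cache independently with probability $p=M/N$; for $F$ large the subfile $V_{n,S}$ of $W_n$ cached by exactly the users in $S$ concentrates at size $p^{|S|}(1-p)^{K-|S|}F$. For any demand vector the outstanding bits decompose by subset size $s=|S|\ge 1$: within each $S$ the pieces $\{V_{d_k,S\setminus\{k\}}\}_{k\in S}$ form a one-shot index-coding instance in which every user in $S$ knows $s-1$ pieces and wants the remaining one. The $s=1$ layer consists of the singleton pieces $V_{d_k,\emptyset}$, cached nowhere, and therefore must come from the server; this accounts for the rate $R_\emptyset=K(1-p)^K$. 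The $s\ge 2$ layers can be routed either through the server or through the cooperation network.

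\textbf{Two baseline aggregates.} If the server handles every layer by the classical XOR $\bigoplus_{k\in S}V_{d_k,S\setminus\{k\}}$ of \cite{Decentralized}, the aggregate server rate is $\sum_{s=1}^{K}\binom{K}{s}p^{s-1}(1-p)^{K-s+1}=R_s$. For the user side I would adopt, inside each $S$ with $s\ge 2$, the fractional-XOR sub-scheme that splits each piece $V_{d_j,S\setminus\{j\}}$ into $s-1$ equal sub-pieces indexed by $S\setminus\{j\}$ and has every $k\in S$ broadcast $\bigoplus_{j\in S\setminus\{k\}}V_{d_j,S\setminus\{j\}}^{(k)}$ of size $\frac{1}{s-1}p^{s-1}(1-p)^{K-s+1}F$; a short check shows that after all $s$ members of $S$ have transmitted, every user in $S$ has reconstructed its missing piece, at a per-subset cost of $\frac{s}{s-1}p^{s-1}(1-p)^{K-s+1}F$. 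These broadcasts are then packed into parallel rounds subject to two ceilings: the network cap $\alpha_{\max}$ and the user-disjointness cap $\lfloor K/s\rfloor$. When $s<\lceil K/\alpha_{\max}\rceil$ the $\alpha_{\max}$-cap binds and each layer shrinks by $1/\alpha_{\max}$, giving the first sum of \eqref{eq:Ru}; when $s\ge\lceil K/\alpha_{\max}\rceil$ the disjointness cap binds and the exact round-robin count is governed by the residue $K\bmod s$, producing the second sum of \eqref{eq:Ru} with $f(K,s)$ as in \eqref{eq:f(k,s)}.

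\textbf{Load balancing.} Let $\beta\in[0,1]$ denote the fraction of the $s\ge 2$ workload assigned to the server. Because the server must still carry $R_\emptyset$ and the two channels operate in parallel per \eqref{qeDelay}, the achievable delay is
\[
T(\beta)=\max\bigl\{R_\emptyset+\beta(R_s-R_\emptyset),\,(1-\beta)R_u\bigr\}.
\]
When $R_u\ge R_\emptyset$ one can equate the two arguments at $\beta^\star=(R_u-R_\emptyset)/(R_s+R_u-R_\emptyset)\in[0,1]$ and obtain $T(\beta^\star)=R_sR_u/(R_s+R_u-R_\emptyset)$; otherwise $\beta^\star=0$ is optimal and $T=R_\emptyset$. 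Taking the worse of the two cases recovers \eqref{eq:achievable_rate_decen} exactly.

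The most delicate step is the combinatorial scheduling that yields the second sum in \eqref{eq:Ru}. The easy branch $f(K,s)=(s-1)\lfloor K/s\rfloor$, valid when the residue $K\bmod s$ is $0$ or $1$, follows from a plain round-robin over $\lfloor K/s\rfloor$ disjoint groups. Matching the sharper branch $f(K,s)=K-1-\lfloor K/s\rfloor$ whenever the residue is $\ge 2$ requires a resolvable-design-style construction in which the leftover users are cycled through overlapping rounds so that every user still participates in exactly $\binom{K-1}{s-1}$ broadcasts and each of the $\binom{K}{s}$ subsets is served exactly once; constructing and verifying that schedule is where the bulk of the bookkeeping lies.
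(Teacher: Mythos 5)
Your overall architecture is the same as the paper's: random prefetching with subfile sizes concentrating at $p^{|\mathcal{T}|}(1-p)^{K-|\mathcal{T}|}F$, a split of every non-empty-index subfile into a server part and a user part, delivery of the nowhere-cached pieces plus the server parts by the scheme of the decentralized baseline, a per-layer XOR exchange among the users of each size-$s$ subset with parallel scheduling of disjoint groups, and a final load-balancing step. Your load-balancing computation is correct and self-consistent: with server load $R_\emptyset+\beta(R_\textnormal{s}-R_\emptyset)$ and user load $(1-\beta)R_\textnormal{u}$, equating the two at $\beta^\star=(R_\textnormal{u}-R_\emptyset)/(R_\textnormal{s}+R_\textnormal{u}-R_\emptyset)$ does give exactly $R_\textnormal{s}R_\textnormal{u}/(R_\textnormal{s}+R_\textnormal{u}-R_\emptyset)$, matching \eqref{eq:achievable_rate_decen}. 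The first sum of \eqref{eq:Ru} and the branch $f(K,s)=\lfloor K/s\rfloor(s-1)$ also follow from your packing argument as you describe.

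The genuine gap is the branch $f(K,s)=K-1-\lfloor K/s\rfloor$ for $K\bmod s\geq 2$, which you explicitly defer to an unspecified ``resolvable-design-style'' schedule in which every size-$s$ subset is served exactly once. That is not the right mechanism and, as stated, your plain round-robin over $\lfloor K/s\rfloor$ disjoint groups only yields the denominator $\lfloor K/s\rfloor(s-1)$, which falls short of $K-1-\lfloor K/s\rfloor$ by exactly $(K\bmod s)-1\geq 1$, so the claimed $R_\textnormal{u}$ is not achieved by what you have written. The missing idea is a second fractional split: in each partition round, keep the $\lfloor K/s\rfloor$ disjoint size-$s$ groups \emph{and} form one additional group out of the $K\bmod s\geq 2$ leftover users; that small group runs the same inner-group XOR exchange for the subsets $\mathcal{S}$ of size $s$ that contain it, but operates on a separate fraction $1-\lambda_2$ of each mini-file while the full-size groups operate on the fraction $\lambda_2$ (so each subset is in fact served many times, by groups of two different sizes, on disjoint fragments). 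Choosing $\lambda_2$ to equalize the two parallel delays and eliminating it produces the combined denominator $\lfloor K/s\rfloor(s-1)+(K\bmod s)-1=K-1-\lfloor K/s\rfloor$. Without this construction (and the accompanying fragment-indexing that prevents repeated transmissions across the $N_{\mathcal{G}}$ reappearances of each group), the second sum of \eqref{eq:Ru} is not established.
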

	\begin{proof}
Here $R_{\emptyset}$ represents the transmission rate of sending content that are not cached by any user,  $R_\textnormal{s}$ and $R_\textnormal{u}$ represent the transmission rate caused by the server when sending data in case of no user cooperation,  and the transmission rate caused by each user when sending messages with the absence of server, respectively. 
 Eq. (\ref{eq:achievable_rate_decen}) balances out the communication loads assigned to the server and  users. 
 See more detailed proof  in Section \ref{Sec_Schemes_Decen}.
 \end{proof}

		\begin{Remark}
The upper bound in Theorem \ref{Thrm_UpperBound_Decen} is achieved by setting the  number of users that exactly  send signals in parallel as follows:
		\begin{equation} \label{eq:optimalAlphaD}
		\alpha_\textnormal{D}=\!\left\{
		\begin{aligned}
		&\alpha_{\max},   \ &\text{case 1}, \\
		&\lfloor \frac{K}{s} \rfloor,  \  &\text{case 2},\\ 
				&\lceil \frac{K}{s} \rceil,   \  &\text{case 3}.\\
		\end{aligned}
		\right.
		\end{equation}
If $\lceil \frac{K}{s} \rceil > \alpha_{\max}$, the  number of users that actually send data in parallel is  smaller than $\alpha_\text{max}$, indicating that alway letting more users parallelly send messages could cause higher transmission delay. This can be easily seen when $s=K-1$, and $\alpha_\text{max}= \lceil \frac{K}{s} \rceil$, as in this scenario  it's sufficient to let a single user $k\in[K]$ broadcast XOR symbol to all $K-1$ users in each transmission slot.  
		\end{Remark}
 
	{
		\begin{Remark}\label{Re_monocity}\label{Coro_Monotonicity}
	From the definitions of $T_\textnormal{decentral}$, $R_\textnormal{s}$, $R_\textnormal{u}$ and $R_{\emptyset}$,		it's easy to  obtain $R_\emptyset\leq T_\textnormal{decentral}\leq R_\textnormal{s}$, and $T_\textnormal{decentral}$ {decreases} as ${\alpha_\textnormal{max}}$ increases. 
	\begin{IEEEeqnarray}{rCl}\label{rateRemark}
	{T}_\textnormal{decentral}=
	 \left\{
	\begin{aligned}
	&\frac{R_\textnormal{s}R_\textnormal{u}}{R_\textnormal{s}+R_\textnormal{u}-R_{\emptyset}}, & R_\textnormal{u}\geq R_\emptyset,\\
	&R_\emptyset , &R_\textnormal{u}< R_\emptyset.
	\end{aligned}
	\right.
    \end{IEEEeqnarray}
    and $T_\textnormal{decentral}$ increases as $R_\textnormal{u}$ increases   if $R_\textnormal{u}\geq R_\emptyset$.
		\end{Remark}


		Due to the complex term $R_\text{u}$, the upper bound $T_\textnormal{decentral}$ in Theorem \ref{Thrm_UpperBound_Decen} is hard to evaluate.   Since  $T_\textnormal{decentral}$ is increasing as $R_\textnormal{u}$ increases as Remark \ref{Re_monocity} indicates,  substituting the following upper bound of $R_\textnormal{u}$ into \eqref{eq:achievable_rate_decen_whole}    provides an efficient way to evaluate $T_\textnormal{decentral}$.
		
	} 
	\begin{Corollary}\label{Coro_UpperBound_Decen}
		For memory size $0 \leq p \leq 1$, the upper bound of $R_\textnormal{u}$ is given below:
		\begin{itemize}
			\item $\alpha_{\textnormal{max}} = 1$ (shared link):
			\begin{IEEEeqnarray}{rCl}\label{eq: Shrd_UpperBound}
				R_\textnormal{u} \leq && \bar{R}_{\textnormal{u-s}} \triangleq \frac{1-p}{p}\left[1-\frac{5}{2}Kp\big(1-p\big)^{K-1} \nonumber\right. \\
				&& \hspace{8pt}  \left.-4\big(1-p\big)^K+\frac{3(1-(1-p)^{K+1})}{(K+1)p}\right];
			\end{IEEEeqnarray}
			\item  $\alpha_{\textnormal{max}} = \lfloor \frac{K}{2} \rfloor$:
			\begin{IEEEeqnarray}{rCl}\label{eq: Flex_UpperBound2}
				\hspace{-0pt}	R_\textnormal{u} &\leq& \bar{R}_{\textnormal{u-f}} \triangleq \frac{K(1-p)}{(K-1)}  \bigg[1-\big(1-p\big)^{K-1}  \nonumber \\
				&& \quad\!- \frac{2/p}{K\!-\!2}\big(1\!-\!(1\!-\!p)^K \!-\!Kp(1\!-\!p)^{K\!-\!1}\big)\bigg];\quad 
			\end{IEEEeqnarray}
			
			\item $1< \alpha_{\textnormal{max}} < \lfloor \frac{K}{2} \rfloor$:
			\begin{IEEEeqnarray}{rCl}\label{Semi-Flex-UpperBound}
				R_\textnormal{u} \leq \bar{R}_\textnormal{u} \triangleq \bar{R}_\textnormal{u-s}/\alpha_\textnormal{max} + \bar{R}_\textnormal{u-f}.
			\end{IEEEeqnarray}
		\end{itemize}
	\end{Corollary}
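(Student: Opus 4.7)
The plan is to treat the two summands of (\ref{eq:Ru}) separately and then combine them according to the value of $\alpha_\textnormal{max}$. A preliminary simplification aligns the three cases: when $\alpha_\textnormal{max}=1$ the cutoff $\lceil K/\alpha_\textnormal{max}\rceil = K$ reduces the second sum to its single $s=K$ term, which (since $f(K,K)=K-1$) fits exactly into the $\tfrac{s}{s-1}\binom{K}{s}p^{s-1}(1-p)^{K-s+1}$ pattern of the first sum, so $R_\textnormal{u}$ collapses into a single sum from $s=2$ to $K$; when $\alpha_\textnormal{max}=\lfloor K/2\rfloor$ the first sum is (essentially) empty and only the second sum survives; and for general $1<\alpha_\textnormal{max}<\lfloor K/2\rfloor$ the two partial sums have disjoint index ranges and can be bounded independently.

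For the shared-link case I would split $\tfrac{s}{s-1}=1+\tfrac{1}{s-1}$. The ``$1$'' piece pulls out $(1-p)/p$ and telescopes to the binomial tail $\tfrac{1-p}{p}\bigl[1-(1-p)^K-Kp(1-p)^{K-1}\bigr]$. For the ``$\tfrac{1}{s-1}$'' piece I would use the pointwise bound $\tfrac{1}{s-1}\leq \tfrac{3}{s+1}$ (valid for all $s\geq 2$, with equality at $s=2$) together with the Pascal-type identity $\tfrac{1}{s+1}\binom{K}{s}=\tfrac{1}{K+1}\binom{K+1}{s+1}$; after the index shift $t=s+1$ this converts the sum into a truncated binomial expansion of $(p+(1-p))^{K+1}$, which evaluates, after stripping the low-order terms $t\in\{0,1,2\}$, to a linear combination of $\tfrac{3(1-(1-p)^{K+1})}{(K+1)p}$, $(1-p)^K$, and $Kp(1-p)^{K-1}$. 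Summing the two pieces and refactoring $(1-p)/p$ should recover $\bar{R}_\textnormal{u-s}$ with the stated constants $\tfrac{5}{2}$ and $4$.

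For the flexible case I would first establish the uniform lower bound $f(K,s)\geq \tfrac{(K-1)(s-1)}{s}$ for all $s\geq 2$, separately on each branch of \eqref{eq:f(k,s)}: in the branch $K\bmod s<2$ use $\lfloor K/s\rfloor\geq (K-1)/s$, and in the branch $K\bmod s\geq 2$ use $\lfloor K/s\rfloor \leq (K-2)/s$. This converts $\tfrac{K\binom{K-1}{s-1}}{f(K,s)}$ into $\tfrac{Ks}{(K-1)(s-1)}\binom{K-1}{s-1}$, and splitting $\tfrac{s}{s-1}=1+\tfrac{1}{s-1}$ again gives, via the index shift $t=s-1$, a ``$1$'' part that evaluates exactly to $\tfrac{K(1-p)}{K-1}\bigl[1-(1-p)^{K-1}\bigr]$; the ``$\tfrac{1}{s-1}$'' part is then bounded by $\tfrac{1}{s-1}\leq\tfrac{2}{s}$ combined with $\tfrac{1}{s}\binom{K-1}{s-1}=\tfrac{1}{K}\binom{K}{s}$, producing a binomial tail of the form $[1-(1-p)^K-Kp(1-p)^{K-1}]$ multiplied by a constant that, after slight inflation by the factor $K/(K-2)$, lines up with the $\tfrac{2/p}{K-2}$ coefficient in $\bar{R}_\textnormal{u-f}$.

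Finally, for $1<\alpha_\textnormal{max}<\lfloor K/2\rfloor$ the argument is purely additive: the first partial sum (carrying the $1/\alpha_\textnormal{max}$ prefactor) is bounded by $\bar{R}_\textnormal{u-s}/\alpha_\textnormal{max}$ via the Case~1 argument applied to the truncated range $2\leq s\leq \lceil K/\alpha_\textnormal{max}\rceil-1$ (truncating only decreases the sum, so the bound still applies), while the second partial sum is bounded by $\bar{R}_\textnormal{u-f}$ via the Case~2 argument; adding yields \eqref{Semi-Flex-UpperBound}. I expect the main technical obstacle to be the bookkeeping of the low-order binomial terms generated by each index shift, and in particular verifying that after combining the uniform lower bound on $f(K,s)$ with the two ``$\tfrac{1}{s-1}\leq\tfrac{c}{s+\text{const}}$'' inequalities the numerical constants line up exactly with the stated $\tfrac{5}{2}$, $4$, and $\tfrac{2/p}{K-2}$, rather than merely agreeing up to an absolute constant.
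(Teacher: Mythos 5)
Your proposal is correct. For the only case the paper's appendix actually derives, $\alpha_{\textnormal{max}}=\lfloor K/2\rfloor$, your route follows the same overall strategy (lower-bound the denominator $f(K,s)$, peel off a main term, close the remainder with binomial identities) but via different elementary inequalities: the paper uses $f(K,s)\ge K-1-\tfrac{K}{s}$ together with the partial-fraction split $\tfrac{K}{K-1-K/s}=\tfrac{K}{K-1}+\tfrac{(K/(K-1))^2}{s-K/(K-1)}$ and a mediant inequality $\tfrac{a}{b}\le\tfrac{a+c}{b+c}$ to reach the coefficient $\tfrac{2K}{(K-1)(K-2)}$ of $\binom{K}{s}$, whereas your $f(K,s)\ge\tfrac{(K-1)(s-1)}{s}$ plus $\tfrac{1}{s-1}\le\tfrac{2}{s}$ and Pascal's identity gives the slightly tighter coefficient $\tfrac{2}{K-1}$, which you then legitimately inflate by $K/(K-2)$ to match the stated $\bar{R}_{\textnormal{u-f}}$. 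For the shared-link case the paper omits the derivation entirely; your chain ($\tfrac{s}{s-1}=1+\tfrac{1}{s-1}$, then $\tfrac{1}{s-1}\le\tfrac{3}{s+1}$ and $\tfrac{1}{s+1}\binom{K}{s}=\tfrac{1}{K+1}\binom{K+1}{s+1}$, then stripping the three low-order terms of the $(K+1)$-fold binomial expansion) does reproduce the constants $\tfrac52$ and $4$ exactly --- I checked the bookkeeping and the factored form $\tfrac{1-p}{p}[\,1-4(1-p)^K-\tfrac52Kp(1-p)^{K-1}+\tfrac{3(1-(1-p)^{K+1})}{(K+1)p}\,]$ comes out on the nose. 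Your additive treatment of $1<\alpha_{\textnormal{max}}<\lfloor K/2\rfloor$ (extend each truncated sum to the full range $2\le s\le K$, which only increases it term by term, then apply the two single-case bounds) is exactly what $\bar{R}_{\textnormal{u}}=\bar{R}_{\textnormal{u-s}}/\alpha_{\textnormal{max}}+\bar{R}_{\textnormal{u-f}}$ requires. The only points worth spelling out in a final write-up are the boundary reductions you flag as "(essentially)": at $\alpha_{\textnormal{max}}=1$ the $s=K$ term of the second sum equals $\tfrac{K}{K-1}p^{K-1}(1-p)$ and indeed matches the $\tfrac{s}{s-1}\binom{K}{s}$ pattern since $f(K,K)=K-1$, and at $\alpha_{\textnormal{max}}=\lfloor K/2\rfloor$ with $K$ odd the surviving $s=2$ term of the first sum, after the $1/\alpha_{\textnormal{max}}$ prefactor, coincides with the second-sum form $2Kp(1-p)^{K-1}$; both collapses you assert do hold.
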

	\begin{proof}
		See the  proof in Appendix \ref{App_Bound_Decen}.
	\end{proof}
	{
		
		Recall that the transmission delay achieved by the decentralized scheme without user cooperation in \textnormal{\cite{Decentralized}} is equal to $R_\textnormal{s}$ given in \eqref{eq:Rs}. We define the ratio between $T_\textnormal{decentral}$ and $R_\textnormal{s}$ as \emph{decentralized cooperation gain}:
		\begin{IEEEeqnarray}{rCl}\label{denceCoGain}
		G_\textnormal{decentral,c} \triangleq \max\{\frac{R_\emptyset}{R_\textnormal{s}},\frac{R_\textnormal{u}}{R_\textnormal{s}+R_\textnormal{u}-R_\emptyset}\},
\end{IEEEeqnarray}
		and $G_\textnormal{decentral,c}$ is in $[0,1]$ by Remark \ref{Re_monocity}. Similar to the centralized scenario, this gain arises from the coordination between users in the cooperation network.  Moreover, we also compare $T_\textnormal{decentral}$  with the transmission delay $(1-p)/p$, achieved by the decentralized scheme for D2D network in \cite{D2D}, and define the ratio between $R_\textnormal{s}$ and  $(1-p)/p$ as \emph{decentralized parallel gain}:
		\begin{IEEEeqnarray}{rCl}\label{dencePaGain}
		G_\textnormal{decentral,p} \triangleq G_\textnormal{decentral,c}\cdot\Big(1-(1-p)^K \Big),
\end{IEEEeqnarray}
	where $G_\textnormal{decentral,p} \in [0,1]$ arises from the parallel transmission between the server and the users. 
	 
	 We plot the decentralized cooperation gain and parallel gain for different types of cooperation networks 
	 in  Fig. \ref{fig:decen_gain} when   $N=20$ and  $K=10$. It can be seen that  $G_\textnormal{decentral,c}$ and  $G_\textnormal{decentral,p}$ in general are not monotonic functions of $M$. 
		Here $G_\textnormal{decentral,c}$ performs similarly to $G_\textnormal{central,c}$. When $M$ is small, the function $G_\textnormal{decentral,c}$ is monotonically decreasing from value 1 until it reaches the minimum. For larger $M$, the function $G_\textnormal{decentral,c}$ turns to  monotonically increase.  The reason for this phenomenon is that in the decentralized scenario, as $M$ increases, the proportion of subfiles that are not cached by any user and must be sent by the server is decreasing. Thus, there are more subfiles that can be sent parallelly by the user cooperation as $M$ increases. In the meanwhile, the decentralized  scheme in \cite{Decentralized} offers an additional multicasting gain. Therefore, we need to tradeoff between these two gains in order to reduce the   transmission delay.

		The function $G_\textnormal{decentral,p}$ behaves differently as it monotonically increases when $M$ is small. After reaching the maximal value, the function $G_\textnormal{decentral,p}$ decreases monotonically until it meets the local minimum\footnote{The abnormal bend in parallel gain when $\alpha_{\max}=\lfloor \frac{K}{2} \rfloor$ come from a balance effect between the $G_\textnormal{decentral,c}$ and $1-(1-p)^K$ in \eqref{dencePaGain}.}, then $G_\textnormal{decentral,p}$ turns into a monotonic increasing function for large $M$. Similar to the centralized case, as $M$ increases, the impact of parallel transmission among the server and users becomes smaller since more data can be transmitted by the users.
				
		


		\begin{Theorem}\label{Thrm_Gap_Decen}
			Define $p\triangleq M/N$  and $p_\textnormal{th} \triangleq 1-\big(\frac{1}{K+1}\big)^\frac{1}{K-1}$, which tends to 0 as $K$ tends to infinity. For memory size $0 \leq M \leq N$,
			\begin{itemize}
				\item if  $\alpha_{\textnormal{max}} = 1$ (shared link),
				\[\frac{{T}_\textnormal{decentral}}{T^*} \leq 24.\]
				
				\item if $\alpha_{\textnormal{max}} = \lfloor \frac{K}{2} \rfloor$,
				\begin{IEEEeqnarray*}{rCl}\label{eq:Gap_Flex}
					\frac{{T}_\textnormal{decentral}}{T^*} \leq\left\{
					\begin{aligned}
						&\max\left\{6, 2K\Big(\frac{2K}{2K+1}\Big)^{K-1}\right\}, &p < p_\textnormal{th},\quad \\
						&6, & p\geq p_\textnormal{th}.\quad
					\end{aligned}
					\right.
				\end{IEEEeqnarray*}
				\item if $1 < \alpha_{\textnormal{max}} < \lfloor \frac{K}{2} \rfloor$,
				\begin{IEEEeqnarray*}{rCl}\label{eq:Gap_Flex}
					\frac{{T}_\textnormal{decentral}}{T^*}\leq\left\{
					\begin{aligned}
						&\max\left\{ \min \left\{12\Big(1+\alpha_\textnormal{max}\Big),\right.\right.\\ &\hspace{32pt}
						\left.2K\Big(\frac{2K}{2K+1}\Big)^{K-1}\Big\}\right\}, & p < p_\textnormal{th},\\
						&77, & p \geq p_\textnormal{th}.
					\end{aligned}
					\right.
				\end{IEEEeqnarray*}
			\end{itemize}
			
		\end{Theorem}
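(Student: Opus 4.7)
The plan is to combine the explicit bounds on $T_\textnormal{decentral}$ from Corollary \ref{Coro_UpperBound_Decen} with the lower bound from Theorem \ref{Thrm_LowerBound}, and do a case analysis on $\alpha_\textnormal{max}$ and on whether $p$ is above or below the threshold $p_\textnormal{th}$. First I would replace $R_\textnormal{u}$ in the formula \eqref{eq:achievable_rate_decen} by $\bar{R}_\textnormal{u-s}$, $\bar{R}_\textnormal{u-f}$, or $\bar{R}_\textnormal{u}$, depending on which subcase of $\alpha_\textnormal{max}$ we are in; since $T_\textnormal{decentral}$ is monotone increasing in $R_\textnormal{u}$ by Remark \ref{Coro_Monotonicity}, this gives a valid upper bound that has the advantage of being a closed-form rational function of $p$. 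This reduces each case to bounding a ratio of two concrete expressions of the form $K(1-p)^K$, $(1-p)/p \cdot (1-(1-p)^K)$, etc., against one of the three lower bound terms in \eqref{eq:cutset2}.

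Next I would split according to the regime of $p$. For $p\geq p_\textnormal{th}=1-(1/(K+1))^{1/(K-1)}$, the factor $(1-p)^K$ is at most $1/(K+1)$, so $R_\emptyset\leq K/(K+1)\leq 1$ and the dominant contribution to $T_\textnormal{decentral}$ comes from the harmonic-mean term $R_\textnormal{s}R_\textnormal{u}/(R_\textnormal{s}+R_\textnormal{u}-R_\emptyset)$, which I would compare to the cut-set bound $\tfrac{1}{2}(1-M/N)$. In this regime the constants $24$, $6$ and $77$ should emerge after straightforward algebra on the closed-form bounds in \eqref{eq: Shrd_UpperBound}--\eqref{Semi-Flex-UpperBound}. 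For $p<p_\textnormal{th}$, the expansion of $(1-p)^K$ stays bounded away from $0$, and I would lean on the third lower-bound term $\max_s(s-sM/\lfloor N/s\rfloor)/(1+\alpha_\textnormal{max})$ by making an explicit choice of $s$ (e.g., $s=\lceil 1/p\rceil$ when this lies in $[K]$, and $s=K$ otherwise) so that $s-sM/\lfloor N/s\rfloor$ is comparable to $K(1-p)^K=R_\emptyset$, which in turn upper bounds $T_\textnormal{decentral}$ up to a factor related to $R_\textnormal{s}/R_\emptyset$.

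A separate subcase to handle is the extreme small-$p$ regime where $R_\emptyset$ itself is close to $K$; here I would use the naive lower bound $T^*\geq \tfrac{1}{2}(1-M/N)\geq \tfrac{1}{2}(1-p_\textnormal{th})$ and note that $1-p_\textnormal{th}=(1/(K+1))^{1/(K-1)}$, which together with $T_\textnormal{decentral}\leq K$ yields the $2K(2K/(2K+1))^{K-1}$ branch appearing in the second and third cases. The flexible-network case $\alpha_\textnormal{max}=\lfloor K/2\rfloor$ should be the cleanest because $\bar{R}_\textnormal{u-f}$ already carries an extra $1/(K-1)$ factor, whereas the hybrid case $1<\alpha_\textnormal{max}<\lfloor K/2\rfloor$ inherits both components of $\bar{R}_\textnormal{u}$ and is why a larger constant such as $77$ is needed there.

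The main obstacle is the $p<p_\textnormal{th}$ regime for $\alpha_\textnormal{max}=\lfloor K/2\rfloor$ and for $1<\alpha_\textnormal{max}<\lfloor K/2\rfloor$: the bound $2K(2K/(2K+1))^{K-1}$ is not a universal constant but grows like $2K/\sqrt{e}$, so one must argue that in this narrow window the information-theoretic lower bound $\tfrac{1}{2}(1-M/N)$ together with $T_\textnormal{decentral}\leq K$ is actually what controls the ratio, and that in the complementary subregime one still obtains the constant $6$ (respectively $\min\{12(1+\alpha_\textnormal{max}),\ldots\}$) from the cut-set argument. Handling the transition between these two estimates cleanly, and keeping track of the floors $\lfloor K/s\rfloor$ and of the integer constraint $s\in[K]$ when choosing $s$ in the lower bound, will be the most delicate part; the remaining steps are routine algebraic manipulations on the polynomials in $p$ and $(1-p)^K$.
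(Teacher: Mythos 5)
Your high-level architecture does track the paper's: substitute the closed-form bounds of Corollary~\ref{Coro_UpperBound_Decen} into \eqref{eq:achievable_rate_decen} via the monotonicity in Remark~\ref{Re_monocity}, split on $p\gtrless p_\textnormal{th}$, handle the branch $T_\textnormal{decentral}=R_\emptyset$ separately, and get the constant $6$ for $\alpha_{\max}=\lfloor K/2\rfloor$ from the cut-set term $\tfrac12(1-M/N)$ (there $\bar R_\textnormal{u-f}=O(1-p)$, so this works). However, there is a genuine gap in the shared-link and hybrid cases: the cut-set term cannot produce the constants $24$ and $77$. For $\alpha_{\max}=1$, both $R_\textnormal{s}$ and $\bar R_\textnormal{u-s}$ scale like $(1-p)/p$, so their ratio to $\tfrac12(1-p)$ is $\Theta(1/p)$, which at $p=p_\textnormal{th}\approx \ln(K+1)/(K-1)$ is of order $K/\log K$ — not a constant. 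The paper's proof hinges on an ingredient your plan never names: the order-optimality bound $R_\textnormal{s}\leq 12\max_{s\in[K]}\big(s-sM/\lfloor N/s\rfloor\big)$ from \cite[Appendix B]{Decentralized}, which combined with the third term of \eqref{eq:cutset2} gives $R_\textnormal{s}/T^*\leq 12(1+\alpha_{\max})$ for \emph{all} $p$, hence $T_\textnormal{decentral}/T^*\leq R_\textnormal{s}/T^*\leq 24$ when $\alpha_{\max}=1$, and together with $\bar R_\textnormal{u-s}\leq 4R_\textnormal{s}$ yields $\min\{48(1+\alpha_{\max})/\alpha_{\max}+21,\,12(1+\alpha_{\max})\}\leq 77$ in the hybrid case. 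Without this lemma the stated constants are unreachable. (A smaller omission: you assert the harmonic-mean term dominates for $p\geq p_\textnormal{th}$; this is exactly Lemma~\ref{Coro_Threshold_Decen}, $R_\textnormal{u}\geq R_\emptyset$ on $[p_\textnormal{th},1]$, which needs a convexity/secant argument and is what makes $p_\textnormal{th}$ the right threshold.)

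Your derivation of the $2K\big(\tfrac{2K}{2K+1}\big)^{K-1}$ branch also falls short quantitatively. Pairing $T_\textnormal{decentral}\leq K(1-p)$ with $T^*\geq\tfrac12(1-p)$ gives only $2K$, and evaluating at $p_\textnormal{th}$ as you propose gives $2K(K+1)^{1/(K-1)}$, both strictly larger than the claimed $2K\big(\tfrac{2K}{2K+1}\big)^{K-1}\approx 2K e^{-1/2}$ (e.g.\ $\approx 26$ versus $\approx 13$ at $K=10$). The paper gets the sharper constant by a balancing argument specific to $R_\emptyset=K(1-p)^K$: it compares $R_\emptyset$ against the two lower bounds $R_1^*=\tfrac12(1-p)$ and $R_2^*=\max_s(s-2sKp)$, shows $R_\emptyset/R_1^*$ is decreasing and $R_\emptyset/R_2^*$ is increasing in $p$, and evaluates at the crossing point $p^*=1/(2K+1)$. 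Your choice $s=\lceil 1/p\rceil$ in the third lower-bound term, besides mixing up the direction of the inequality $R_\emptyset\leq T_\textnormal{decentral}$, does not reproduce this optimization; as written the proposal proves a strictly weaker bound than the theorem states.
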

		\begin{proof}
			See the proof in Appendix  \ref{App_Gap_Decen}.
		\end{proof}
		
			\begin{figure}
			\centering
			\includegraphics[width=0.5 \textwidth,scale=0.2]{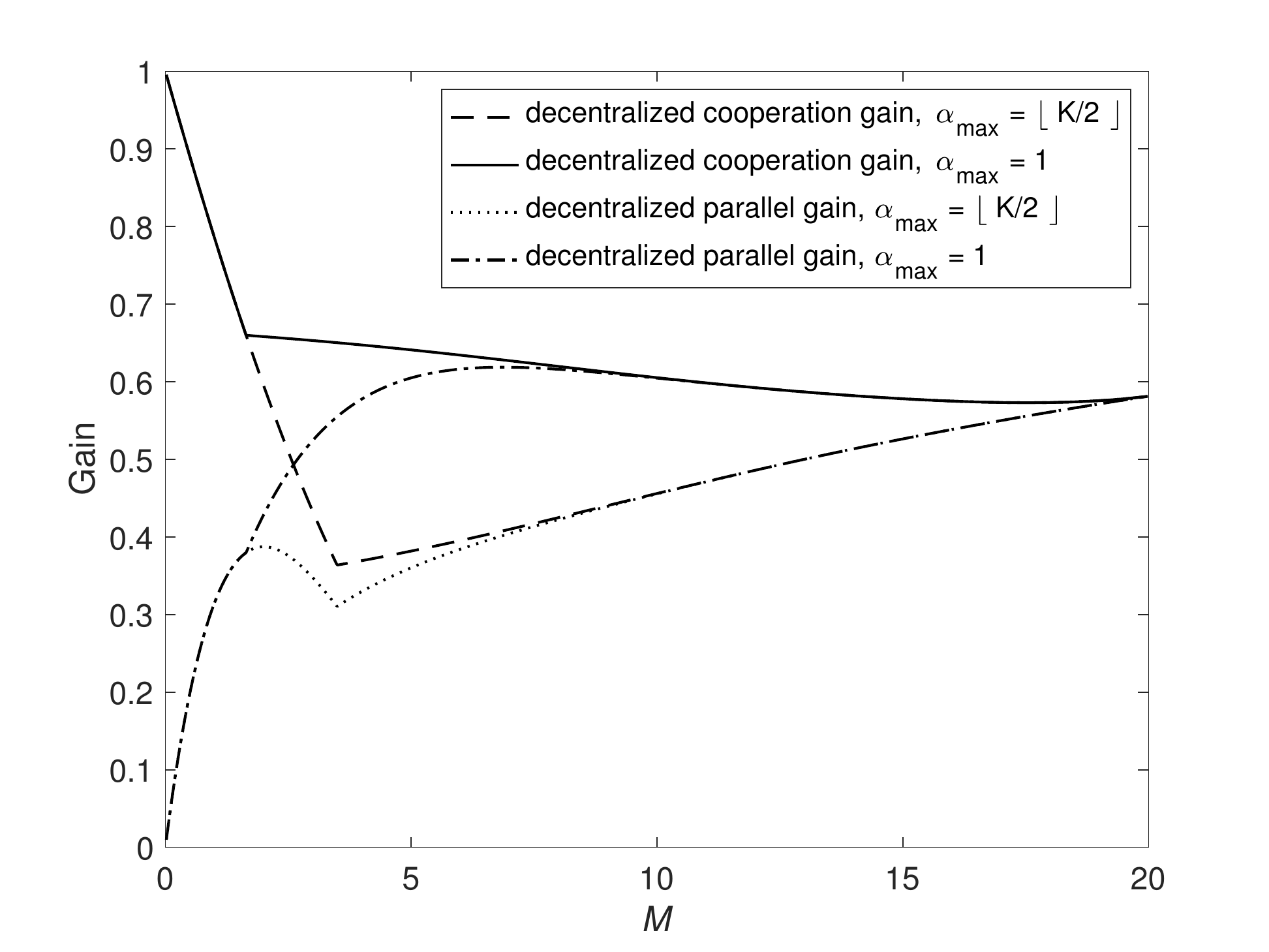}
			\caption{Decentralized cooperation gain and parallel gain when $N=20$ and $K=10$.}
			\label{fig:decen_gain}
		\end{figure}
		\begin{figure}
			\centering
			\includegraphics[width=0.5\textwidth,scale=0.5]{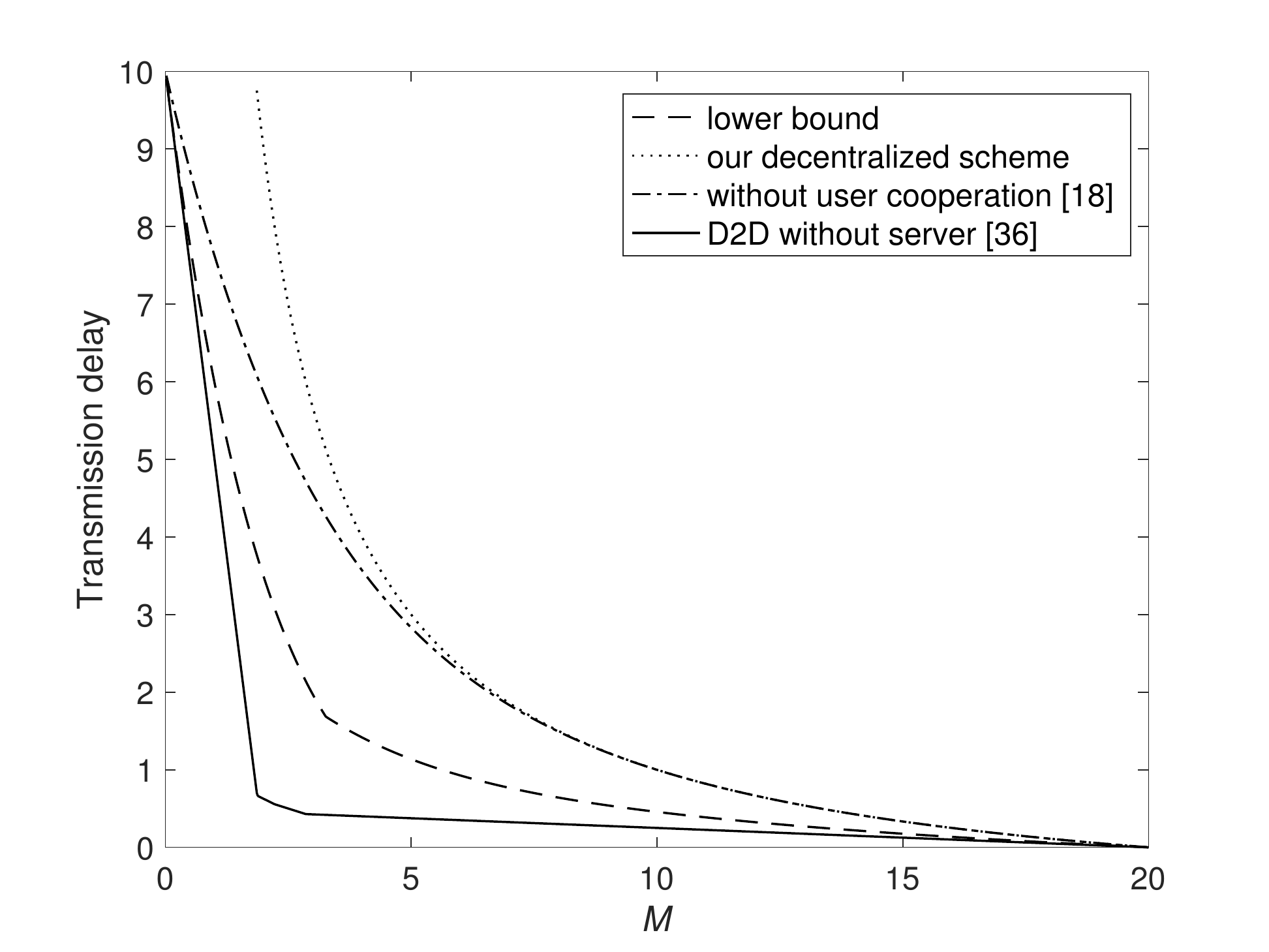}
			\caption{Transmission delay when $N=20$, $K=10$ and $\alpha_{\max}=3$. The upper bounds  are achieved under the decentralized random caching scenario.
			}.
			\label{fig:decen_rate}
		\end{figure}

	{	Fig. \ref{fig:decen_rate}   plots the lower bound  in \eqref{eq:cutset} and  	upper bounds achieved by various decentralized coded caching schemes, including our scheme, the scheme in \cite{Decentralized} which considers the case without user cooperation, and the scheme in \cite{D2D} which considers the case without server.

	}						
		
		\section{Coding Scheme Under Centralized  Data Placement} \label{Sec_Schemes}
		In this section, 
				we describe a novel centralized coded caching scheme for arbitrary $K$, $N$ and $M$ such that $t= {KM}/{N}$ is a positive integer.  When $t$ is not an integer,  	we can use a resource sharing scheme as in  \cite{Centralized}. 
		
		Unlike the setting in   \cite{D2D} where all   users are fixed in a mesh network, leading to an unchanging group partition during the delivery phase,   our schemes allows   users' group partition to  dynamically change, and each user to communicate with any set of  users through network routing. These  differences breaks the limitation on  users' cache size $M$ and requires novel   design on   data delivery.  Besides, our model  involves tradeoff between the  communication loads at the server and users, which is not  existing in \cite{D2D}.  
		
		 
		   We first use an illustrative example to show how we group users, split files and delivery data, and then present our generalized centralized coding caching scheme. 
		\subsection{An Illustrative Example}
			Consider a network consisting of $K=6$ users with cache size $M=4$, and a library of $N = 6$ files. Thus $t=KM/N=4$. Let $\alpha=2$, that is to say we separate the 6 users into 2 groups of equal size. The choice of the groups
			is not unique. 
			We choose an integer $L_1=2$ which satisfies $\frac{K\binom{K-1}{t}L_1}{\min\{\alpha(\lfloor K/\alpha \rfloor-1),t\}}=15$ is an integer. \footnote{According to \eqref{eq:optimalAlpha} and \eqref{eq_Acondition}, one optimal choice could be ($\alpha=1$, $L_1=4$, $\lambda=5/9$), here we choose  ($\alpha=2$, $L_1=2$, $\lambda=1/3$) for simplicity, and also in order to demonstrate that even with a suboptimal choice, our scheme still outperforms that in \cite{Centralized} and \cite{D2D}. } 
			Split each file $W_n$, for $n=1,\ldots,N$, into $3\binom{6}{4}=45$ subfiles:
			\[W_n = (W^l_{n,\mathcal{T}}: l\in[3],\mathcal{T}\subset[6], |\mathcal{T}|=4).\]
			
			
			We list all the requested subfiles uncached by the users as follows: for $l=1,2,3$,
			\begin{equation*}
			\begin{aligned}
			&W_{d_1,\{2345\}}^{l},W_{d_1,\{2346\}}^{l},W_{d_1,\{2356\}}^{l},W_{d_1,\{2456\}}^{l},W_{d_1,\{3456\}}^{l};\\ &W_{d_2,\{1345\}}^{l},W_{d_2,\{1346\}}^{l},W_{d_2,\{1356\}}^{l},W_{d_2,\{1456\}}^{l},W_{d_2,\{3456\}}^{l};\\ &W_{d_3,\{1245\}}^{l},W_{d_3,\{1246\}}^{l},W_{d_3,\{1256\}}^{l},W_{d_3,\{1456\}}^{l},W_{d_3,\{2456\}}^{l};\\ &W_{d_4,\{1235\}}^{l},W_{d_4,\{1236\}}^{l},W_{d_4,\{1256\}}^{l},W_{d_4,\{1356\}}^{l},W_{d_4,\{2356\}}^{l};\\ &W_{d_5,\{1234\}}^{l},W_{d_5,\{1236\}}^{l},W_{d_5,\{1246\}}^{l},W_{d_5,\{1346\}}^{l},W_{d_5,\{2346\}}^{l};\\ &W_{d_6,\{1234\}}^{l},W_{d_6,\{1235\}}^{l},W_{d_6,\{1245\}}^{l},W_{d_6,\{1345\}}^{l},W_{d_6,\{2345\}}^{l}.
			\end{aligned}
			\end{equation*}
			
			The users can finish the transmission in different partitions.  Table \ref{table_time} shows one kind of the partition for example and explains how the users send the requested subfiles with superscript $l=1,2$. 
			\begin{table}  
				\centering
				\caption{Subfiles sent by users in different partition,  $l=1,2$}
				\label{table_time}
				\begin{tabular}{cc}  	
					\toprule
					$\{1,2,3\}$&$\{4,5,6\}$	\\
					user 2:  $W_{d_1,\{2345\}}^{1}\!\oplus\! W_{d_3,\{1245\}}^{1}$&user 5: $W_{d_4,\{2356\}}^{1}\!\oplus\! W_{d_6,\{2345\}}^{1}$\\
					user 2:  $W_{d_1,\{2346\}}^{1}\!\oplus\! W_{d_3,\{1246\}}^{1}$&user 5: $W_{d_4,\{1256\}}^{1}\!\oplus\! W_{d_6,\{1245\}}^{1}$\\
					user 1:  $W_{d_2,\{1346\}}^{1}\!\oplus\! W_{d_3,\{1256\}}^{1}$&user 4: $W_{d_5,\{2346\}}^{1}\!\oplus\! W_{d_6,\{1345\}}^{1}$\\
					user 3:  $W_{d_1,\{2356\}}^{1}\!\oplus\! W_{d_2,\{1356\}}^{1}$&user 6: $W_{d_4,\{1356\}}^{1}\!\oplus\! W^1_{d_5,\{1346\}}$\\
					\midrule
					$\{1,2,4\}$&$\{3,5,6\}$\\
					user 2: $W_{d_1,\{2456\}}^{l}\!\oplus\! W_{d_4,\{1235\}}^{l}$&user 5: $W_{d_3,\{1456\}}^{l}\!\oplus\! W_{d_6,\{1235\}}^{l}$\\
					\midrule
					$\{1,4,6\}$&$\{2,3,5\}$\\
					user 6:  $W_{d_1,\{3456\}}^{l}\!\oplus\! W_{d_4,\{1236\}}^{l}$&user 3: $W_{d_2,\{3456\}}^{l}\!\oplus\! W_{d_5,\{1234\}}^{l}$\\
					\midrule	
					$\{1,2,5\}$&$\{3,4,6\}$\\
					user 1: $W_{d_2,\{1456\}}^{l}\!\oplus\! W_{d_5,\{1236\}}^{l}$&user 4:  $W_{d_3,\{2456\}}^{l}\!\oplus\! W_{d_6,\{1234\}}^{l}$\\
					\midrule
					$\{1,2,3\}$&$\{4,5,6\}$\\
					user 3: $W_{d_1,\{2345\}}^{2}\!\oplus\! W_{d_2,\{1345\}}^{2}$&user 4: $W_{d_5,\{2346\}}^{2}\!\oplus\! W_{d_6,\{2345\}}^{2}$\\
					user 3: $W_{d_1,\{2346\}}^{2}\!\oplus\! W_{d_2,\{1346\}}^{2}$&user 4: $W_{d_5,\{1246\}}^{2}\!\oplus\! W_{d_6,\{1245\}}^{2}$\\		  
					user 2: $W_{d_1,\{2356\}}^{2}\!\oplus\! W_{d_3,\{1245\}}^{2}$&user 5: $W_{d_4,\{1356\}}^{2}\!\oplus\! W_{d_6,\{1345\}}^{2}$\\
					user 1: $W_{d_3,\{1246\}}^{2}\!\oplus\! W_{d_2,\{1356\}}^{2}$&user 6: $W_{d_4,\{1256\}}^{2}\!\oplus\! W_{d_5,\{1346\}}^{2}$\\
					user 1: $W_{d_3,\{1256\}}^{2}\!\oplus\! W_{d_2,\{1345\}}^{1}$&user 6: $W_{d_5,\{1246\}}^{1}\!\oplus\! W_{d_4,\{2356\}}^{2}$\\
					\bottomrule  
				\end{tabular}
			\end{table}
			In Table \ref{table_time}, all the users send an XOR symbol of subfiles with superscript  $l=1$  at the beginning. Note that the subfiles  $W_{d_2,\{1345\}}^1$ and $W_{d_5,\{1246\}}^1$ are left since $\frac{K\binom{K-1}{t}}{\alpha(\lfloor K/\alpha \rfloor-1)}$ is not an integer. Similarly, for subfiles with  $l=2$,   $W_{d_3,1256}^2$ and $W_{d_4,2356}^2$ are not sent to user 3 and 4. In the last transmission,  user 1 delivers the XOR message $W_{d_3,\{1256\}}^{2}\oplus W_{d_2,\{1345\}}^{1}$  to user 2 and 3, and user 6 multicasts $W_{d_5,\{1246\}}^{1}\oplus W_{d_4,\{2356\}}^{2}$ to user 5 and 6. The  transmission rate at the users is $R_2=\frac{1}{3}.$
			
			For the remaining subfiles with $l=3$, the server delivers them   in the same way as in \cite{Centralized}. Specifically, it sends symbols $\oplus_{k\in \mathcal{S}} W_{d_k,\mathcal{S}\backslash\{k\}}^{3}$,
			for all $\mathcal{S}\subseteq[K]:|\mathcal{S}|=5$. Thus the rate sent by the server is $R_1=\frac{2}{15}$,
			and the transmission delay $ T_\textnormal{central}=\max\{R_1,R_2\}=\frac{1}{3}$,
			which is less than the delay achieved by the centralized coded caching scheme without user cooperation  $K\big(1-\frac{M}{N}\big)\frac{1}{1+t}=\frac{2}{5}$, and by the centralized coded caching scheme without server   $\frac{N}{M}\big(1-\frac{M}{N}\big)=\frac{1}{2}$.


		
		\subsection{The Gereralized Centralized Coding Caching Scheme}
		In the placement phase, each file is first split into $\binom{K}{t}$ subfiles of equal size, and then each subfile is split into two non-overlaping mini-files whose size could be unequal. More specifically, we split $W_n$ into subfiles as follows:
		\begin{IEEEeqnarray}{rCl}
			W_n = \left(W_{n,\mathcal{T}}:\mathcal{T}\subset [K], |\mathcal{T}|=t\right).
		\end{IEEEeqnarray}
	User $k$ caches all the subfiles when $k \in \mathcal{T}$ for all $n=1,...,N$, occupying  cache size of $MF$ bits.	Then split each subfile $W_{n,\mathcal{T}}$ into two  mini-files 
		\[W_{n,\mathcal{T}} = \Big(W_{n,\mathcal{T}}^{\text{s}}, W_{n,\mathcal{T}}^{\text{u}}\Big )\]
		where  the mini-files $W_{n,\mathcal{T}}^\textnormal{s}$  and $W_{n,\mathcal{T}}^\textnormal{u}$ will be sent by the server and users, respectively, and 
			\begin{equation}
		\begin{aligned}
		|W_{n,\mathcal{T}}^\textnormal{s}| = \lambda \cdot |W_{n,\mathcal{T}}|=\lambda\cdot\frac{F}{\binom{K}{t}},\\
		|W_{n,\mathcal{T}}^\textnormal{u}| = (1-\lambda) \cdot |W_{n,\mathcal{T}}|=(1-\lambda)\cdot\frac{F}{\binom{K}{t}},
		\end{aligned}
		\end{equation}
		with 
		\begin{IEEEeqnarray}{rCl}\label{eq_Acondition}
				 {\lambda}=\frac{1+t}{{\alpha\min\{\lfloor  \frac{K}{\alpha}\rfloor-1,t\}}+1+t}.
			\end{IEEEeqnarray}
			For each  mini-file $W_{n,\mathcal{T}}^\textnormal{u}$, split it into $L_1$   pico-files  of equal size $(1-\lambda)\cdot\frac{F}{L_1\binom{K}{t}}$, i.e., \[W_{n,\mathcal{T}}^{\textnormal{u}}=\left(W_{n,\mathcal{T}}^{\textnormal{u},1},\ldots,W_{n,\mathcal{T}}^{\text{u},L_1}\right),\]where $L_1$ satisfies	
			\begin{IEEEeqnarray}{rCl}\label{eq_Lcondition}
				&&\frac{K\cdot\binom{K-1}{t}\cdot L_1}{\alpha\min\{\lfloor  \frac{K}{\alpha}\rfloor-1,t\}}\in \mathbb{Z}^+.
			\end{IEEEeqnarray}
		As we will see later, condition  \eqref{eq_Acondition} ensures that   communication loads can be optimally allocated  at the server and the users, and \eqref{eq_Lcondition} ensures that the number of subfiles is large enough to   maximum multicast gain when user sending data. 

		

		In the delivery phase, each user $k$ requests file $W_{d_k}$. The requests vector $\mathbf{d}$ is informed by  the server and all the users. Note that different parts of file $W_{d_k}$ have been stored in the users' caches, and  thus the uncached parts of $W_{d_k}$ can   be sent by the server and users.  
		Subfiles
		\[\left(W_{d_k,\mathcal{T}}^{\text{u},1},\ldots,W_{d_k,\mathcal{T}}^{\text{u},L_1}:\mathcal{T}\subset [K], |\mathcal{T}|=t, k\notin \mathcal{T}\right)\]   are requested  by user $k$ and will be sent by the users.	
		Subfiles  \[\left(W_{d_k,\mathcal{T}}^{\text{s}} :\mathcal{T}\subset [K], |\mathcal{T}|=t, k\notin \mathcal{T}\right)\] are requested  by user $k$  and  will be sent by the server.

		
		First consider the subfiles sent by the users. In order to create multicast opportunities among users, we partition the $K$ users into $\alpha$ groups of equal size: \[\mathcal{G}_1,\ldots,\mathcal{G}_{\alpha},\] where for $i,j=1,\ldots,\alpha$,
		$\mathcal{G}_i\subseteq [K]: |\mathcal{G}_i|={\lfloor {K}/{\alpha} \rfloor}$, and  $\mathcal{G}_i\cap \mathcal{G}_j=\emptyset$,~\textnormal{if $i\neq j$}. In each group $\mathcal{G}_i$, one of $\lfloor {K}/{\alpha} \rfloor$ users plays the role of server and  sends symbols based on its cached contents to the remaining $(\lfloor {K}/{\alpha} \rfloor-1)$ users in the group. 
		
		Focus on a group $\mathcal{G}_i$ and a set $\mathcal{S}\subset[K]:|\mathcal{S}|=t+1$.  If $\mathcal{G}_i\subseteq\mathcal{S}$, then all nodes in $\mathcal{G}_i$ share  subfiles 
$$(W^{\text{u},l}_{n,\mathcal{T}}:l\in[L_1],n\in[N],\mathcal{G}_i\subseteq\mathcal{T},|\mathcal{T}|=t).$$
In this case,  user  $k\in\mathcal{G}_i$ sends an XOR symbol that contains the requested subfiles useful for all remaining $\lfloor K/\alpha \rfloor -1$ users in $\mathcal{G}_i$. 
If $\mathcal{S}\subseteq \mathcal{G}_i$, then the nodes in $\mathcal{S}$ share  subfiles  $$(W^{\text{u},l}_{n,\mathcal{T}}:l\in[L_1],n\in[N],\mathcal{T}\subset\mathcal{S},|\mathcal{T}|=t).$$ In this case,  user $k\in\mathcal{S}$ sends an XOR symbol that contains the requested subfiles for all remaining $t$ users in $\mathcal{S}$.  Other groups perform the similar steps and concurrently deliver the remaining requested subfiles to other users.
		
		By changing group partition and performing the delivery strategy described above, we can   send all the requested subfiles 
		\begin{IEEEeqnarray}{rCl}\label{eq_userDfiles}
			(W_{d_k,\mathcal{T}}^{\text{u},1},\ldots,W_{d_k,\mathcal{T}}^{\text{u},L_1}:\mathcal{T}\subset [K], |\mathcal{T}|=t, k\notin \mathcal{T})^K_{k=1}
		\end{IEEEeqnarray}
		to the  users. 
		
		Since $\alpha$ groups  send  signals in a parallel manner ($\alpha$ users  can concurrently deliver contents), and  each user in a group delivers a symbol containing $\min\{ \lfloor K/\alpha \rfloor -1, t\}$ non-repeating pico-files requested by other users, in order to send all requested subfiles in \eqref{eq_userDfiles}, we need  to send in total
		\begin{IEEEeqnarray}{rCl}\label{eq_centraTimes}
			\frac{K\cdot\binom{K-1}{t}\cdot L_1}{\alpha\min\{\lfloor  \frac{K}{\alpha}\rfloor-1,t\}}
		\end{IEEEeqnarray}
		XOR symbols,   each  of  size $ \frac{1-\lambda}{\binom{K}{t} }F$ bits. Notice that  $L_1$ is chosen according to \eqref{eq_Lcondition},  ensuring that   \eqref{eq_centraTimes} equals to an integer. Thus, we obtain $R_1$ as 
		\begin{IEEEeqnarray}{rCl}\label{eq_R2}
			R_2 &=&  \frac{KL_1\cdot\binom{K-1}{t}}{\alpha\min\{\lfloor  \frac{K}{\alpha}\rfloor-1,t\}}\cdot \frac{1-\lambda}{L_1\binom{K}{t} } \nonumber\\
			&=&K\Big(1\!-\!\frac{M}{N}\Big)\frac{1}{1\!+\!t\!+\!{\alpha\min\{\lfloor  \frac{K}{\alpha}\rfloor\!-\!1,t\}}},
		\end{IEEEeqnarray}
	where the last equality holds by \eqref{eq_Acondition}.	
		



		
		Now we describe the delivery of the subfiles sent by the server. Apply the  delivery strategy as in \cite{Centralized}, i.e., the server broadcasts $$\oplus_{k\in \mathcal{S}} W_{d_k,\mathcal{S}\backslash\{k\}}^{\text{s}}
		$$
		for all   $\mathcal{S}\subseteq[K]:|\mathcal{S}|=t+1$.		We obtain  the  transmission rate sent of the server 
		\begin{IEEEeqnarray}{rCl}\label{eq_R1}
		R_1&=&  \lambda \cdot K\left(1-\frac{M}{N}\right)\cdot \frac{1}{1+t}\nonumber\\
&=&	K\Big(1\!-\!\frac{M}{N}\Big)\frac{1}{1\!+\!t\!+\!{\alpha\min\{\lfloor  \frac{K}{\alpha}\rfloor\!-\!1,t\}}}.
\end{IEEEeqnarray}
From \eqref{eq_R2} and \eqref{eq_R1}, we can see that the choice $\lambda$ in    \eqref{eq_Acondition} guarantees equal   communication loads at the server and  users. 
		Since the server and users transmit the signals simultaneously, the transmission delay of the whole network is the maximum between $R_1$ and $R_2$, i.e., $ T_\textnormal{central}=\max\{R_1,R_2\}=\frac{K(1\!-\! {M}/{N})}{1\!+\!t\!+\!{\alpha\min\{\lfloor  {K}/{\alpha}\rfloor\!-\!1,t\}}}$, for $\alpha\in[\alpha_{\max}]$.

		 	\section{Coding Scheme Under Decentralized  Data Placement} \label{Sec_Schemes_Decen}
		In this section, we present the decentralized coded caching scheme with user cooperation where the identities of active users stay unknown to the server and the server has no control over what contents users will prefetch the library during the data placement phase. 		
		To  combine the decentralized coded caching with user cooperation,	there are two main challenges:
	\begin{itemize}
		\item  Given a group partition $\mathcal{G}_1,\ldots,\mathcal{G}_\alpha $, in order to achieve the maximum multicast gain for each group $\mathcal{G}_j$, $\forall j\in[\alpha]$, one user  in  group $\mathcal{G}_j$ should broadcast a coded symbol consisting of $|\mathcal{G}_j|\!-\!1$ useful subfiles required by the remaining users in  group $\mathcal{G}_j$. However, due to the decentralized placement phase, each user uniformly chooses $MF/N$ bits from each file at random, resulting in  subfiles of  distinct sizes  cached by different numbers of users. In this way,  users' demands cannot be satisfied if the group size $\mathcal{G}_j$ is fixed like the centralized scheme.  Thus, the users' partitioning group sizes should change dynamically according to the varying file sizes.
		\item The sizes of partitioning groups should traverse the set $\{2,\ldots,K\}$ as mentioned above, resulting in a  dynamical  cooperation  gain. To achieve the optimal transmission delay, we need to  efficiently allocate the communication loads at the server and the users and fully exploit the multicast gain,  cooperation opportunity among users, and  parallel transmission between the server and users.
\end{itemize}

We first use an illustrative example to show how we group users, split data and delivery data, and then present our generalized deceneralized   coding caching scheme. 
\subsection{An Illustrative Example}
		 Consider a cache-aided network consisting of $K=7$ users. When using the decentralized data placement strategy, the subfiles cached by user $k$ can be written as
		\begin{IEEEeqnarray}{rCl} 
			\Big(W_{n,\mathcal{T}}: n \in [N], k \in \mathcal{T}, \mathcal{T} \subseteq [7]\Big).
		\end{IEEEeqnarray}
		We focus on the delivery of subfiles $W_{n,\mathcal{T}}: n \in [N], k \in \mathcal{T}, |\mathcal{T} |=s=4$, i.e., each subfiles are stored by $s=4$ users.  Similar process can be applied to deliver other subfiles with respect to $s\neq 4$. 
		
		Divide  each subfile into two mini-files $W_{n,\mathcal{T}} = \Big(W_{n,\mathcal{T}}^{\text{s}}, W_{n,\mathcal{T}}^{\text{u}}\Big ) $, where mini-files $\{W_{n,\mathcal{T}}^{\text{s}}\}$ and $\{W_{n,\mathcal{T}}^{\text{u}}\}$ will be sent by the server and users, respectively. To reduce the transmission delay, the size of $W_{n,\mathcal{T}}^{\text{s}}$ and $W_{n,\mathcal{T}}^{\text{u}}$ need to be chosen properly such that $R_1=R_2$, i.e., the transmission rate of the server and users are equal, see   \eqref{eqSizeMini} and \eqref{eq:Ruu} ahead. 
		
		Divide all the users into two non-intersecting groups $(\mathcal{G}_1^{r},\mathcal{G}_2^{r})$, for  $r\in[35]$ which satisfies $$\mathcal{G}_1^{r}\subset[K],\mathcal{G}_2^{r}\subset[K], |\mathcal{G}_1^{r}| = 4, |\mathcal{G}_2^{r}| = 3,  \mathcal{G}_1^{r} \cap \mathcal{G}_2^{r}=\emptyset.$$ There are $\binom{7}{4}=35$ kinds of partitions in total, thus   $r\in [35]$. Note that for any user $k  \in \mathcal{G}^r_i$,  $|\mathcal{G}_i^{r}| -1$ of its requested  mini-files are already cached by the rest users in $\mathcal{G}^r_i$, for $i =1,2$.  
			
	In the delivery phase, one user in each group broadcasts an XOR symbols to all other users in its group, and the two groups 	work in parallel. 	All  mini-files $$(W_{d_k,\mathcal{T}\backslash\{k\}}^{\textnormal{u}}:\mathcal{T}\subseteq[7],k\in[7])$$ are divided into non-overlapping pico-files $W_{d_k,\mathcal{T}\backslash\{k\}}^{\textnormal{u}_1}$ and $W_{d_k,\mathcal{T}\backslash\{k\}}^{\textnormal{u}_2}$, 
	i.e., $W_{d_k,\mathcal{T}\backslash\{k\}}^{\textnormal{u}}=(W_{d_k,\mathcal{T}\backslash\{k\}}^{\textnormal{u}_1},W_{d_k,\mathcal{T}\backslash\{k\}}^{\textnormal{u}_2})$. The size of $W_{n,\mathcal{T}}^{\text{s}}$ and $W_{n,\mathcal{T}}^{\text{u}}$ need to be chosen properly  such that  the transmission rate of group $\mathcal{G}_1^{r}$ and $\mathcal{G}_2^{r}$ are equal, see   \eqref{eqSiseFrag} and \eqref{eqBalance3} ahead.

Split each $W_{d_k,\mathcal{T}\backslash\{k\}}^{\textnormal{u}_1}$ and $W_{d_k,\mathcal{T}\backslash\{k\}}^{\textnormal{u}_2}$  into $3$ and 2 equal fragments, respectively,  
	e.g.,
			\begin{IEEEeqnarray*}{rCl}
		W_{d_2,\{134\}}^{\textnormal{u}_1}&=&\Big(	W_{d_2,\{134\}}^{\textnormal{u}_1,1},W_{d_2,\{134\}}^{\textnormal{u}_1,2},W_{d_2,\{134\}}^{\textnormal{u}_1,3}\Big),\\
		W_{d_2,\{134\}}^{\textnormal{u}_2} &=& \Big(W_{d_2,\{134\}}^{\textnormal{u}_2,1} ,W_{d_2,\{134\}}^{\textnormal{u}_2,2} \Big).
\end{IEEEeqnarray*}
			   In each round, one user in each  group produces and multicasts an XOR symbol to all other users in the same group as shown in Table \ref{tab:in-group}. 
			   
	%

		
		\begin{table}[t]
			\begin{adjustbox}{width=0.7\textwidth,center}
				\begin{threeparttable}
					\caption{Parallel user delivery when $K=7$, $s=4$, $\mathcal{G}_1^{r}=4$ and $\mathcal{G}_2^{r}=3$, $r\in[35]$}
					\label{tab:in-group}
					\begin{tabular}{ll}
						\toprule
						\vspace{-7pt}\\
						\begin{tabular}{c}
						\{1,2,3,4\}\\
							$\text{user}~1\textnormal{: }$ {$W_{d_2,\{134\}}^{\textnormal{u}_1,1}$}$\oplus$ {$W_{d_3,\{124\}}^{\textnormal{u}_1,1}$}$\oplus$ {$W_{d_4,\{123\}}^{\textnormal{u}_1,1}$}
							\vspace{2pt}\\
							$\text{user}~2\textnormal{: }$ {$W_{d_1,\{234\}}^{\textnormal{u}_1,1}$}$\oplus$ {$W_{d_3,\{124\}}^{\textnormal{u}_1,2}$}$\oplus$ {$W_{d_4,\{123\}}^{\textnormal{u}_1,2}$}
							\vspace{2pt}\\
							$\text{user}~3\textnormal{: }$ {$W_{d_2,\{134\}}^{\textnormal{u}_1,2}$}$\oplus$ {$W_{d_1,\{234\}}^{\textnormal{u}_1,2}$}$\oplus$ {$W_{d_4,\{123\}}^{\textnormal{u}_1,3}$}
							\vspace{2pt}\\
							$\hspace{-0.7ex}\text{user}~4\textnormal{: }$ {$W_{d_2,\{134\}}^{\textnormal{u}_1,3}$}$\oplus$ {$W_{d_3,\{124\}}^{\textnormal{u}_1,3}$}$\oplus$ {$W_{d_1,\{234\}}^{\textnormal{u}_1,3}$}
						\end{tabular}
						& \hspace{-17pt}
						\begin{tabular}{c}
						\{5,6,7\}\\
							$\text{user}~5\textnormal{: }\underset{x\in\{1234\}}{\cup}$ {$W_{d_6,\{57x\}}^{\textnormal{u}_2,1}$}$\oplus$ {$W_{d_7,\{56x\}}^{\textnormal{u}_2,1}$}
							\vspace{6pt}\\
							$\text{user}~6\textnormal{: }\underset{x\in\{1234\}}{\cup}${$W_{d_5,\{67x\}}^{\textnormal{u}_2,1}$}$\oplus$ {$W_{d_7,\{56x\}}^{\textnormal{u}_2,2}$}
							\vspace{6pt}\\
							$\hspace{-0.7ex}\text{user}~7\textnormal{: }\underset{x\in\{1234\}}{\cup}$ {$W_{d_6,\{57x\}}^{\textnormal{u}_2,2}$}$\oplus${$W_{d_5,\{67x\}}^{\textnormal{u}_2,2}$}
						\end{tabular}\\
						\vspace{-7pt}\\
						\midrule
						\vspace{-7pt}\\
						\begin{tabular}{c}
						\{1,2,3,5\}\\
							$$$\text{user}~1\textnormal{: }${$ W_{d_2,\{135\}}^{\textnormal{u}_1,1}$}$\oplus$\highlight[fill opacity=0]{$W_{d_3,\{125\}}^{\textnormal{u}_1,1}$}$\oplus$\highlight[fill opacity=0]{$W_{d_5,\{123\}}^{\textnormal{u}_1,1}$}
							\vspace{2pt}\\
							$\text{user}~2\textnormal{: }${$ W_{d_1,\{235\}}^{\textnormal{u}_1,1}$}$\oplus$\highlight[fill opacity=0]{$W_{d_3,\{125\}}^{\textnormal{u}_1,2}$}$\oplus$\highlight[fill opacity=0]{$ W_{d_5,\{123\}}^{\textnormal{u}_1,2}$}
							\vspace{2pt}\\
							$\text{user}~3\textnormal{: }${$ W_{d_2,\{135\}}^{\textnormal{u}_1,2}$}$\oplus${$ W_{d_1,\{235\}}^{\textnormal{u}_1,2}$}$\oplus${$ W_{d_5,\{123\}}^{\textnormal{u}_1,3}$}
							\vspace{2pt}\\
							$\hspace{-0.7ex}\text{user}~5\textnormal{: }${$ W_{d_2,\{135\}}^{\textnormal{u}_1,3}$}$\oplus${$W_{d_3,\{125\}}^{\textnormal{u}_1,3}$}$\oplus${$W_{d_1,\{235\}}^{\textnormal{u}_1,3}$}
						\end{tabular}
						& \hspace{-17pt}
						\begin{tabular}{c}
						\{4,6,7\}\\
							$\text{user}~4\textnormal{: }\underset{x\in\{1235\}}{\cup}${$W_{d_6,\{47x\}}^{\textnormal{u}_2,y_{(..)}}$}$\oplus${$W_{d_7,\{46x\}}^{\textnormal{u}_2,y_{(..)}}$}
							\vspace{6pt}\\
							$\text{user}~6\textnormal{: }\underset{x\in\{1235\}}{\cup}${$W_{d_4,\{67x\}}^{\textnormal{u}_2,1}$}$\oplus${$W_{d_7,\{46x\}}^{\textnormal{u}_2,y_{(..)}}$}
							\vspace{6pt}\\
							$\hspace{-0.7ex}\text{user}~7\textnormal{: }\underset{x\in\{1235\}}{\cup}${$W_{d_6,\{47x\}}^{\textnormal{u}_2,y_{(..)}}$}$\oplus$\highlight[fill opacity=0]{$W_{d_4,\{67x\}}^{\textnormal{u}_2,2}$}
						\end{tabular}\\
						\vspace{-7pt}\\
						\midrule
						\vspace{-7pt}\\
						\begin{tabular}{c}
						\{1,2,3,6\}\\
							$\text{user}~1\textnormal{: }$\highlight[fill opacity=0]{$ W_{d_2,\{136\}}^{\textnormal{u}_1,1}$}$\oplus$\highlight[fill opacity=0]{$ W_{d_3,\{126\}}^{\textnormal{u}_1,1}$}$\oplus$\highlight[fill opacity=0]{$ W_{d_6,\{123\}}^{\textnormal{u}_1,1}$}
							\vspace{6pt}\\
							$\text{user}~2\textnormal{: }$\highlight[fill opacity=0]{$ W_{d_1,\{236\}}^{\textnormal{u}_1,1}$}$\oplus$\highlight[fill opacity=0]{$ W_{d_3,\{126\}}^{\textnormal{u}_1,2}$}$\oplus$\highlight[fill opacity=0]{$ W_{d_6,\{123\}}^{\textnormal{u}_1,2}$}
							\vspace{6pt}\\
							$\text{user}~3\textnormal{: }$\highlight[fill opacity=0]{$ W_{d_2,\{136\}}^{\textnormal{u}_1,2}$}$\oplus$\highlight[fill opacity=0]{$ W_{d_1,\{236\}}^{\textnormal{u}_1,2}$}$\oplus$\highlight[fill opacity=0]{$ W_{d_6,\{123\}}^{\textnormal{u}_1,3}$}
							\vspace{6pt}\\
							$\hspace{-0.7ex}\text{user}~6\textnormal{: }$\highlight[fill opacity=0]{$  W_{d_2,\{136\}}^{\textnormal{u}_1,3}$}$\oplus$\highlight[fill opacity=0]{$ W_{d_3,\{126\}}^{\textnormal{u}_1,3}$}$\oplus$\highlight[fill opacity=0]{$ W_{d_1,\{236\}}^{\textnormal{u}_1,3}$}
						\end{tabular}
						& \hspace{-17pt}
						\begin{tabular}{c}
						\{4,5,7\}\\
							$\text{user}~4\textnormal{: }\underset{x\in\{1236\}}{\cup}${$ W_{d_5,\{47x\}}^{\textnormal{u}_2,y_{(..)}}$}$\oplus${$ W_{d_7,\{45x\}}^{\textnormal{u}_2,y_{(..)}}$}
							\vspace{6pt}\\
							$\text{user}~5\textnormal{: } \underset{x\in\{1236\}}{\cup}$\highlight[fill opacity=0]{$ W_{d_4,\{57x\}}^{\textnormal{u}_2,1}$}$\oplus${$ W_{d_7,\{45x\}}^{\textnormal{u}_2,y_{(..)}}$}
							\vspace{6pt}\\
							$\hspace{-0.7ex}\text{user}~7\textnormal{: }\underset{x\in\{1236\}}{\cup}${$ W_{d_5,\{47x\}}^{\textnormal{u}_2,y_{(..)}}$}$\oplus${$ W_{d_4,\{57x\}}^{\textnormal{u}_2,2}$}
						\end{tabular}\\
						\vspace{-7pt}\\
						
						\midrule
						\vspace{-11pt}\\
						$\hspace{20pt}\cdots$ \hspace{60pt}$\cdots\cdots$
						& \hspace{10pt} $\cdots$ \hspace{50pt}$\cdots\cdots$
						\\
						\bottomrule
					\end{tabular}
					\begin{tablenotes}
						\normalsize
						\item There should be 35 partitions in total while the table only showed 3 partitions. 
					\end{tablenotes}
				\end{threeparttable}
			\end{adjustbox}
		\end{table}

			Note that in this example, each group only appears once in all partitions.  For larger $K$, each group could appear multiple times in different partitions, which poses more difficulties in   the data transmission such that no  fragment is repeatedly sent.
		
		 \subsection{The Generalized Decentralized Coded Caching Scheme}\label{SecDecScheme}
	In the  placement phase, each user $k$ applies the caching function to map a subset of $\frac{MF}{N}$ bits of file $W_n, n=1,...,N,$ into its  cache memory at random:
		\begin{IEEEeqnarray}{rCl}\label{eq:Subfiles}
			W_n = \Big(  W_{n,\mathcal{T}} : \mathcal{T} \subseteq [K] \Big).
		\end{IEEEeqnarray}
		The subfiles cached by user $k$ can be written as
		\begin{IEEEeqnarray}{rCl}\label{eq:Dcache}
			\Big(W_{n,\mathcal{T}}: n \in [N], k \in \mathcal{T}, \mathcal{T} \subseteq [K]\Big).
		\end{IEEEeqnarray}
		When the file size $F$ is sufficiently large, by the law of large numbers, the subfile size with high probability can be written by
		\begin{IEEEeqnarray}{rCl}\label{eq:DFileSize}
			|W_{n,\mathcal{T}}| &\approx& \Big(\frac{M}{N}\Big)^{|\mathcal{T}|}\Big(1-\frac{M}{N}\Big)^{K-|\mathcal{T}|}\nonumber\\
			&=&p^{|\mathcal{T}|}(1-p)^{K-|\mathcal{T}|} .
		\end{IEEEeqnarray}



		
		The delivery procedure can be characterized on three different levels: allocation between the server's and user's communication loads,  inner-group coding (i.e., transmission in each group) and parallel delivery among groups.
		\subsubsection{Allocation between the server's and user's communication loads}
		{ 	
		Split each subfile $W_{n,\mathcal{T}}$, for $\mathcal{T}\subseteq [K]: \mathcal{T}  \neq \emptyset$,    into two non-overlapping  mini-files 
		\[W_{n,\mathcal{T}} = \Big(W_{n,\mathcal{T}}^{\text{s}}, W_{n,\mathcal{T}}^{\text{u}}\Big ),\]
		where  
			\begin{equation}\label{eqSizeMini}
		\begin{aligned}
		|W_{n,\mathcal{T}}^\textnormal{s}| = \lambda \cdot |W_{n,\mathcal{T}}|,\\
		|W_{n,\mathcal{T}}^\textnormal{u}| = (1-\lambda) \cdot |W_{n,\mathcal{T}}|,
		\end{aligned}
		\end{equation}
		and $\lambda$ is a design parameter. 
		
		   Mini-files 	$(W_{d_k,\mathcal{T}\backslash\{k\}}^{\text{s}}:k\in[K])$ are   to be sent by the server using the original decentralized coded caching scheme \cite{Decentralized}. The corresponding transmission delay   is 
\begin{IEEEeqnarray}{rCl}\label{eq:Rss}
\lambda R_\textnormal{s}=\lambda \frac{1-M/N}{M/N}\Big(1-\big(1-\frac{M}{N}\big)^K\Big),
\end{IEEEeqnarray}
where $R_\textnormal{s}$ coincides with the definition  in \eqref{eq:Rs}.

  Mini-files $(W_{d_k,\mathcal{T}\backslash\{k\}}^{\text{u}}:k\in[K])$ are  to be sent by   users using \emph{parallel user delivery} descrbied in Section \ref{SubSec:Inter-G_B}. The corresponding transmission rate is 
  \begin{IEEEeqnarray}{rCl}\label{eq:Ruu}
	R_1    =   (1-\lambda)R_\textnormal{u}, 
\end{IEEEeqnarray}
where $R_\textnormal{u}$ is transmission bits normalized by $F$ sent in the cooperation network.

	  Since subfile $W_{d_k,\emptyset}$ is not cached by any user and  must be  sent exclusively from the server, the corresponding transmission delay for sending  $(W_{d_k,\emptyset}:k\in[K])$ is
	  \begin{IEEEeqnarray}{rCl}\label{eq:Rempy}
	  R_\emptyset= K\big(1-\frac{M}{N}\big)^K,
\end{IEEEeqnarray}
 where $R_\emptyset$  coincides with the definition in  \eqref{eq:Re}. 
	 
	By \eqref{eq:Rss}, \eqref{eq:Ruu} and \eqref{eq:Rempy}, we have
  \begin{IEEEeqnarray}{rCl}
			R_1 && = R_\emptyset +  \lambda R_\textnormal{s}, \nonumber \\
			R_2 && =   (1-\lambda)R_\textnormal{u}.
		\end{IEEEeqnarray} 
	 
	According to \eqref{eq:fDuplex},   we have $T_\textnormal{decentral} = \max\{R_1,R_2\}$. The parameter $\lambda$ is chosen such that   $T_\textnormal{decentral}$ is minimized.

		
	\begin{Remark}[{Choice of $\lambda$}] 	If $R_\textnormal{u} < R_\emptyset$, then the inequality $R_2 \leq R_1$ always holds. In this case, only when $\lambda = 0$,  $T_\textnormal{decentral}$ reaches the minimum $$T_\textnormal{decentral} = R_\emptyset.$$ 
		If $R_\textnormal{u} \geq R_\emptyset$, solving $R_1 = R_2$ yields $\lambda= \frac{R_\textnormal{u}-R_\emptyset}{R_\textnormal{s}+R_\textnormal{u}}$ and 
		$$T_\textnormal{decentral} = \frac{R_\textnormal{s}R_\textnormal{u}}{R_\textnormal{s}+R_\textnormal{u}-R_\emptyset}.$$
\end{Remark}

		\subsubsection{Inner-group  coding} \label{SubSec:GroupCoding}
		Given parameters $(s,\mathcal{G}, \textnormal{p},\gamma)$ where $s\in[K-1]$, $\mathcal{G}\subseteq[K]$, $\textnormal{p}\in\{\textnormal{u},\textnormal{u}_1,\textnormal{u}_2\}$ with indicators $\textnormal{u},\textnormal{u}_1,\textnormal{u}_2$ described later in Section \ref{SubSec:Inter-G_B}  and $\gamma\in\mathbb{Z}^+$, we present  how to successfully deliver   
\[		(W^\text{p}_{d_k,\mathcal{S}\backslash\{k\}}:\forall \mathcal{S} \subseteq [K],  |\mathcal{S}| = s, \mathcal{G}\subseteq\mathcal{S})\]
to  every user $k\in\mathcal{G}$  through user cooperation.

 Split	each $W_{d_k,\mathcal{S}\backslash\{k\}}^{\textnormal{p}}$ into $(|\mathcal{G}| -1)\gamma$  non-overlapping fragments of equal size, i.e.,
			\begin{IEEEeqnarray}{rCl}\label{eq_split}
				W_{d_k,\mathcal{S} \backslash\{k\}}^{\textnormal{p}} = \Big( W_{d_k,\mathcal{S} \backslash\{k\}}^{\textnormal{p},l}: l \in [(|\mathcal{G}| -1)\gamma] \Big),
			\end{IEEEeqnarray}  
			and  each user $k \in \mathcal{G}$ takes turn to broadcast XOR symbols
		\begin{IEEEeqnarray}{rCl} \label{eqxorSymbol}
			X_{k,\mathcal{G},s}^{{\text{p}}} \triangleq &&  \oplus_{j \in \mathcal{G}\backslash\{k\}}
			W_{d_j,\mathcal{S} \backslash\{j\}}^{{\text{p}}, l{(j,\mathcal{G},\mathcal{S})}}, 
		\end{IEEEeqnarray}
where $l{(k,  \mathcal{G},\mathcal{S})}\in  [(|\mathcal{G}| -1)\gamma] $ is a  function of {$(k,  \mathcal{G},\mathcal{S})$} which avoids redundant transmission of any fragments. 
	The XOR symbol $X_{k,\mathcal{G},s}^{{\text{p}}} $ will be received and decoded by the remaining users in $\mathcal{G}$.		 

For each group  $\mathcal{G}$, inner-group coding will recover  in total ${K-|\mathcal{G}|\choose s-|\mathcal{G}|}$ of  $W^\text{p}_{d_k,\mathcal{S}\backslash\{k\}}$, and  each XOR symbol $X_{k,\mathcal{G},s}^{{\text{p}}}$ in \eqref{eqxorSymbol} contains fragments required by  $|\mathcal{G}|-1$ users in $\mathcal{G}$. 
	

		\subsubsection{Parallel  delivery among groups}\label{SubSec:Inter-G_B}
In order to provide parallel delivery among groups, as well as  avoid redundant transmission of any content among all groups, we need to carefully design how to partition groups and how  signals are transmitted among groups.		

{ The parallel user delivery consists of $(K-1)$ rounds characterized by $s=2,\dots, K$. In each round,  mini-files 
		\[		(W^\text{u}_{d_k,\mathcal{T}\backslash\{k\}}:\forall \mathcal{T} \subseteq [K],  |\mathcal{T}| = s,k\in[K])\]
		are recovered through user cooperation.  
	Based on $K$, $s$ and $\alpha_{\max}$, i.e., the maximum number of users allowed for parallel transmission,   the delivery strategy of the users is divided into 3 cases:
		\begin{itemize}
		 
		\item Case 1: $\lceil \frac{K}{s} \rceil > \alpha_{\max}$. In this case,  $\alpha_{\max}$ users are allowed to send data simultaneously. Select $s\cdot\alpha_{\max}$ users from all users and divide them into $\alpha_{\max}$ groups of equal size $s$.  
			The total number of  such kind  partition is
			\begin{IEEEeqnarray}{rCl}\label{eqBeta2}
			\beta_1\triangleq \frac{{K \choose s}{K-s \choose s}\cdots{K-s (\alpha_{\max} -1)\choose s }}{\alpha_{\max}!} .
\end{IEEEeqnarray}
				 In each partition,  each user is selected from $ \alpha_{\max}$ groups  an individual group and  sends data in parallel.
		\item Case 2:  $\lceil \frac{K}{s} \rceil \leq \alpha_{\max}$ and $(K~\text{mod}~ s) < 2$.  In this case,   	every $s$ users form a group. Choose $\lfloor \frac{K}{s} \rfloor s$ users from all users and partition them into $\lfloor \frac{K}{s} \rfloor$  groups of equal size $s$.  
			The total number of  such kind  partition is
			\begin{IEEEeqnarray}{rCl}\label{eqBeta1}
			\beta_2\triangleq \frac{{K \choose s}{K-s \choose s}\cdots{K-s(\lfloor \frac{K}{s} \rfloor-1) \choose s }}{{\lfloor \frac{K}{s} \rfloor}!} .
\end{IEEEeqnarray}
			 In each partition,  each user is selected from $ \lfloor \frac{K}{s} \rfloor$ groups  an individual group and  sends data in parallel

			\item Case 3:  $\lceil \frac{K}{s} \rceil \leq \alpha_{\max}$ and $(K~\text{mod}~s) \geq 2$. In this case, every $s$ users form a group, resulting in $\lfloor \frac{K}{s} \rfloor$ groups consisting of $s\lfloor \frac{K}{s} \rfloor$ users. The  remaining $(K~\text{mod}~ s)$ users forms another group. The total number of  such kind  partition is
			\begin{IEEEeqnarray}{rCl}\label{eqBeta3}
			\beta_3=\beta_2.
\end{IEEEeqnarray}
	 In each partition,  each user is selected from  $ \lceil \frac{K}{s} \rceil$ groups  an individual group and  sends data in parallel

		\end{itemize}
		
		Thus the exact number of users who parallelly send signals can be written as follows:
	 
		\begin{equation} \label{eq:optimalAlphaD}
		\alpha_\textnormal{D}=\!\left\{
		\begin{aligned}
		&\alpha_{\max},   \ &\text{case 1}, \\
		&\lfloor \frac{K}{s} \rfloor,  \  &\text{case 2},\\ 
				&\lceil \frac{K}{s} \rceil,   \  &\text{case 3}.\\
		\end{aligned}
		\right.
		\end{equation}
	Note that for case $c\in\{1,2,3\}$,  each group $\mathcal{G}$ among  $[\beta_c]$  partitions  re-appears 
\begin{IEEEeqnarray}{rCl}\label{eqNG2}
N_{\mathcal{G}}\triangleq\frac{{K-s \choose s}\cdots{K-s\cdot(\alpha_\textnormal{D} -1) \choose s }}{(\alpha_\textnormal{D}-1)!}
\end{IEEEeqnarray}
 times. 
 
 Now we present our decentralized scheme for these three cases in details.

\emph{Case 1} ($\lceil \frac{K}{s} \rceil > \alpha_{\max}$):	
Consider a partition  $r\in[\beta_1]$, denoted as  
 \[\mathcal{G}^r_1,\ldots,\mathcal{G}^r_{\alpha_\textnormal{D}},\] 	where  $|\mathcal{G}^r_i| = s$ and $\mathcal{G}^r_i\cap\mathcal{G}^r_j=\emptyset$,  $\forall i,j\in [ \alpha_\textnormal{D}]$ and $i\neq j$.  
 
 
  Since each group $\mathcal{G}^r_i$  re-appears  $N_{\mathcal{G}^r_i}$ times among $[\beta_1]$ partitions, and  $(|\mathcal{G}^r_i|-1)$ users take turns to broadcast XOR symbol \eqref{eqxorSymbol} in  each group $\mathcal{G}^r_i$,
  in order  to   guarantee that each group sends unique  fragments  without repetition, we split each mini-file $W^\text{u}_{d_k,\mathcal{S}\backslash\{k\}}$ into $(|\mathcal{G}^r_i|-1) N_{\mathcal{G}^r_i}$ fragments  of equal size.   

Groups $ \mathcal{G}^r_i$, $r\in[\beta_1]$ and $i\in[\alpha_\textnormal{D}]$,  performs inner group coding (see Section \eqref{SubSec:GroupCoding}) with parameters  
$$(s,\mathcal{G}_i^r, \textnormal{u},  N_{\mathcal{G}^r_i}),$$
 for all $s$ satisfying $\lceil \frac{K}{s} \rceil > \alpha_{\max}$. For each round $r$, all groups $ \mathcal{G}^r_{1},\ldots, \mathcal{G}^r_{\alpha_\textnormal{D}}$ parallelly send XOR symbols  containing 
$|\mathcal{G}^r_i|-1$ fragments required by other users of its group. By the fact that the partitioned  groups  traverse every set $\mathcal{T}$, i.e.,
\[\mathcal{T} \subseteq  \{ \mathcal{G}^r_1\cup\ldots\cup\mathcal{G}^r_{\alpha_\textnormal{D}}\}^{\beta_1}_{r=1},\forall \mathcal{T}\subseteq[K]:|\mathcal{T}|=s,\]
 and since inner group coding  enables  each group $\mathcal{G}^r_i$ to recover 
 \[		(W^\text{u}_{d_k,\mathcal{S}\backslash\{k\}}:\forall \mathcal{S} \subseteq [K],  |\mathcal{S}| = s, \mathcal{G}^r_i\subseteq\mathcal{S},k\in[K]),\] 
we are able to recover all  required mini-files
\[		(W^\text{u}_{d_k,\mathcal{T}\backslash\{k\}}:\forall \mathcal{T} \subseteq [K],  |\mathcal{T}| = s, k\in[K]).\]
The transmission delay of case 1 at round $s$ is 			thus
\begin{IEEEeqnarray}{rCl}\label{eqRateCase1}
			R_\text{case1}^{{\text{u}}}(s) && \triangleq  \sum_{r\in[\beta_1]} \sum_{k\in \mathcal{G}^r_i}|X_{k,\mathcal{G}^r_i,s}^{{\text{u}}}| \nonumber \\
			&& = \beta_1 |\mathcal{G}^r_i| \frac{ |W^\text{u}_{d_k,\mathcal{T}\backslash\{k\}}|}{(|\mathcal{G}^r_i|-1) N_{\mathcal{G}^r_i}}   \nonumber  \\
			&&\stackrel{(\text{a})}{=} \frac{K {K-1\choose s-1}}{\alpha_\textnormal{D}(s-1)}   |W^\text{u}_{d_k,\mathcal{T}\backslash\{k\}}|  \nonumber\\
				&& = \frac{K {K-1\choose s-1}}{\alpha_{\max}(s-1)}  (1-\lambda)p^{s-1}(1-p)^{K-s+1},
\end{IEEEeqnarray}
where (a) follows by \eqref{eqNG2}.

 \emph{Case 2} ($\lceil \frac{K}{s} \rceil \leq \alpha_{\max}$ and $(K~\text{mod}~ s) < 2$):
We apply the same delivery procedure as case 1, except that  $\beta_1$ is replaced by $\beta_2$ and $\alpha_\textnormal{D}=\lfloor \frac{K}{s} \rfloor$, and   obtain transmission delay of each round $s$:
			\begin{IEEEeqnarray}{rCl}\label{eqRateCase2}
			R_\text{case2}^{{\text{u}}}(s) &&  = \frac{K {K-1\choose s-1}}{\alpha_{D}(s-1)}   |W^\text{u}_{d_k,\mathcal{T}\backslash\{k\}}|  \nonumber\\
				&& = \frac{K {K-1\choose s-1}}{\lfloor \frac{K}{s} \rfloor(s-1)  }  (1-\lambda)p^{s-1}(1-p)^{K-s+1}.
				\end{IEEEeqnarray}

 \emph{Case 3} ( $\lceil \frac{K}{s} \rceil \leq \alpha_{\max}$ and $(K~\text{mod}~s) \geq 2$):		
		Consider a partition  $r\in[\beta_3]$, denoted as  
 \[\mathcal{G}^r_1,\ldots,\mathcal{G}^r_{\alpha_\textnormal{D}},\]	where   
  $\mathcal{G}^r_i\subseteq[K]$,  $\mathcal{G}^r_i\cap\mathcal{G}^r_j=\emptyset$,   $\forall i,j\in [ {\alpha_\textnormal{D}-1}]$ and $i\neq j$ and  $\mathcal{G}^r_{\alpha_\textnormal{D}} = [K]\backslash (\mathcal{G}_1,\ldots, \mathcal{G}^r_{{\alpha_\textnormal{D}-1}})$  with 
    $$|\mathcal{G}^r_i| = s,  ~|\mathcal{G}^r_{\alpha_\textnormal{D}}| =K~\text{mod} ~s.$$    
    
	Since 	group $\mathcal{G}^r_i: i\in [ \alpha_\textnormal{D}-1]$ and $\mathcal{G}^r_{\alpha_\textnormal{D}}$ have different size, we further  split each mini-file $W_{d_k,\mathcal{T}\backslash\{k\}}^{\text{u}}$ into 2 non-overlapping fragments such that 
		\begin{IEEEeqnarray}{rCl}\label{eqSiseFrag}
			|W_{d_k,\mathcal{T}\backslash\{k\}}^{\textnormal{u}_1}|  = \lambda_2|W_{d_k,\mathcal{T}\backslash\{k\}}^\textnormal{u}|,\\
			|W_{d_k,\mathcal{T}\backslash\{k\}}^{\textnormal{u}_2}| = (1-\lambda_2)|W_{d_k,\mathcal{T}\backslash\{k\}}^\textnormal{u}|, \nonumber
		\end{IEEEeqnarray}
		where $\lambda_2   \in [0,1]$ is a designed parameter which should satisfy \eqref{eqBalance3}.


	Split each mini-files $W^{\text{u}_1}_{d_k,\mathcal{S}\backslash\{k\}}$ and $W^{\text{u}_2}_{d_k,\mathcal{S}\backslash\{k\}}$ into  fragments of equal size:
 \begin{IEEEeqnarray*}{rCl}
				W_{d_k,\mathcal{S} \backslash\{k\}}^{\textnormal{u}_1} &=& \Big( W_{d_k,\mathcal{S} \backslash\{k\}}^{{\textnormal{u}_1},l}: l \in  [(s-1)N_{\mathcal{G}^r_i}] \Big),\\
				W_{d_k,\mathcal{S} \backslash\{k\}}^{\textnormal{u}_2} &=& \Big( W_{d_k,\mathcal{S} \backslash\{k\}}^{{\textnormal{u}_2},l}:\nonumber\\
				&&\hspace{1ex} l \in  \left[\big(|\mathcal{G}^r_{\alpha_\textnormal{D}}| -1\big){s-1 \choose |\mathcal{G}^r_{\alpha_\textnormal{D}}| -1}N_{\mathcal{G}^r_{i}}\right] \Big).
			\end{IEEEeqnarray*} 
			
	Following the similar encoding operation in \eqref{eqxorSymbol},		 group $\mathcal{G}^r_{i}:i \in [\alpha_\textnormal{D}-1]$ and group $\mathcal{G}^r_{\alpha_\textnormal{D}}$ send the following XOR symbols respectively:
		\begin{IEEEeqnarray}{rCl}
			\nonumber
			&&\big(X_{k,\mathcal{G}^r_i,s}^{\textnormal{u}_1}: k \in \mathcal{G}^r_i\big)_{i=1}^{(\alpha_\textnormal{D}-1)}, \\  &&\big(X_{k,\mathcal{G}^r_{\alpha_\textnormal{D}},s}^{\textnormal{u}_2}: k \in \mathcal{G}^r_{\alpha_\textnormal{D}}\big).\nonumber
		\end{IEEEeqnarray}
		For each  $s\in \{2,\ldots,K\}$, the  transmission delay for sending
XOR symbols above by  group $\mathcal{G}^r_{i}:i \in [\alpha_\textnormal{D}-1]$ and group $\mathcal{G}^r_{\lceil \frac{K}{s}\rceil}$ can be written as		
			\begin{IEEEeqnarray}{rCl}
				R_\text{case3}^{{\text{u}_1}}(s) = \frac{\lambda_2K{K-1\choose s-1}}{(\alpha_\textnormal{D}-1)(s-1)}\cdot   |W^\text{u}_{d_k,\mathcal{T}\backslash\{k\}}|, \nonumber \\
				R_\text{case3}^{{\text{u}_2}}(s)  = \frac{(1-\lambda_2)K{K-1\choose s-1}}{(K ~\text{mod}~ s)-1}\cdot   |W^\text{u}_{d_k,\mathcal{T}\backslash\{k\}}|, \nonumber
			\end{IEEEeqnarray}
respectively. Since $\mathcal{G}_{i}:i \in [\lfloor \frac{K}{s} \rfloor]$ and group $\mathcal{G}_{\lceil \frac{K}{s}\rceil}$ can send signals in parallel, by letting 
\begin{IEEEeqnarray}{rCl}\label{eqBalance3}
R_\text{case3}^{{\text{u}_1}}(s)  = R_\text{case3}^{{\text{u}_2}}(s),
\end{IEEEeqnarray}
we eliminate the parameter $\lambda_2$ and obtain the balanced transmission delay at users for case 3:
			\begin{IEEEeqnarray}{rCl}\label{eqRateCase3}
				R_\text{case3}^\text{u}(s) &\triangleq &\frac{K{K-1 \choose s-1}}{K-1-\lfloor\frac{K}{s}\rfloor}   (1-\lambda)p^{s-1}(1-p)^{K-s+1}.  ~
			\end{IEEEeqnarray} 
			
\begin{Remark}\label{eqRemarkComputeDelay}
The condition $\lceil \frac{K}{s} \rceil > \alpha_{\max}$ in Case 1 implies that $s\leq {\lceil \frac{K}{\alpha_{\max}}\rceil-1}$.  In this regime, scheme of Case 1 is working  and the delay is given in \eqref{eqRateCase2}.   If $s\geq {\lceil \frac{K}{\alpha_{\max}}\rceil-1}$ and $(K \mod s)<2 $,  scheme in Case 2 starts to work and the delay is given in \eqref{eqRateCase2}; If $s\geq {\lceil \frac{K}{\alpha_{\max}}\rceil-1}$ and $(K \mod s)\geq 2 $, scheme in Case 3 starts to work  and the delay is given in \eqref{eqRateCase3}.
\end{Remark}
For each round  $s\in\{2,\ldots,K\}$,   all requested  mini-files can be recovered by the delivery strategies above. 
By Remark \ref{eqRemarkComputeDelay}, the achievable delay caused by users' transmission is 
		\begin{IEEEeqnarray}{rCl}\label{eq:pf_Ru}
			R_\textnormal{2} 
			 && =   (1-\lambda)   \frac{1}{\alpha_\textnormal{max}}  \sum_{s=2}^{\lceil \frac{K}{\alpha_\textnormal{max}}\rceil-1}\frac{s{K \choose s}}{s-1}p^{s-1}(1-p)^{K-s+1}   \nonumber \\ &&\hspace{6pt}+(1-\lambda)  \sum_{s= \lceil \frac{K}{\alpha_\textnormal{max}}\rceil}^{K}     \frac{K {K-1 \choose s-1}}{f(K,s)} p^{s-1}(1-p)^{K-s+1},\\
			&& = (1-\lambda)  R_\text{u},
		\end{IEEEeqnarray}
		where $R_\text{u}$ is defined in \eqref{eq:Ru} and \begin{IEEEeqnarray}{rCl} 
				f(K,s) \triangleq
				\left\{
				\begin{aligned}
					&\lfloor \frac{K}{s} \rfloor(s-1), & (K~\textnormal{mod}~s)<2,\\
					&K-1-\lfloor{K}/{s}\rfloor, &(K~\textnormal{mod}~s)\geq2.
				\end{aligned}
				\right.
			\end{IEEEeqnarray}

	For completeness,  we formally describe the
	procedures of user-server tradeoff, inner-group coding and parallel user delivery, for a network with $N$ files and  $K$ users in Algorithm 1.
	\begin{algorithm}[]  \begin{spacing}{1} 
		\caption{Delivery Phase in The Decentralized  Scheme}
		\label{Decen_Algo}
		\begin{algorithmic}
		\State	$\mathbf{d}\xleftarrow{}(d_1,\ldots,d_K)$,  $\mathcal{T}\xleftarrow{}\{\mathcal{T}\subset[K]: \mathcal{T}\neq \emptyset\}$ 
		  \State 			 $(R_\emptyset,R_\textnormal{s},R_\textnormal{u})\xleftarrow{}$ transmission delay defined in \eqref{eq:achievable_rate_decen_whole}  
		  \State $(\beta_1,\beta_2,\beta_3)\xleftarrow{}$ integers defined in (\ref{eqBeta2}-\ref{eqBeta3}), $N_{\mathcal{G}} \xleftarrow{}$ integers defined in (\ref{eqNG2})

		\Procedure{User-Server tradeoff~}{$R_\emptyset,R_\textnormal{s},R_\textnormal{u}$}
			\State	  $W_{d_k,\mathcal{T}}\xrightarrow{\text{split}}\Big(W_{d_k,\mathcal{T}}^{\text{s}}, W_{d_k,\mathcal{T}}^{\text{u}}\Big )$, $\forall k, \mathcal{T}$, with\\ \quad~~ $|W_{d_k,\mathcal{T}}^{\text{s}}|=\lambda |W_{d_k,\mathcal{T}}| $,  $|W_{d_k,\mathcal{T}}^{\text{u}}|=(1-\lambda) |W_{d_k,\mathcal{T}}|$ 
			\If{$R_\textnormal{u}\leq R_\emptyset$}
			\State $\mathcal{W}_{\mathbf{d}}\xleftarrow{}\{W_{d_k,\mathcal{T}}:\forall k, \mathcal{T}\}$
			\State  \hspace{-6pt}$\left.\begin{array}{l}
			\textnormal{Server sequentially sends}~W_{d_k,\mathcal{\emptyset}},\forall k\\
			 \textnormal{PARALLEL USER DELIVERY~}(\mathcal{W}_{\mathbf{d}})
			\end{array}\hspace{-5pt}\right\}$ parallel
			\Else
			\State $\lambda\leftarrow(R_\textnormal{u}-R_\emptyset )/(R_\textnormal{s}+R_\textnormal{u})$
						\State $\mathcal{W}^\textnormal{u}_{\mathbf{d}}\xleftarrow{}\{W^\textnormal{u}_{d_k,\mathcal{T}}:\forall k, \mathcal{T}\}$,  $\mathcal{W}^\textnormal{s}_{\mathbf{d}}\xleftarrow{}\{W^\textnormal{s}_{d_k,\mathcal{T}}:\forall k, \mathcal{T}\}$
			
			\State \hspace{-6pt}$\left.\begin{array}{l}
			\textnormal{Server  sequentially sends}~  W_{d_k,\mathcal{\emptyset}},\forall k   
			~ \textnormal{then sends}  ~ \mathcal{W}^\textnormal{s}_{\mathbf{d}} ~ \textnormal{using scheme in \cite{Decentralized}} \\
			 \textnormal{PARALLEL USER DELIVERY~}(\mathcal{W}^\textnormal{u}_{\mathbf{d}})
			\end{array}\hspace{-5pt}\right\}$ parallel

			\EndIf
			 
			
			\EndProcedure
			
			\Procedure{Parallel User Delivery~}{$\mathcal{W}^\textnormal{u}_{\mathbf{d}}$}
			\For{$s\in\{2,\ldots,K-1\}$}
			\State $s^{*} \xleftarrow{} \big(K~\text{mod}~ s\big)$
		 \If{$\lceil \frac{K}{s} \rceil > \alpha_{\max}$}
			  \For{$r\in[\beta_1]$}
			\State Partition: $\{\mathcal{G}^r_1,..,\mathcal{G}^r_{\alpha_{\max}}\}$,  $|\mathcal{G}^r_i|=s,{i\leq \alpha_{\max}}$
			\State For  $i\in [ \alpha_{\max}]$, parallelly do   INNER-GROUP   CODING ($s,\mathcal{G}^r_i, \textnormal{u},N_{\mathcal{G}^r_i}$)
			\EndFor

			\ElsIf{$s^{*}<2$ and $\lceil\frac{K}{s}\rceil\leq\alpha_{\max}$}
			   \For{$r\in[\beta_2]$}
			   \State Partition: $\{\mathcal{G}^r_1,..,\mathcal{G}^r_{\lfloor\frac{K}{s}\rfloor}\}$,  $|\mathcal{G}^r_i|=s,{i\leq \lfloor\frac{K}{s}\rfloor}$
                      
			\State For  $i\in [ \lfloor\frac{K}{s}\rfloor]$ parallelly do   INNER-GROUP  
			  CODING ($s,\mathcal{G}^r_i, \textnormal{u},N_{\mathcal{G}^r_i}$)
			\EndFor
			
						\Else~
			\State $W_{n,\mathcal{T}}^{\textnormal{u}}\!\!\!\xrightarrow{\text{split}}\!\!\big(W_{n,\mathcal{T}}^{\textnormal{u}_1},W_{n,\mathcal{T}}^{\textnormal{u}_2}\big)$,  $\frac{|W_{n,\mathcal{T}}^{\textnormal{u}_1}|}{|W_{n,\mathcal{T}}^{\textnormal{u}_2}|}\! =\! \frac{\lfloor\frac{K}{s}\rfloor(s-1)}{s^{*}-1}$
		\For{$r\in[\beta_3]$}
			\State Partition: $\{\mathcal{G}^r_1,..,\mathcal{G}^r_{\lceil\frac{K}{s}\rceil}\}$ with\\
			\qquad\qquad\qquad\qquad$|\mathcal{G}^r_i|=s,{i\leq \lfloor\frac{K}{s}\rfloor};|\mathcal{G}_{\lceil\frac{K}{s}\rceil}|=s^{*}$		
			\State \hspace{-45pt}  parallel$\left\{\begin{array}{l}
			\text{INNER-GROUP  CODING}~(s,\mathcal{G}^r_i,\textnormal{u}_1,N_{\mathcal{G}^r_{i}}),\forall i\quad\quad\quad\\
			\text{INNER-GROUP  CODING with parameters:} ~(s,\mathcal{G}^r_{\lceil\frac{K}{s}\rceil},\textnormal{u}_2,{s-1 \choose |\mathcal{G}^r_{\lceil\frac{K}{s}\rceil}| -1}N_{\mathcal{G}^r_{\lceil\frac{K}{s}\rceil}})
			\end{array}\right.$
			\EndFor
			\EndIf
			\EndFor
			\EndProcedure

				\Procedure{Inner-Group  Coding~}{$s,\mathcal{G}, \textnormal{p},\gamma$}
			\State   $W_{n,\mathcal{T}}^{\textnormal{p}}\xrightarrow{\text{split}}\big(W_{n,\mathcal{T}}^{\textnormal{p},l}:l\in[(|\mathcal{G}|-1)\gamma]\big)$  			
			\For{$  \mathcal{S}\subseteq [K]: |\mathcal{S}| = s, \mathcal{G}\subseteq\mathcal{S}  $}
			\State 
		$	X_{k,\mathcal{G},s}^{{\text{p}}} \leftarrow   \oplus_{j \in \mathcal{G}\backslash\{k\}}
			W_{d_j,\mathcal{S} \backslash\{j\}}^{{\text{p}}, l{(j,\mathcal{G},\mathcal{S})}}, \forall k\in \mathcal{G} $

			\EndFor
			\EndProcedure

		\end{algorithmic}   \end{spacing} 
	\end{algorithm}
	
	}
	

	\section{Conclusions}\label{Sec_Conclusion}
	In this paper, we considered a cache-aided broadcast network with user cooperation where users can exchange data with each other via a shared link or a flexible routing network. 
	We proposed two innovative coded caching schemes for centralized and decentralized placement respectively. Both   schemes achieve a parallel gain and a cooperation gain in terms of communication delay} by exploiting parallel transmission between the server and users and among the users themselves.  Furthermore, we showed that for in centralized  caching case, letting too many users parallelly send information could be harmful.  The information theoretic converse bounds were established and we proved that the centralized scheme achieves the optimal transmission delay within a constant multiplicative gap in all regimes, and the decentralized scheme becomes order optimal when the cache size of each user is larger than a small threshold which
	tends to zero as the number of users tends to infinity. Our work indicates that  user cooperation and coded caching both are promising techniques to reduce the transmission delay and should be jointly considered in the distributed system which suffers from data congestion problem. 

	\appendices

	\section{Proof of The Converse}\label{App_converse}

	The proof of the lower bound follows similar idea from \cite{Centralized}. Note that  due to the flexibility of  cooperation network, the connection and partitioning status between users can change during the delivery phase, we can not drive the lower bound  directly as in \cite{Centralized}.  Moreover, the parallel transmission of the server and many users results in abundant transmitting signals, making the scenario more sophisticated.  
	
	Let $T^*_1$ and $T^*_2$ denote the optimal rate sent by the server and each user. We first consider an ideal case where every user is served by a exclusive server and  user, which both store full files in the database, then  we easy to obtain $T^*\geq \frac{1}{2}(1-\frac{M}{N}).$ 
	
	Next,  consider the first $s$ users with cache contents $Z_1,...,Z_s$. Define $X_{1,0}$ to be the signal sent by the server, and $X_{{1,1}},\ldots,X_{{1,\alpha_\text{max}}}$ to be the signals sent by the $\alpha_\text{max}$ users, respectively,  where $X_{{j,i}}\in[\lfloor 2^{T^*_2F} \rfloor]$ for $j\in[s]$ and $i\in[\alpha_\text{max}]$. Assume that $W_1,\ldots,W_s$ is determined by  $X_{1,0}$, $X_{{1,1}},\ldots,X_{{1,\alpha_\text{max}}}$ and  $Z_1,\ldots,Z_s$. Also,  define $X_{2,0}$, $X_{{2,1}},\ldots,X_{{2,\alpha_\text{max}}}$ to be the signals  which enable the users to decode $W_{s+1},...,W_{2s}$. Continue the same process such that $X_{\lfloor N/s\rfloor,0}$, $X_{{\lfloor N/s\rfloor,1}},\ldots,X_{{\lfloor N/s\rfloor,\alpha_\text{max}}}$  are the signals which enable the users to decode $W_{s\lfloor N/s\rfloor-s+1},...,W_{s\lfloor N/s\rfloor}$. We then have $Z_1,\ldots,Z_s$,  $X_{1,0},\ldots,X_{\lfloor N/s\rfloor,0}$, and $$X_{{1,1}},  \ldots,X_{{1,\alpha_\text{max}}},\ldots, X_{{\lfloor N/s\rfloor,1}},\ldots,X_{{\lfloor N/s\rfloor,\alpha_\text{max}}}$$ to  determine $W_{1},\ldots,W_{s\lfloor N/s\rfloor}$. Let
	\[ 
	{\bf{X}}_{1:\alpha_\text{max}} \triangleq (X_{{1,1}},  \ldots,X_{{1,\alpha_\text{max}}},\ldots, X_{{\lfloor N/s\rfloor,1}},\ldots,X_{{\lfloor N/s\rfloor,\alpha_\text{max}}}). 
	\]
	By the definitions of $T^*_1$,  $T^*_2$ and the encoding function \eqref{eq:userencoding}, we have 
	\begin{subequations}\label{eq_entropy}
		\begin{IEEEeqnarray}{rCl}
			&&H(X_{1,0},\ldots,X_{\lfloor  {N}/{s}\rfloor,0}) \leq \lfloor  {N}/{s}\rfloor T^*_1F,\\
			&&H({\bf{X}}_{1:\alpha_\text{max}}) \leq \lfloor {N}/{s}\rfloor \alpha_\text{max} T^*_2F, \\
			&&H({\bf{X}}_{1:\alpha_\text{max}},Z_1,\ldots,Z_s) \leq  KMF.
		\end{IEEEeqnarray}
	\end{subequations}
	
	Consider then the cut separating $X_{1,0},\ldots,X_{\lfloor N/s\rfloor,0}$, ${\bf{X}}_{1:\alpha_\text{max}}$, and $Z_1,\ldots,Z_s$ from the corresponding $s$ users. By the cut-set bound and \eqref{eq_entropy}, we have
	\begin{IEEEeqnarray}{rCl}
		\lfloor \frac{N}{s}\rfloor sF &\leq&\lfloor \frac{N}{s}\rfloor T^*_1F+KMF,\\
		\lfloor \frac{N}{s}\rfloor sF   & \leq&  \lfloor \frac{N}{s}\rfloor T^*_1F+sMF+\lfloor \frac{N}{s} \rfloor \alpha_\text{max} T^*_2F.
	\end{IEEEeqnarray}
	Since we have $T^*\geq T^*_1$ and $T^*\geq \max\{T^*_1,T^*_2\}$ from the above definition, solving for $T^*$ and optimizing over all possible choices of $s$, we obtain
	\begin{subequations}\label{eq_proofCutset}
		\begin{IEEEeqnarray}{rCl}
			T^*&\geq &\max\limits_{s\in [K]}(s-\frac{KM}{\lfloor N/s\rfloor}),\\
			T^*&\geq& \max\limits_{s\in [K]}(s-\frac{sM}{\lfloor N/s\rfloor})\frac{1}{1+\alpha_\text{max}}.
		\end{IEEEeqnarray}
	\end{subequations}

	\section{Proof of Theorem \ref{Thrm_Gap}}\label{App_Gap}

	We prove that  $ T_\textnormal{central}$  is within a constant multiplicative gap of the minimum feasible delay $T^*$ for all values of $M$. To prove the result, we
	compare  them in the  following regimes.
	\begin{itemize}
		\item 
		If $ 0.6393 < t<  \lfloor  K/\alpha\rfloor-1$, from Theorem \ref{Thrm_LowerBound}, we have 
		\begin{equation}\label{eq_as}
		\begin{aligned}
		T^*&\geq(s-\frac{Ms}{\lfloor N/s\rfloor})\frac{1}{1+\alpha_\text{max}}\\
		&\overset{(a)}\geq \frac{1}{12}\cdot K\Big(1-\frac{M}{N}\Big)\frac{1}{1+t}\cdot\frac{1}{1+\alpha_\text{max}},
		\end{aligned}
		\end{equation}
		where (a) follows from   \cite[Theorem 3]{Centralized}. Then we have
		\begin{IEEEeqnarray}{rCl}
			\frac{ T_\textnormal{central}}{T^*}
			&\leq& 12\cdot\frac{(1+\alpha_\text{max})(1+t)}{1+t+\alpha t}\nonumber\\
			&=&12\cdot\frac{(1+\alpha_\text{max})}{1+\alpha t/(1+t)}\nonumber\\
			&\leq& 12\cdot\frac{(1+\alpha_\text{max})}{1+\alpha\cdot 0.6393/(1+0.6393)}\nonumber\\
			&\leq &31,
		\end{IEEEeqnarray}
		where the last inequality holds since we can choose $\alpha=\alpha_\text{max}$. 
		\item If $t> \lfloor  K/\alpha\rfloor-1  $, we have 
		\begin{IEEEeqnarray}{rCl}
			\frac{ T_\textnormal{central}}{T^*} &\leq  &\frac{K(1-\frac{M}{N})\frac{1}{1+t+\alpha (\lfloor K/\alpha \rfloor-1)}}{\frac{1}{2}(1-\frac{M}{N})}\nonumber\\
			&= &\frac{2K}{1+t+\alpha (\lfloor K/\alpha \rfloor-1)}\nonumber\\
			&\overset{(a)}\leq  &\frac{2K}{K+KM/N}\nonumber\\
			&\leq& 2,
		\end{IEEEeqnarray}
		where $(a)$ follows from that  we can  choose $\alpha=1$.

		\item If $t\leq 0.6393$, 
		setting $s=0.275N$, we have
		\begin{IEEEeqnarray}{rCl}\label{eq_lower}
			T^* &\geq& s-\frac{KM}{\lfloor N/s\rfloor}\nonumber\\
			&\overset{(a)}\geq& s-\frac{KM}{N/s-1}\nonumber\\
			&=&0.275N-t\cdot0.3793N\nonumber\\
			&\geq& 0.0325N> \frac{1}{31}\cdot N,
		\end{IEEEeqnarray}
		
		where $(a)$ holds since $\lfloor x \rfloor \geq x-1$ for any $x \geq 1$. 
		Note that for all values of $M$, the transmission delay 
		\begin{equation}\label{upper}
		 T_\textnormal{central}\leq \min \{K, N\}. 
		\end{equation}
		Combining with \eqref{eq_lower} and \eqref{upper}, we have $$\frac{{ T_\textnormal{central}}}{{T^*}} \leq 31.$$

	\end{itemize}

	\section{Proof of Corollary \ref{Coro_UpperBound_Decen}}\label{App_Bound_Decen}
	The function $R_\text{u}$ has three distinct forms for different values of $\alpha_{\max}$. Thus, we discuss $R_\text{u}$ in three regimes of $\alpha_{\max}$: $\alpha_{\max} = \lfloor\frac{K}{2} \rfloor$, $\alpha_{\max}=1$ and $1<\alpha_{\max}<\lfloor\frac{K}{2} \rfloor$ respectively. For convenience, we define $q = 1-p$.
	\subsection{$ \alpha_{\textnormal{max}} = \lfloor \frac{K}{2} \rfloor$}
	When  $\alpha_{\max} = \lfloor\frac{K}{2} \rfloor$, we have
	\begin{IEEEeqnarray}{rCl}
		R_\textnormal{u}&=&R_\textnormal{u-f} \nonumber\\
		&\triangleq& \sum_{s=2}^{K}\frac{K{K-1 \choose s-1}}{f(K,s)}p^{s-1}q^{K-s+1},
	\end{IEEEeqnarray}
where ${R}_{\textnormal{u-f}}$ denotes the user's transmission rate for a fully flexible cooperation network  with  $\alpha_{\textnormal{max}}=\lfloor \frac{K}{2} \rfloor$. In the fully flexible cooperation network,  at most $\lfloor \frac{K}{2} \rfloor$ users are allowed to transmit messages simultaneously, in which   the user transmission turns to unicast.
	Note that in each term of the summation:
	\begin{IEEEeqnarray}{rCl}
		\frac{K{K-1 \choose s-1}}{f(K,s)}&& \leq \frac{K{K-1 \choose s-1}}{K-1- \frac{K}{s}}\nonumber \\
		&& = \Big(\frac{K}{K-1}+\frac{\big(\frac{K}{K-1}\big)^2}{s-\frac{K}{K-1}} \Big) \cdot {K-1 \choose s-1} \label{eq:pf_ex_up1}\nonumber \\
		&&\leq\frac{K{K-1 \choose s-1}}{K-1} + \frac{2K{K \choose s}}{(K-1)(K-2)},
		\label{eq:pf_ex_up2}
	\end{IEEEeqnarray}
	where the last inequality  holds by $s\geq \frac{K}{K-1}+\frac{K-2}{K-1}=2$ and 
	\begin{IEEEeqnarray}{rCl}
		\frac{\big(\frac{K}{K-1}\big)^2}{s-\frac{K}{K-1}}{K-1 \choose s-1}
		&& = \frac{K^2{K-1 \choose s-1}}{(K-1)(K-2)}\cdot \frac{\frac{K-2}{K-1}}{s-\frac{K}{K-1}} \nonumber\\
		&& \leq \frac{K^2{K-1 \choose s-1}}{(K-1)(K-2)} \cdot \frac{\frac{K-2}{K-1}+\frac{K}{K-1}}{s-\frac{K}{K-1}+\frac{K}{K-1}} \nonumber \\
		&& = \frac{2K}{(K-1)(K-2)} \cdot {K \choose s}. \nonumber
	\end{IEEEeqnarray}
	Therefore, by   (\ref{eq:pf_ex_up2}), $R_\text{u-f}$ can be rewritten as
	\begin{IEEEeqnarray}{rCl}
		\hspace{0pt}R_\textnormal{u-f} && \leq \frac{K}{K-1}\sum_{s=2}^K{K-1 \choose s-1}p^{s-1}q^{K-s+1}+\nonumber\\ && 
		\hspace{10pt}
		\frac{2K}{(K-1)(K-2)}\sum_{s=2}^K{K \choose s}p^{s-1}q^{K-s+1} \nonumber \\
		&& \hspace{-10pt} \overset{i \triangleq s-1}{=} \frac{Kq}{K-1}\cdot\sum_{i=1}^{K-1}{K-1 \choose i}p^iq^{K-1-i}+ \nonumber \\ &&\hspace{10pt} \frac{2Kq/p}{(K-1)(K-2)}\cdot\sum_{s=2}^{K}{K\choose s}p^sq^{K-s} \nonumber\\
		&& = \frac{Kq}{K-1}\Big(1-q^{K-1}\Big) + \frac{2Kq/p}{(K-1)(K-2)} \nonumber \\ && \hspace{85pt}\cdot\Big(1-q^K-Kpq^{K-1}\Big). \nonumber
	\end{IEEEeqnarray}
	


	\section{Proof of Theorem \ref{Thrm_Gap_Decen}}\label{App_Gap_Decen}
	Before giving the proof of Theorem \ref{Thrm_Gap_Decen}, we first introduce the following two lemmas.
	\begin{Lemma}\label{Lem:secant}
		Given arbitrary convex function $g_1(p)$ and arbitrary concave function $g_2(p)$, if they  intersect at two points with $p_1 < p_2$, then $g_1(p) \leq g_2(p)$ for all $p \in [p_1,p_2]$.
	\end{Lemma}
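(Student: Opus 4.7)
The plan is to prove the lemma by a short secant-line argument, exploiting the fact that the two functions agree at both endpoints of the interval and so share a common chord. First I would introduce the secant line $L(p)$ joining the two intersection points $(p_1, g_1(p_1))$ and $(p_2, g_1(p_2))$, which by hypothesis coincide with $(p_1, g_2(p_1))$ and $(p_2, g_2(p_2))$ respectively. Any $p\in[p_1,p_2]$ can be written as $p = \lambda p_1 + (1-\lambda) p_2$ for some $\lambda \in [0,1]$, and then $L(p) = \lambda g_1(p_1) + (1-\lambda) g_1(p_2)$.

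Next I would invoke the standard chord characterizations of convexity and concavity. Convexity of $g_1$ on $[p_1,p_2]$ gives $g_1(p) \leq L(p)$, while concavity of $g_2$ on $[p_1,p_2]$ gives $g_2(p) \geq L(p)$. Chaining these two inequalities yields $g_1(p) \leq L(p) \leq g_2(p)$ for every $p \in [p_1,p_2]$, which is precisely the claim.

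I do not expect any real obstacle; the only points requiring a little care are (i) noting that $L$ is the same line for both functions, which uses precisely the hypothesis that the two curves intersect at both $p_1$ and $p_2$, and (ii) handling the endpoints $p\in\{p_1,p_2\}$, where equality holds by hypothesis and the chord inequality reduces to a tautology. An equally short alternative would be to set $h(p) = g_2(p) - g_1(p)$, observe that $h$ is concave as the sum of the concave function $g_2$ and the concave function $-g_1$, and note $h(p_1)=h(p_2)=0$; concavity then gives $h(p) \geq \lambda h(p_1) + (1-\lambda) h(p_2) = 0$ on the interval, which is the desired inequality. I would present the secant-line version, since it matches the name of the lemma and makes the geometric content transparent.
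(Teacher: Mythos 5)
Your argument is correct and complete: the paper actually states this lemma without any proof, and your secant-line argument (convexity of $g_1$ puts it below the common chord $L$, concavity of $g_2$ puts it above, hence $g_1(p)\leq L(p)\leq g_2(p)$ on $[p_1,p_2]$) is exactly the standard justification the authors evidently had in mind, as the lemma's very name suggests. Your alternative via the concave function $h=g_2-g_1$ vanishing at both endpoints is equally valid; either version would serve as the missing proof.
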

	
	\begin{Lemma}\label{Coro_Threshold_Decen}
		For memory size $0\leq p \leq 1$ and maximum number of allowed users $1 \leq \alpha_\textnormal{max} \leq \lfloor \frac{K}{2} \rfloor$, we have
		\[R_\textnormal{u} \geq R_\emptyset, \quad\text{for~all~} p\in[p_\textnormal{th} ,1].\]
	\end{Lemma}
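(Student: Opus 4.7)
The plan is to produce a clean closed-form lower bound on $R_\textnormal{u}$ that exceeds $R_\emptyset = K(1-p)^K$ exactly in the regime $p\geq p_\textnormal{th}$. Writing $q\triangleq 1-p$, the core idea is that every coefficient appearing in \eqref{eq:Ru} can be uniformly lower-bounded by $\frac{K\binom{K-1}{s-1}}{K-1}$, after which a binomial identity collapses the whole expression into something elementary.

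First I would check the coefficient bound term by term. For the second sum I want to establish $f(K,s)\leq K-1$: when $(K\bmod s)<2$, $f(K,s)=\lfloor K/s\rfloor(s-1)\leq (K/s)(s-1)=K-K/s\leq K-1$; when $(K\bmod s)\geq 2$, $f(K,s)=K-1-\lfloor K/s\rfloor\leq K-2$. For the first sum I rewrite $\frac{s\binom{K}{s}}{s-1}=\frac{K\binom{K-1}{s-1}}{s-1}$, so the coefficient becomes $\frac{K\binom{K-1}{s-1}}{\alpha_\textnormal{max}(s-1)}$, and it suffices to show $\alpha_\textnormal{max}(s-1)\leq K-1$ whenever $s\in\{2,\dots,\lceil K/\alpha_\textnormal{max}\rceil-1\}$. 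The standard bound $\lceil K/\alpha_\textnormal{max}\rceil\leq (K+\alpha_\textnormal{max}-1)/\alpha_\textnormal{max}$ yields $\alpha_\textnormal{max}(s-1)\leq K-\alpha_\textnormal{max}-1\leq K-2$, as required. The edge case in which $K$ is even and $\alpha_\textnormal{max}=K/2$ makes the first sum vacuous and needs no separate argument.

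Combining these bounds with the identity $\sum_{s=1}^{K}\binom{K-1}{s-1}p^{s-1}q^{K-s}=(p+q)^{K-1}=1$ yields
\[
R_\textnormal{u}\;\geq\;\frac{K}{K-1}\sum_{s=2}^{K}\binom{K-1}{s-1}p^{s-1}q^{K-s+1}\;=\;\frac{Kq}{K-1}\bigl(1-q^{K-1}\bigr).
\]
Requiring this to dominate $R_\emptyset=Kq^K$ simplifies, after dividing by $Kq$, to $1-q^{K-1}\geq (K-1)q^{K-1}$, i.e., $q^{K-1}\leq 1/K$. Since by definition $p\geq p_\textnormal{th}$ is equivalent to $q^{K-1}\leq 1/(K+1)$, and $1/(K+1)<1/K$, the inequality holds throughout $[p_\textnormal{th},1]$, and the lemma follows.

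The only real hurdle is the bookkeeping for the two sums in \eqref{eq:Ru} — verifying that no coefficient slips past the uniform lower bound $\frac{K\binom{K-1}{s-1}}{K-1}$ and tracking the vacuous-sum edge case when $\alpha_\textnormal{max}=K/2$. Once that is settled, the remainder is a one-line binomial collapse followed by elementary algebra, and the gap between $1/K$ and $1/(K+1)$ provides a comfortable margin that explains the precise choice of the threshold $p_\textnormal{th}$.
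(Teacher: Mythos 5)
Your proof is correct, but it takes a different route from the paper's. The paper first reduces to the extreme case $\alpha_{\max}=\lfloor K/2\rfloor$ (invoking the monotonicity of $R_\textnormal{u}$ in $\alpha_{\max}$ asserted in Remark~\ref{Re_monocity}, under which the first sum in \eqref{eq:Ru} essentially disappears), bounds $K/f(K,s)\geq 1$ to obtain the weaker envelope $\ubar{R}_\textnormal{u-f}=q\,(1-q^{K-1})$, and then closes the argument with a concavity/convexity secant argument (Lemma~\ref{Lem:secant}): $\ubar{R}_\textnormal{u-f}$ is concave, $R_\emptyset=Kq^K$ is convex, and they intersect exactly at $p_\textnormal{th}$ and $p=1$, so the inequality holds between those points. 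You instead bound every coefficient in both sums of \eqref{eq:Ru} directly by $\frac{K}{K-1}\binom{K-1}{s-1}$ — your verifications $\alpha_{\max}(s-1)\leq K-2$ on the first sum's range and $f(K,s)\leq K-1$ are both correct — collapse the binomial sum to $\frac{Kq}{K-1}(1-q^{K-1})$, and reduce the claim to $q^{K-1}\leq 1/K$, which follows from $q^{K-1}\leq 1/(K+1)$ with room to spare. Your argument is more self-contained (it does not lean on the unproved monotonicity remark and avoids the auxiliary secant lemma) and the extra factor $\frac{K}{K-1}$ gives you slack; the paper's approach, by contrast, makes visible that $p_\textnormal{th}$ is precisely the intersection point of its envelope $\ubar{R}_\textnormal{u-f}$ with $R_\emptyset$, i.e., that the threshold is tight for that particular lower bound, which is information your margin-based argument does not surface.
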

	\begin{proof}
		When $\alpha_{\max} =\lfloor\frac{K}{2} \rfloor$, from Equation (\ref{eq:Ru}), we have
		\begin{subequations}
			\begin{IEEEeqnarray}{rCl}
				R_\textnormal{u}|_{\alpha_{\max}=\lfloor\frac{K}{2} \rfloor}  &&= \sum_{s=2}^K \frac{K{K-1 \choose s-1}}{f(K,s)} p^{s-1}(1-p)^{K-s+1} \label{eq:LB1} \\
				&&\hspace{-9pt} \overset{x \triangleq s-1}{\geq} \frac{K}{K}\sum_{x=1}^{K-1}{K-1\choose x}p^{x}(1-p)^{K-x} \label{eq:LB2}\\
				&& = \big(1-p\big) \cdot \big( 1-(1-p)^{K-1}\big) \triangleq \ubar{R}_\textnormal{u-f},\quad\label{eq:LB3}
			\end{IEEEeqnarray}
		\end{subequations}
		where (\ref{eq:LB1}) to (\ref{eq:LB2}) utilize the fact that $\frac{K}{K-1-\lfloor\frac{K}{s}\rfloor} > \frac{K}{K-1}$. Thus,  \[R_\textnormal{u}\geq R_\textnormal{u}|_{\alpha_{\max}=\lfloor\frac{K}{2} \rfloor}\geq \ubar{R}_\textnormal{u-f},\]
		We can rewrite $R_\emptyset(p)$ as 		
		 $$R_\emptyset(p) = K(1-p)^K.$$ 
		Since $\frac{\partial^2 \ubar{R}_{\textnormal{u-f}}(p)}{\partial p^2} <0$ and $\frac{\partial^2 R_\emptyset(p)}{\partial p^2}>0$, $\ubar{R}_{\textnormal{u-f}}(p)$ is a concave function while $R_\emptyset(p)$ is a convex function, and they intersect  at $p_1 = p_\textnormal{th} = 1 - \big(\frac{1}{K+1}\big)^\frac{1}{K-1}$ and $p_2 = 1$ while $p_\textnormal{th} \leq 1$. Therefore, by Lemma \ref{Lem:secant},  for all  $p \in [p_\textnormal{th},1]$, we have
		\[R_\textnormal{u}\geq R_\textnormal{u-f} \geq \ubar{R}_\textnormal{u-f} \geq R_\emptyset.\]
	\end{proof}

	Define $T_\textnormal{decentral}^{'} \triangleq \frac{R_\textnormal{s}R_\textnormal{u}}{R_\textnormal{s}+R_\textnormal{u}-R_\emptyset}$, which can be written in another form as 
	\begin{IEEEeqnarray}{rCl}
		T_\textnormal{decentral}^{'} = R_\emptyset + \frac{(R_\textnormal{s}-R_\emptyset)(R_\textnormal{u}-R_\emptyset)}{R_\textnormal{u}+R_\textnormal{s}-R_\emptyset}. \label{eq:Prf_Thrm_Gap}
	\end{IEEEeqnarray}
	If $R_\textnormal{u} \leq R_\emptyset$, then $T_\textnormal{decentral}^{'} \leq R_\emptyset$, otherwise $T_\textnormal{decentral}^{'} \geq R_\emptyset$. 
	
	Let $q = 1-p$ for convenience. Eq. (\ref{eq:Prf_Thrm_Gap}) indicates that the value of $T_\textnormal{decentral}/T^*$ is divided into bounded and unbounded regions:
	\begin{itemize}
		\item If $p \geq p_\textnormal{th}$, then 
		\begin{IEEEeqnarray}{rCl}\label{Remark2Eq}
		\frac{T_\textnormal{decentral}}{T^*} = \frac{T_\textnormal{decentral}^{'}|_{p\geq p_\textnormal{th}}}{T^*} \leq \frac{T_\textnormal{decentral}^{'}}{T^*},
\end{IEEEeqnarray}
	 which is bounded by a constant.
		\item If $p < p_\textnormal{th}$, it is not sure whether $R_\textnormal{u}$ is greater than $R_\emptyset$ or not. Therefore, 
		\begin{IEEEeqnarray}{rCl}
			\frac{T_\textnormal{decentral}}{T^* }&& = \max\{\frac{R_\emptyset|_{p<p_\textnormal{th}}}{T^*}, \frac{T_\textnormal{decentral}^{'}|_{p<p_\textnormal{th}}}{T^*}\} \nonumber \\
			&& \leq \max\{\frac{R_\emptyset}{T^*}, \frac{T_\textnormal{decentral}^{'}}{T^*}\},
		\end{IEEEeqnarray}
which might be  unbounded except for $\alpha_\text{max}=1$.
	\end{itemize}
	Now we inspect the following different situations.

	\subsection{Bounded Region for $\alpha_\textnormal{max} = \lfloor \frac{K}{2} \rfloor$ When $p \geq p_\textnormal{th}$}
	From   (\ref{eq: Flex_UpperBound2}) and (\ref{eq:Rs}),
	\begin{IEEEeqnarray}{rCl}
		\bar{R}_\textnormal{u-f} & =& \frac{K}{K-1}\cdot \Big(q-q^K \Big) + \frac{2K}{(K-1)(K-2)} \nonumber \\
		&& \hspace{80pt} \cdot \frac{q}{p}\Big(1-q^K-Kpq^{K-1} \Big)\label{eq:pf_Flex_Upper1} \\
		& \overset{(a)}{\leq}& \frac{K}{K-1}\cdot \Big(q-q^K \Big) + \frac{2K}{(K-1)(K-2)} \nonumber \\
		&&\hspace{50pt}\cdot \frac{q}{p}\Big(1-\big(1-Kp \big)-Kpq^{K-1} \Big)
		\\
		& =& \frac{K(3K-2)}{(K-1)(K-2)}\cdot\Big(q-q^K \Big),\label{eq:pf_Flex_Upper2}\\
		R_\textnormal{s} & =& \frac{q}{p}\Big(1-q^K\Big) \overset{(b)}{\leq} \frac{q}{p}\Big(1-\big(1-Kp\big)\Big) = Kq, \label{eq:pf_srvr_Upper}
	\end{IEEEeqnarray}
	where $(a)$ and $(b)$ both follow from inequality
	\begin{IEEEeqnarray}{rCl}\label{eq:ineq}
		\big(1-p\big)^K \geq \big(1-Kp\big).
	\end{IEEEeqnarray}
	
	Then, by Remark \ref{Coro_Monotonicity} and (\ref{eq:pf_Flex_Upper2}), (\ref{eq:pf_srvr_Upper}) and (\ref{eq:Re}),
	\begin{IEEEeqnarray}{rCl}
		&&T_\textnormal{decentral}^{'}|_{\alpha_{\max}= \lfloor\frac{K}{2}\rfloor}  \nonumber \\
		& & \hspace{10pt} \leq\frac{Kq\cdot \frac{K(3K-2)}{(K-1)(K-2)}\big(q-q^K\big)}{Kq+\frac{K(3K-2)}{(K-1)(K-2)}\big(q-q^K\big)-Kq^K} \nonumber \\
		& & \hspace{10pt} = \Big(3-\frac{2}{K} \Big)\cdot q. \label{eq:up_Rdp}
	\end{IEEEeqnarray}
	
By Lemma \ref{Coro_Threshold_Decen}, $T^*\geq\frac{1}{2}q$. Combine it with (\ref{eq:up_Rdp}) yields
	\[ \frac{T_\textnormal{decentral}}{T^*}\leq \frac{T_\textnormal{decentral}^{'}|_{\alpha_{\max}}}{T^*} \leq 6 - \frac{4}{K} < 6.\]
	
	\subsection{Bounded Gap for $\alpha_\textnormal{max} =1$}
	From Lemma \ref{Coro_Threshold_Decen}, eq. (\ref{eq: Shrd_UpperBound}) and (\ref{eq:ineq}),
	\begin{IEEEeqnarray}{rCl}
		\bar{R}_\textnormal{u-s} & =& \frac{q}{p} \Big( 1-\frac{5}{2}Kpq^{K-1}-4q^K+ \frac{3(1-q^{K+1})}{(K+1)p} \Big) \nonumber\\
		& \leq &\frac{q}{p}\Big( 1-\frac{5}{2}Kpq^{K-1}-4q^K+ \frac{3(K+1)p}{(K+1)p} \Big) \nonumber \\
		& =& \frac{q}{p}\Big(4\cdot(1-q^K)-\frac{5}{2}Kpq^{K-1} \Big) \nonumber\\
		& < &4R_\textnormal{s}, \label{eq:pf_shrd_upper}\\
		T^* &\geq& \max_{s \in [K]}\Big(s - \frac{sM}{\lfloor N/ s \rfloor} \Big)\frac{1}{1+\alpha_{\max}}. \label{eq:pf_shrd_lower}
	\end{IEEEeqnarray}
	
	Recalling the results in \cite[Appendix B]{Decentralized}, we have
	$$\frac{R_\textnormal{s}}{\max_{s \in [K]}\big(s - \frac{sM}{\lfloor N/ s \rfloor} \big)} \leq 12.$$
	From this result and combining  (\ref{eq:pf_shrd_upper}) and (\ref{eq:pf_shrd_lower}) together, we have
	\begin{IEEEeqnarray}{rCl}
		\frac{R_\textnormal{s}}{T^*}  \leq 12(1+\alpha_{\max}),~
		\frac{\bar{R}_\textnormal{u-s}}{T^*}  \leq 48(1+\alpha_{\max}). \quad \label{eq:pf_semi_up}
	\end{IEEEeqnarray}
	When $p \leq p_\textnormal{th}$, by  Remark \ref{Coro_Monotonicity} and \eqref{eq:pf_semi_up}, we have 
	\begin{IEEEeqnarray}{rCl}
	\frac{T_\textnormal{decentral}}{T^*} &\leq &\frac{R_\textnormal{s}}{T^*}\leq  12(1+\alpha_{\max})\leq 24.
\end{IEEEeqnarray}
	 When $p \geq p_\textnormal{th}$,	by Remark \ref{Coro_Monotonicity}, and since $T_\textnormal{decentral}^{'}|_{\alpha_{\max}=1} \leq\min\{\bar{R}_\textnormal{u-s},R_\textnormal{s}\}$, 
	\begin{IEEEeqnarray}{rCl}
\frac{T_\textnormal{decentral}}{T^*}  &= &  \frac{T_\textnormal{decentral}^{'}|_{\alpha_{\max}=1}}{T^*} \nonumber\\
&\leq & \frac{\min\{\bar{R}_\textnormal{u-s},R_\textnormal{s}\}}{T^*}\nonumber\\
		&\leq&  \min\{12(1+\alpha_{\max}),48(1+\alpha_{\max})\} \nonumber\\
		&=&  24.
	\end{IEEEeqnarray}
	
	\subsection{Bounded Gap for $1<\alpha_\textnormal{max} < \lfloor \frac{K}{2} \rfloor$ When $p \geq p_\textnormal{th}$}\label{SubSec: Semi-Flex}
	From (\ref{Semi-Flex-UpperBound}), (\ref{eq:pf_Flex_Upper2}), (\ref{eq:pf_shrd_upper}) and (\ref{eq:pf_semi_up}), when $K \geq 3$,
	\begin{IEEEeqnarray}{rCl}
		\frac{\bar{R}_\textnormal{u}}{T^*} & \leq& \frac{1}{\alpha_{\max}} \frac{\bar{R}_\textnormal{u-s}}{T^* }+  \frac{\bar{R}_\textnormal{u-f}}{T^*} \nonumber \\
		& \leq& 48\cdot\frac{1+\alpha_{\max}}{\alpha_{\max}} + \frac{2K(3K-2)(1-q^{K-1})}{(K-1)(K-2)} \label{eq:pf_semi_up2}\\
		& \leq& 48\cdot\frac{1+\alpha_{\max}}{\alpha_{\max}} + 21,
	\end{IEEEeqnarray}
	where the second term in (\ref{eq:pf_semi_up2}) follows from the fact that $$T^* \geq \frac{1}{2}q.$$ 
	Moreover, we have
	$$R_\textnormal{s}/T^* \leq 12(1+\alpha_{\max})$$
	from \eqref{eq:pf_semi_up}. In conclusion, when $K \geq 3$ and $p \geq p_\textnormal{th}$, again by   Remark \ref{Coro_Monotonicity},
	\begin{IEEEeqnarray}{rCl}
	\frac{T_\textnormal{decentral}}{T^*}  &= &  \frac{T_\textnormal{decentral}^{'}}{T^*} \nonumber\\
		&\leq&
		\min\{\frac{\bar{R}_\textnormal{u}}{T^*}  , \frac{R_\textnormal{s}}{T^*} \} \nonumber\\
		&\leq& \min_{\alpha_{\max}}\{48\cdot\frac{1+\alpha_{\max}}{\alpha_{\max}}\! +\! 21, 12(1\!+\!\alpha_{\max})\}\nonumber\\
		&=& 77. 
	\end{IEEEeqnarray}
	
	When $K<3$, by Inequality (\ref{eq:pf_srvr_Upper}) and Remark \ref{Coro_Monotonicity},
	\[ \frac{T_\textnormal{decentral}'}{T^*} \leq \frac{R_\textnormal{s}}{T^*} \leq \frac{Kq}{\frac{1}{2}q} \leq 4 < 77.\]
	
	Hence, for all values of $K$, we have $$\frac{T_\textnormal{decentral}'}{T^*} \leq \max\{4,77\} = 77.$$

	\subsection{Unbounded Gap}
	For the unbounded region, maybe $R_\textnormal{u} > R_\emptyset$ and $T_\textnormal{decentral} =T_\textnormal{decentral}^{'}$, or maybe $T_\textnormal{decentral} = R_\emptyset$. The multiplicative gap corresponding to the first case was already calculated in previous subsections, while the gaps for the second case is derived in the following:
	
	First, according to Lemma \ref{Coro_Threshold_Decen}, $T^*$ has two different lower bounds: $T^*  \geq \frac{1}{2}q \triangleq  R_1^{*}$, and 
	\begin{IEEEeqnarray}{rCl}
		T^* && \geq \max_{s \in [K]}\Big(s-\frac{KM}{\lfloor N/s \rfloor} \Big)  \geq \max_{s \in [K]}\Big(s-\frac{KM}{N/(2s)} \Big)\triangleq R_2^{*}.
		\nonumber 
	\end{IEEEeqnarray}
	The quotients of $R_\emptyset$ divided by those lower bounds changes monotonically,
	\begin{IEEEeqnarray}{rCl}
		\frac{\partial \big(R_\emptyset/R_1^{*}\big)}{\partial p} && = \frac{\partial \big(2K(1-p)^{K-1}\big)}{\partial p} \nonumber \\ &&\leq 0,\nonumber \\
		\frac{\partial \big(R_\emptyset/R_2^{*}\big)}{\partial p} && = \frac{\partial \big(q^K/(1-2Kp)\big)}{\partial p} \nonumber \\
		&& = \frac{Kq^{K-1}\big(1+2(K-1)p\big)}{(1-2Kp)^2} \nonumber \\&& \geq 0. \nonumber
	\end{IEEEeqnarray}
	Also notice that when $p=0$, $\big(R_\emptyset/R_2^{*}\big) = 1 < R_\emptyset/R_1^{*}$; while if $p=1$, $R_\emptyset/R_2^{*}  > R_\emptyset/R_1^{*} = 1$. Therefore, the maximum of \[R_\emptyset/\max\{R_1^{*},R_2^{*}\}\] exists at the intersection between $R_1^{*}(p)$ and $R_2^{*}(p)$, where $p^{*} \triangleq \frac{1}{2K+1}: R_1^{*}(p^{*}) = R_2^{*}(p^{*})$. Therefore,
	\[\frac{R_\emptyset}{T^*} \leq \frac{R_\emptyset(p^{*})}{R_1^{*}(p^{*})} = 2K\Big(\frac{2K}{2K+1}\Big)^{K-1}.\]
	
	Next, by the definition that $R_\emptyset \leq R_s$, and eq. (\ref{eq:pf_semi_up}),
	\[R_\emptyset/T^* \leq R_\textnormal{s}/T^* = 12(1+\alpha_{\max}).\]
	
	Finally, $R_\emptyset/T^*$ is smaller than
	\[\min\{2K\Big(\frac{2K}{2K\!+\!1}\Big)^{K\!-\!1}, 12(1\!+\!\alpha_{\max}) \}.\]


\begin{thebibliography}{00}
		\bibitem{Centralized} M. A. Maddah-Ali and U. Niesen, ``Fundamental limits of caching,''  \textit{IEEE Trans. Info. Theory,} vol. 60, no. 5, pp. 2856--1867, May 2014.
		
			\bibitem{Decentralized} M. A. Maddah-Ali and U. Niesen, ``Decentralized coded caching attains order-optimal memory-rate tradeoff,''  \textit{IEEE/ACM Trans. on Networking,} vol. 23, no. 4, pp. 1029--1040, Aug. 2015.

		
		
		\bibitem{ImprovedGap2}
		Q. Yu, M. A. Maddah-Ali and A. S. Avestimehr, ``Characterizing the Rate-Memory Tradeoff in Cache Networks Within a Factor of 2,'' \textit{IEEE Trans. on Info. Theory}, vol. 65, no. 1, pp. 647--663, Jan. 2019.
		
		
		\bibitem{Wan'ITW}
				K. Wan, D. Tuninetti, and P. Piantanida, ``On the optimality of uncoded cache placement," in \emph{IEEE Information Theory Workshop, (ITW)}, 2016, pp. 161--165.
		
				
		
		
		\bibitem{Yu'IT18}
		Q. Yu, M. A. Maddah-Ali and A. S. Avestimehr, ``The exact rate-memory tradeoff for caching with uncoded prefetching,'' \textit{IEEE Trans. on Info. Theory}, vol. 64, no. 2, pp. 1281--1296, Feb. 2018.
		\bibitem{PDA}
		Q. Yan, M. Cheng, X. Tang and Q. Chen, ``On the placement delivery array design for centralized coded caching scheme,'' \textit{IEEE Trans. on Info. Theory}, vol. 63, no. 9, pp. 5821--5833, Sept. 2017.
		
	
		
		
		\bibitem{HeterogeneousCacheSizes}
		D. Zhang and N. Liu, ``Coded cache placement for heterogeneous cache sizes,'' in \textit{IEEE Information Theory Workshop (ITW)},  2018, pp. 1--5.
		
		\bibitem{cHeterogeneous'19}
		S. Wang and B. Peleato, ``Coded caching with heterogeneous user profiles,'' in \textit{IEEE International Symposium on Information Theory (ISIT)}, 2019, pp. 2619--2623.
		
		\bibitem{filesize}
J. Zhang, X. Lin and CC. Wang, ``Coded caching for files with distinct file sizes,'' in \textit{IEEE International Symposium on Information Theory (ISIT)},  2015, pp. 1686--1690.
		
		\bibitem{cHeterogeneous'17}
		A. M. Ibrahim, A. A. Zewail and A. Yener, ``Centralized coded caching with heterogeneous cache sizes,'' \textit{IEEE Wireless Communications and Networking Conference (WCNC)}, San Francisco, CA, 2017, pp. 1-6.
		
		
		
		\bibitem{HeterogeneousOptimization}
		A. M. Ibrahim, A. A. Zewail and A. Yener, ``Coded caching for heterogeneous systems: An Optimization Perspective,'' \textit{IEEE Trans. on Communications}, vol. 67, no. 8, pp. 5321--5335, Aug. 2019.
		\bibitem{DistinctCapacity}
		M. Mohammadi Amiri, Q. Yang and D. G\"und\"uz, ``Decentralized caching and coded delivery with distinct cache capacities,'' \textit{IEEE Trans. on Communications}, vol. 65, no. 11, pp. 4657--4669, Nov. 2017.
		
		
		\bibitem{AsymmetricCacheSize'19}
		D. Cao, D. Zhang, P. Chen, N. Liu, W. Kang and D. G\"und\"uz, ``Coded caching with asymmetric cache sizes and link qualities: The two-user case,'' \textit{IEEE Trans. on Communications}, vol. 67, no. 9, pp. 6112--6126, Sept. 2019.
		
			
		\bibitem{nonuniform} U. Niesen and M. A. Maddah-Ali, ``Coded caching with nonuniform demands,''  \textit{IEEE Trans. Info. Theory,} vol. 63, no. 2, pp. 1146--1158, Feb. 2017.
		
		\bibitem{Randomdemands}
		M. Ji, A. M. Tulino, J. Llorca and G. Caire, ``Order-optimal rate of caching and coded multicasting with random demands,''  \textit{IEEE Trans. on Info. Theory}, vol. 63, no. 6, pp. 3923--3949, Jun. 2017.
		
		\bibitem{ArbitraryDistributions}
		J. Zhang, X. Lin and X. Wang, ``Coded caching under arbitrary popularity distributions,''  \textit{IEEE Trans. on Info. Theory}, vol. 64, no. 1, pp. 349--366, Jan. 2018.
		
		\bibitem{Onlinecaching}
		R. Pedarsani, M. A. Maddah-Ali and U. Niesen, ``Online coded caching,''  \textit{IEEE/ACM Trans. on Networking}, vol. 24, no. 2, pp. 836--845, April 2016.
		
		\bibitem{Daniel'20}
		A. M. Daniel and W. Yu, ``Optimization of heterogeneous coded caching,"  \textit{IEEE Trans. on Info. Theory},  vol. 66, no. 3, pp. 1893--1919, March 2020.
		
		\bibitem{multi-server} S. P. Shariatpanahi, S. A. Motahari, and B. H. Khalaj, ``Multi-server coded caching,'' \textit{ IEEE Trans. on Info. Theory,} vol. 62, no. 12, pp. 7253--7271, Dec 2016.

\bibitem{Zhang} 
J. Zhang and P. Elia, ``Fundamental limits of cache-aided wireless BC: Interplay of coded-caching and CSIT feedback,"  \textit{IEEE Trans. Info. Theory,} vol. 63, no. 5, pp. 3142--3160, May 2017.

\bibitem{b11} S. S. Bidokhti, M. Wigger, and R. Timo, ``Noisy broadcast networks with receiver caching,`` \textit{IEEE Trans. Info. Theory,} vol. 64, no. 11, pp. 6996--7016, Nov 2018.

\bibitem{Tandon1}	A. Sengupta, R. Tandon and O. Simeone, ``Cache aided wireless networks: Tradeoffs between storage and latency," in \emph{2016 Annual Conference on Information Science and Systems (CISS)}, 2016, pp. 320--325.
	\bibitem{Tandon2}	
  R. Tandon and O. Simeone, ``Cloud-aided wireless networks with edge caching: Fundamental latency trade-offs in fog radio access networks," in  \textit{IEEE International Symposium on Information Theory (ISIT)}, 2016, pp. 2029–-2033.

\bibitem{hierarchical}
		N. Karamchandani, U. Niesen, M. A. Maddah-Ali and S. N. Diggavi, ``Hierarchical coded caching,''  \textit{IEEE Trans. on Info. Theory}, vol. 62, no. 6, pp. 3212--3229, Jun. 2016.
		
		
		\bibitem{hierarchical'19}
		K. Wang, Y. Wu, J. Chen and H. Yin, ``Reduce transmission delay for caching-aided two-layer networks,'' in  \textit{IEEE International Symposium on Information Theory (ISIT)},  2019, pp. 2019--2023.
		
		\bibitem{NetworkwithRelay}
		K. Wan, M. Ji, P. Piantanida and D. Tuninetti, ``Caching in combination networks: novel multicast message generation and delivery by leveraging the network topology,'' in \textit{IEEE International Conference on Communications (ICC)}, 2018, pp. 1--6.


\bibitem{intermanage'17}
N. Naderializadeh, M. A. Maddah-Ali and A. S. Avestimehr, ``Fundamental limits of cache-aided interference management," \textit{IEEE Trans. Info. Theory}, vol. 63, no. 5, pp. 3092--3107, May 2017.



\bibitem{StoreandLate'17}
		F. Xu, M. Tao and K. Liu, ``Fundamental tradeoff between storage and latency in cache-aided wireless interference Networks,''  \textit{IEEE Trans. Info. Theory}, vol. 63, no. 11, pp. 7464--7491, Nov. 2017.



\bibitem{randomdemand}
M. Ji, A. M. Tulino, J. Llorca and G. Caire, ``Order-optimal rate of caching and coded multicasting with random demands,`` \textit{IEEE Trans. Info. Theory,} vol. 63, no. 6, pp. 3923--3949, Apr 2017.



\bibitem{combinationnetwork}
M. Ji, A. M. Tulino, J. Llorca, and G. Caire, ``Caching in combination networks,'' in \textit{49th Asilomar Conf. on Sig., Sys. and Comp.,}, Nov 2019.
		
		
		
%
%
%
		
				
		
		
		
		
		
		
		
		
		
		\bibitem{Ravindrakumar'ISIT}
		V. Ravindrakumar, P. Panda, N. Karamchandani, and V. Prabhakaran, ``Fundamental limits of secretive coded caching," in  \textit{IEEE International Symposium on Information Theory (ISIT)}, 2016, pp. 425--429.
		
		\bibitem{Tang'IT}
		L. Tang and A. Ramamoorthy, ``Coded caching schemes with reduced subpacketization from linear block codes," \textit{IEEE Trans. on Info. Theory,}  vol. 64, no. 4, pp. 3099--3120, April 2018.
		\bibitem{Cheng'IT21}
		M. Cheng, J. Li, X. Tang and R. Wei, ``Linear coded caching scheme for centralized networks," \textit{IEEE Trans. on Info. Theory,} vol. 67, no. 3, pp. 1732--1742, March 2021.
		
		\bibitem{Wan'IT21}
		K. Wan and G. Caire, ``On coded caching with private demands," \textit{IEEE Trans. on Info. Theory,}  vol. 67, no. 1, pp. 358--372, Jan. 2021.
		\bibitem{Hassanzadeh'IT21}
		P. Hassanzadeh, A. M. Tulino, J. Llorca and E. Erkip, ``Rate-memory trade-off for caching and delivery of correlated sources," \textit{IEEE Trans. on Info. Theory,} vol. 66, no. 4, pp. 2219--2251, April 2020.
		
		
		 
		
		
		
		
		
		
		
		
		
		
		
		\bibitem{D2D} M. Ji, G. Caire, and A. F. Molisch, ``Fundamental limits of caching in wireless D2D networks,'' \textit{IEEE Trans. Info. Theory,} vol. 62, no. 2, pp. 849--869, Feb. 2016.
		
		
		\bibitem{ParticialD2D}
		A. Tebbi and C. W. Sung, ``Coded caching in partially cooperative D2D communication networks,'' in  \textit{9th International Congress on Ultra Modern Telecommunications and Control Systems and Workshops (ICUMT)}, Munich, Germany, Nov. 2017, pp. 148--153.
		
		\bibitem{PDAforD2D}
		J. Wang, M. Cheng, Q. Yan and X. Tang, ``Placement delivery array design for coded caching scheme in D2D Networks,''  \textit{IEEE Trans. on Communications}, vol. 67, no. 5, pp. 3388--3395, May 2019.
		
		
		
		\bibitem{Malak'18}
		D. Malak, M. Al-Shalash and J. G. Andrews, ``Spatially correlated content caching for device-to-device communications," \emph{IEEE Trans. Wirel. Commun.}, vol. 17, no. 1, pp. 56--70, Jan. 2018.
		
		\bibitem{D2DArbitraryCacheSize}
		A. M. Ibrahim, A. A. Zewail, A. Yener, ``Device-to-Device coded caching with distinct cache sizes,'', \href{https://arxiv.org/abs/1903.08142}{arXiv:1903.08142}, 2019.
		
		
	
 
		
		
		\bibitem{Pedersen'19}
		J. Pedersen, A. Graelli Amat, I. Andriyanova and F. Br\"annstr\"om, ``Optimizing MDS coded caching in wireless networks with device-to-device communication," \emph{IEEE Trans. Wirel. Commun.}, vol. 18, no. 1, pp. 286--295, Jan. 2019.
		
		
		\bibitem{Fog'Overiew}
		M. Chiang and T. Zhang, ``Fog and IoT: An overview of research opportunities,'' \emph{IEEE Internet of Things Journal}, vol. 3, no. 6, pp. 854--864, Dec. 2016.
		
		
		
	\end{thebibliography}
\end{document}